\begin{document}

\title{Characterization of permutation gates in the third level of the Clifford hierarchy}

\title[Characterization of permutation gates in $\cC_3$]{Characterization of permutation gates in the third level of the Clifford hierarchy}
\author{Zhiyang He, Luke Robitaille, and Xinyu Tan}
\address{Department of Mathematics, Massachusetts Institute of Technology, Cambridge, MA 02139, USA}
\email{{szhe,lrobitai,norahtan}@mit.edu}

\begin{abstract}
The Clifford hierarchy is a fundamental structure in quantum computation whose mathematical properties are not fully understood.
In this work, we characterize permutation gates---unitaries which permute the $2^n$ basis states---in the third level of the hierarchy. 
We prove that any permutation gate in the third level
must be a product of
Toffoli gates in what we define as \emph{staircase form}, up to left and right multiplications by Clifford permutations. 
We then present necessary and sufficient conditions for a staircase form permutation gate to be in the third level of the Clifford hierarchy.
As a corollary, we construct a family of non-semi-Clifford permutation gates $\{U_k\}_{k\geq 3}$ in staircase form such that each $U_k$ is in the third level but its inverse is \emph{not} in the $k$-th level.
\end{abstract}

\maketitle

\tableofcontents

\section{Introduction}

The Clifford hierarchy is a ubiquitous structure in quantum computation which classifies unitary operations based on their conjugate actions on the Pauli group.
Precisely, the first level $\cC_1$ of the hierarchy $\ch$ is the Pauli group $\cP$, and its subsequent levels are defined recursively: $\cC_k$ is the set of all unitaries $U$ such that $UPU^{\dagger}\in \cC_{k-1}$ for all Pauli operators $P$.
Within $\ch$, gates in the third level are of unique importance to the study of fault-tolerant quantum computation (FTQC)~\cite{shor1995scheme,shor1996fault-tolerant,gottesman1997stabilizer2}, as established by several foundational works. 
The Gottesman-Knill theorem~\cite{gottesman1998knill} states that any circuits made of Pauli and Clifford gates, which are the first two levels of $\ch$, can be efficiently simulated by a classical computer. 
In contrast, adding any non-Clifford gate to the Clifford group forms a universal gate set.
Among the non-Clifford unitaries, gates in $\ch$ can be implemented fault-tolerantly by gate teleportation~\cite{gottesman99teleportation}, where higher-level gates are performed using resource states and lower-level gates.
From this perspective, the non-Clifford gates that are in $\cC_3$ are the ``easiest-to-implement'' non-Clifford gates.
Consequently, $\cC_3$ gates have been at the center of study for FTQC, with decades of extensive research studying their fault-tolerant implementations~\cite{shor1996fault-tolerant,bravyi2005universal2}, synthesis into unitaries~\cite{dawson2005solovay}, algorithmic resource costs~\cite{dalzell2023quantum} and more.

Despite its importance, the mathematical structure of the Clifford hierarchy is not fully understood.
Notably, $\cC_k$ no longer forms a group for any $k\geq 3$, as it is neither closed under multiplication nor closed under inverse.
Ample work has been done to elucidate structures within the hierarchy, which we briefly discuss in \Cref{sec:prior-works}.

In this paper, we study the permutation gates, which are $n$-qubit unitaries that permute  the $2^n$ computational basis states, in $\cC_3$. We denote these gates by $\cC_3^\sym$, and remark that they are both interesting and important for a few reasons. 
First of all, permutation gates correspond to all reversible classical computations on $n$ bits. Gates in $\cC_3$, on the other hand, can be implemented fault-tolerantly using resource states and Clifford gates. 
$\cC_3^\sym$ therefore captures classical gates and computational subroutines which are relatively low-cost to implement for FTQC. 
The canonical example is the Toffoli gate $\tof$, which is classically universal and ubiquitous in quantum circuits.

Unfortunately, products of Toffoli gates (which capture all permutations) are notoriously unruly. For instance, while a single Toffoli gate is in $\cC_3$, a product of as few as two Toffoli gates can leave the hierarchy.\footnote{Specifically, $\tof_{1,2,3}\tof_{1,3,2}$ is not in $\ch$; see equation (E.2) of~\cite{anderson}.} Prior work by Anderson~\cite{anderson} conjectured that $\cC_3^\sym$ are precisely products of pairwise commuting Toffoli gates (up to multiplying on both sides by Clifford permutations), and that $\cC_k^\sym$ is closed under inverse for all $k$. We disprove both of these conjectures in this work.

Furthermore, permutation gates are important components for gates in $\ch$. Beigi and Shor \cite{beigi-shor} proved that all gates in $\cC_3$ are \emph{generalized semi-Clifford}, which means they can be written as $\phi_1\pi d\phi_2$ for Clifford gates $\phi_1,\phi_2$, a diagonal gate $d$, and a permutation gate $\pi$.
The conjecture that all gates in $\ch$ are generalized semi-Clifford remains open~\cite{zeng2008semi-clifford}. 
In~\cite{cui2017diagonal}, Cui, Gottesman, and Krishna fully characterized all the diagonal gates in $\ch$. 
Therefore, characterizing the permutation gates is a crucial step towards characterizing all gates in $\cC_3$ and potentially $\ch$.

In this work, we present a full characterization of $\cC_3^\sym$ (\Cref{rst:staircase-ext}, \Cref{rst:decending-mult}), elucidating an essential substructure of $\cC_3$. 
We then present a family of gates in $\cC_3^\sym$, which disproves Anderson's conjectures and challenges prior understanding of $\cC_3$ (\Cref{rst:Uk-examples}).
Finally, we derive lower bounds on the number of qubits a gate in $\cC_3^\sym$ must be supported on given the ``complexity'' of the function it implements, showing that our construction is optimal (\Cref{rst:deg-k-lowerbound}).

\subsection{Main results and techniques}

To characterize the gates in $\cC_3^\sym$, we study structured products of Toffoli gates.
We define a product of distinct Toffoli gates to be in \emph{staircase form} if each gate $\tof_{i,j,k}$ in the product has $i,j < k$ and the target qubit indices are nondecreasing in the order that the gates are applied. 
For example, the gate in~\Cref{fig:seven_perm_ext} is in staircase form.

\begin{figure}[!ht]
    \centering
    \begin{equation*}
        \begin{quantikz}[slice style=blue] 
        \lstick{$a_1$}&\ctrl{2}&\ctrl{4}&&&&\ctrl{6}&\rstick{$a_1$}\\
        \lstick{$a_2$}&\control{}&&\ctrl{4}&&\ctrl{5}&&\rstick{$a_2$}\\
        \lstick{$a_3$}&\targ{}&&&\ctrl{4}&&&\rstick{$a_3+a_1a_2$}\\
        \lstick{$a_4$}&&\control{}&\control{}&\control{}&&&\rstick{$a_4$}\\
        \lstick{$a_5$}&&\targ{}&&&\control{}&&\rstick{$a_5+a_1a_4$}\\
        \lstick{$a_6$}&&&\targ{}&&&\control{}&\rstick{$a_6+a_2a_4$}\\
        \lstick{$a_7$}&&&&\targ{}&\targ{}&\targ{}&\rstick{$a_7+a_1a_6+a_2a_5+a_3a_4+a_1a_2a_4$}
        \end{quantikz}
    \end{equation*}
    \caption{A permutation gate in staircase form, consisting of $6$ Toffoli gates.}
    \label{fig:seven_perm_ext}
\end{figure}
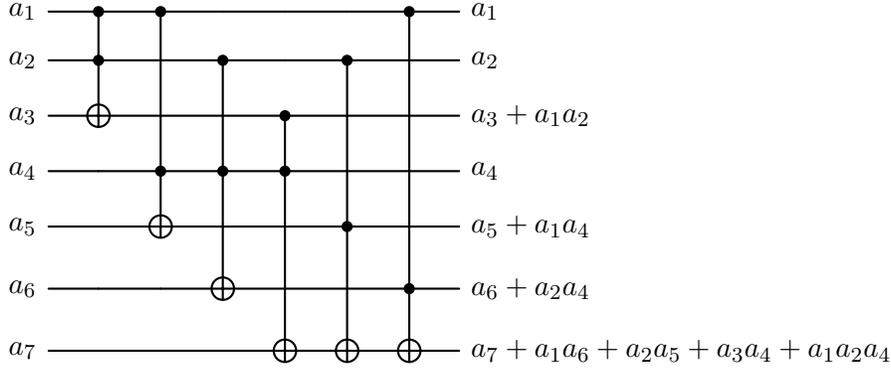

Our first main result, which was presented in an earlier version of this paper~\cite{he2024permutation}, states that every permutation in $\cC_3^\sym$ can be written in staircase form.
\begin{result}[\Cref{thm:c3-staircase}]\label{rst:staircase-ext}
    If $\pi\in \cC_3$ is a permutation gate, then there exist Clifford permutations $\phi_1$, $\phi_2$ and a product $\mu$ of Toffoli gates in \emph{staircase form} such that $\mu\in \cC_3$ and $\pi = \phi_1\mu\phi_2$. 
\end{result}
We remark that there are permutations in staircase form which are not in $\cC_3$. 
To derive a full characterization, we consider bilinear products defined over vectors of $\mathbb{F}_2^n$, which we denote by juxtaposition.
For a commutative and associative bilinear product, we call it a \textit{descending multiplication} if for standard basis vectors $\{e_i\}_{i\in [n]}$, it holds that $e_ie_i = 0$, and $e_ie_j$ (for $i < j$) is in the span of $\{e_k: k> j\}$ (see \Cref{def:descending-mult}).
Our second main result identifies a one-to-one correspondence between gates in $\cC_3^\sym$ and descending multiplications.

\begin{result}[\Cref{thm:perm-is-mult}]\label{rst:decending-mult}
    Every staircase form permutation gate in $\cC_3$ induces a descending multiplication over $\mathbb{F}_2^n$. 
    In correspondence, every descending multiplication over $\mathbb{F}_2^n$ induces a staircase form permutation gate in $\cC_3$.
\end{result}

Using this characterization, we construct a novel family of permutation gates in $\cC_3$.
For each integer $k \geq 3$, we take $n=2^k-1$ and label
each standard basis vector of $\mathbb{F}_2^n$ by $e_S$ for an nonempty subset $S \subseteq [k]$. 
We define a bilinear product operation by setting $e_Se_T = e_{S \cup T}$ if $S \cap T = \varnothing$, and $e_Se_T = 0$ if $S \cap T \neq \varnothing$, and extending linearly.
This gives a descending multiplication, which induces a staircase form $\cC_3$ permutation gate $U_k$.

\begin{result}[\Cref{thm:U_k_non_semi_cliff}]\label{rst:Uk-examples}
    For all $k\geq 3$, we have $U_k\in \cC_3$ but $U_k^{\dagger} \notin \cC_k$. 
\end{result}

We note that the permutation in \Cref{fig:seven_perm_ext} is precisely $U_3$. 
This construction brings new understanding to the study of $\cC_3$ gates in a few ways. First of all, $\{U_k\}_{k\geq 3}$ are the first examples of permutation gates in $\cC_3$ whose inverses leave $\cC_3$ (in fact, leave any fixed level of $\ch$).
As a result, they disprove both conjectures of Anderson~\cite{anderson}. 
Previously, Gottesman and Mochon presented a gate in $\cC_3$ (which is not a permutation; see \Cref{lem:gottesman-mochon} of main text) whose inverse is not in $\cC_3$. 
Our $U_3$ is actually equivalent to their example up to conjugation by a Clifford gate.

The permutation implemented by $U_k$ is also interesting. 
For a permutation $\pi$ on $n$ bits, we represent it by $n$ Boolean polynomials in the $n$ input variables, one polynomial for each output bit.
For example, $\cnot_{1,2}$ maps $(a_1, a_2)\mapsto (a_1, a_2+a_1)$ and $\tof_{1,2,3}$ maps $(a_1,a_2,a_3)\mapsto (a_1,a_2,a_3+a_1a_2)$.
For $U_k$, if we consider qubits $1,2,4,8, \dots, 2^{k-1}$ as ``controls'', and set the other $2^k-1-k$ ``ancilla'' input variables to be zero, then the $2^k-1$ output polynomials representing $U_k$ are precisely all the non-constant monomials of $a_1, a_2, a_4, a_8, \dots, a_{2^{k-1}}$.
In other words, arbitrarily powerful classical computation can be implemented using one $\cC_3$ gate followed by $\cnot$ gates, at the expense of a large number of ancilla qubits.
This extra ancilla cost is, in fact, optimal: using the characterization with descending multiplications, we show the following lower bound.
\begin{result}[\Cref{thm:c3-degree-qubit}]\label{rst:deg-k-lowerbound}
    Any permutation gate in $\cC_3$ with an output bit represented by a polynomial of degree at least $k$ must be supported on at least $2^k-1$ qubits.
\end{result}

\subsection{Future directions}

Combining these four results, our characterization discovers and delineates a novel design space within $\cC_3^\sym$. 
By constructing descending multiplications, which are easier to reason with than Toffoli circuits, we can derive permutation gates in $\cC_3$ which encode different classical computations. 
These computations can then be implemented in FTQC via gate teleportation, where the majority of the cost is offloaded to resource state preparation. 
It would be interesting to explore whether the cost of important FTQC primitives, such as arithmetic, can be reduced with this approach. 

In addition to gate design and teleportation, it would be interesting to explore whether our results can be generalized to higher levels of $\ch$.
Evidently, characterizing more or all of the permutation gates in $\ch$ is consequential to our understanding of $\ch$.
Our results mark a concrete step in this direction.

Another important direction of research is to understand the product of permutation and diagonal gates. 
Significant progress was made in~\cite{anderson}, where Anderson showed several results which, in the case of $\cC_3$, imply that if $\pi d\in\cC_3$, then $\pi\in \cC_3$ and $d\in \ch$. 
Since every gate in $\cC_3$ is generalized semi-Clifford, given our characterization of permutation gates in $\cC_3$ and the characterization of diagonal gates in $\ch$ by~\cite{cui2017diagonal}, can we precisely characterize all of $\cC_3$? 
Such a result would have significant implications for our study of FTQC, given the unique importance of $\cC_3$ gates. 

Finally, our \Cref{rst:deg-k-lowerbound} raises an interesting complexity theory question. Specifically, while $\{U_k\}_{k\geq 3}$ demonstrate that polynomials of arbitrarily high degree can be encoded into a single $\cC_3$ gate, such encodings necessarily incur an ancilla cost exponential in $k$.
In contrast, implementing a degree $k$ monomial in $\cC_{k+1}$ takes only one $k$-controlled NOT gate, supported on $k+1$ qubits. 
How does this tradeoff between polynomial degree and ancilla cost transform through the different levels of $\ch$? 
What implications does this tradeoff have on the fundamental cost of gate teleportation, which is ubiquitous in FTQC?

\subsection{Prior works}\label{sec:prior-works}

Since its introduction by Gottesman and Chuang in 1999 \cite{gottesman99teleportation}, the Clifford hierarchy has been studied by many prior works. 
Early works defined and studied the (generalized) semi-Clifford gates~\cite{zhou2000methodology,dehaene2003clifford,gross2007lu,zeng2008semi-clifford,beigi-shor}.
Subsequent works have elucidated various structures within the qubit Clifford hierarchy~\cite{bengtsson2014order,pllaha2020weyl,rengaswamy2019unifying,anderson,anderson2024controlled,anderson2025affine} and the qudit Clifford hierarchy~\cite{cui2017diagonal,de2021efficient,de2025clifford}.
As the original motivation comes from FTQC, prior works have studied improved gate teleportation protocols for semi-Clifford gates~\cite{zhou2000methodology,zeng2008semi-clifford,de2021efficient}; efficient resource state preparation methods, notably magic state distillation~\cite{bravyi2005universal2,bravyi2012magic-state,campbell2012magic,krishna2019towards,wills2024constant}, for various non-Clifford gates;
and constructions and limitations of quantum codes which admit transversal implementation of gates in different levels of the hierarchy~\cite{bravyi2013classification,anderson2014classification,jochym2018disjointness, hu2021climbingdiagonalcliffordhierarchy,he2025quantum}.

\subsection{Organization}
The content of the paper is divided into sections as follows. 
\Cref{sec-defs} gives background results and lemmas for later use. 
In particular, \Cref{sec-poly} discusses representation of permutation gates as polynomials over $\mathbb{F}_2$. 
In~\Cref{sec-c3-perm}, we present the definition for staircase form, in which all permutations in $\cC_3$ can be written (up to Clifford permutations).
In~\Cref{sec-descending-mult}, we define descending multiplications and prove their one-to-one correspondence with staircase form permutations in \(\cC_3\). 
In~\Cref{sec:family_gates}, we construct our family of non-semi-Clifford gates $\{U_k\}_{k\geq 3}$ where each permutation $U_k$ is in staircase form.
We prove in~\Cref{thm:U_k_non_semi_cliff} that each $U_k\in\cC_3$ but $U_k^\dagger \notin \cC_k$. 
In particular, the smallest example in this family $U_3$ is a $7$-qubit permutation, and it is conjugate to the Gottesman--Mochon example by a Clifford operator. 

In~\Cref{sec-seven-is-best}, we show that $7$ is the smallest number of qubits for which there exists a non-semi-Clifford permutation in $\cC_3$. 
In \Cref{sec-sc-perm}, we classify semi-Clifford permutations and where they appear in the Clifford hierarchy.

\begin{remark}
This work supersedes an earlier version \cite{he2024permutation} by the same authors. 
The prior work established only \Cref{rst:staircase-ext}, which gives a necessary but not sufficient condition for a gate to be in $\cC_3^\sym$. 
By contrast, the present paper provides substantially stronger results.
In particular, we strengthen the condition to be both necessary and sufficient (\Cref{rst:decending-mult}), and as an application, we construct an infinite family of $\cC_3^\sym$ gates whose inverses are not in particular levels of the Clifford hierarchy (\Cref{rst:Uk-examples}).
\end{remark}

\section{Preliminaries}\label{sec-defs}

The single-qubit Pauli gates $I_2$, $X$, $Y$, and $Z$ are given by
\begin{equation*}
    I_2 = \begin{bmatrix} 1&0 \\ 0& 1 \end{bmatrix}, \quad
    X = 
    \begin{bmatrix} 0&1 \\ 1& 0 \end{bmatrix}, \quad 
    Y = \begin{bmatrix} 0& -i \\ i& 0 \end{bmatrix}, \quad 
    Z = \begin{bmatrix} 1&0 \\ 0& -1 \end{bmatrix}. 
\end{equation*}
The \emph{Pauli group} on $n$ qubits, 
denoted as $\cP_n$, 
is the collection of all gates of the form $cP_1 \otimes P_2 \otimes \dots \otimes P_n$ for $c \in \{\pm 1, \pm i\}$ and single-qubit gates $P_1, \ldots, P_n \in \{I_2, X, Y, Z\}$. 
In particular, we denote the set of all the $n$-qubit Pauli $X$ operators as $\cX = \{I_2, X\}^{\otimes n}$.

We frequently use the \emph{Hadamard}, \emph{controlled} NOT, and \emph{Toffoli} gates throughout the paper: 
\begin{equation*}
    H = \frac{1}{\sqrt{2}} \begin{bmatrix} 1&1 \\ 1& -1 \end{bmatrix}, \  
    \cnot_{1,2} = \begin{bmatrix}
        I_2 & \\
        & X
    \end{bmatrix}, \
    \tof_{1,2,3} = \begin{bmatrix}
        I_6 & \\
        & X
    \end{bmatrix}. 
\end{equation*}

We can view the action of CNOT as $|a_1 \ra \otimes |a_2 \ra \mapsto |a_1 \ra \otimes |a_1+a_2 \ra$ where the first qubit is the \emph{control} and the second qubits is the \emph{target}.  
Similarly, we can view the Toffoli gate as $|a_1 \ra \otimes |a_2 \ra \otimes |a_3 \ra \mapsto |a_1 \ra \otimes |a_2 \ra \otimes |a_3+a_1a_2 \ra$ where the first two qubits are controls and the third is the target.
We use subscripts to denote the qubits that a gate acts upon. 
For example, $Y_4$ is a Pauli $Y$ gate acting on the fourth qubit, and $\cnot_{3,1}$ is a CNOT gate with the third qubit as control and the first qubit as target.

The \emph{Clifford group} on $n$ qubits is the normalizer of the Pauli group in the unitary group. 
It can be generated by the Pauli group, the Hadamard and phase gate on each qubit, the CNOT gate on each ordered pair of distinct qubits, and $\{cI : |c|=1 \}$. 
Henceforth we refer to elements of $\{cI: |c|=1\}$ as \emph{phases} (not to be confused with the phase gate).

\subsection{Gates in the Clifford hierarchy}\label{sec:prelim-CH}
The Clifford hierarchy is defined recursively, with the first level being the Pauli gates.
\begin{definition}[The Clifford hierarchy]
Let $n$ be the number of qubits. 
Let $\cC_1 = \cP_n$. 
For $k \geq 2$, inductively define $\cC_k$ to be the set of all unitaries $U$ such that $UPU^{\dagger} \in \cC_{k-1}$ for all $P \in \cP_n$. 
Note that $\cC_2$ is the Clifford group. 
The set $\mathcal{CH} :=\cC_1 \cup \cC_2 \cup \cC_3 \cup \dots$ is called the \emph{Clifford hierarchy}; we refer to $\cC_k$ as the $k$-th layer of $\mathcal{CH}$.
\end{definition}

We list a few standard facts of the Clifford hierarchy.

\begin{fact}\label{prop:cliff_hier}
\phantom{...}
    \begin{enumerate}
        \item For any $k$, $\cC_k$ is finite up to phase and $\cC_k\subseteq \cC_{k+1}$. 
        \item For $k\geq 2$, $\cC_k$ is closed under left and right multiplication of Clifford gates.
        \item For $k \geq 3$, $\cC_k$ is not a group.
        \item For any $k$, $\cC_k$ is closed under complex conjugation.\label{prop:closed_conj}
    \end{enumerate}
\end{fact}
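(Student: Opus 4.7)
The plan is to handle the four parts in order, since parts (2)--(4) build on part (1) and all follow by straightforward induction on the hierarchy level $k$ from the definition $\cC_k = \{U : U\cP_n U^\dagger \subseteq \cC_{k-1}\}$.

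For part (1), the inclusion $\cC_k \subseteq \cC_{k+1}$ follows by induction on $k$: the base case $\cC_1 \subseteq \cC_2$ says Paulis are Cliffords, and for the step, if $U \in \cC_k$ and $P \in \cP_n$ then $UPU^\dagger \in \cC_{k-1} \subseteq \cC_k$ by the inductive hypothesis, hence $U \in \cC_{k+1}$. For finiteness up to phase, I would again induct, using the fact that a unitary $U$ is determined up to a global phase by the $2n$ conjugates $UX_iU^\dagger$ and $UZ_iU^\dagger$ (since $\{X_i, Z_i\}_{i\in[n]}$ generate the full matrix algebra as well as $\cP_n$). Given $\cC_{k-1}$ finite up to phase, there are only finitely many possible tuples of such conjugates in $\cC_{k-1}^{2n}$, so $\cC_k$ is finite up to phase.

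For part (2), the clean approach is to first prove the auxiliary lemma that Clifford conjugation preserves each $\cC_m$, by induction on $m$: the base $m=1$ is the definition of the Clifford group, and for the step, writing
\[
(CVC^\dagger) P (CVC^\dagger)^\dagger = C \cdot V(C^\dagger P C) V^\dagger \cdot C^\dagger,
\]
we note that $C^\dagger P C \in \cP_n$, so $V(C^\dagger P C)V^\dagger \in \cC_{m-1}$, and the outer Clifford conjugation keeps it there by induction. Then both left and right multiplication follow immediately from the identities $(CU)P(CU)^\dagger = C(UPU^\dagger)C^\dagger$ and $(UC)P(UC)^\dagger = U(CPC^\dagger)U^\dagger$, both of which land in $\cC_{k-1}$ using the lemma and the fact that Cliffords normalize $\cP_n$.

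For part (3), I would simply invoke the explicit example flagged in the introduction's footnote: $\tof_{1,2,3}\tof_{1,3,2}$ is a product of two $\cC_3$ gates that is not in $\mathcal{CH}$ at all. Since $\cC_3 \subseteq \cC_k$ for all $k \geq 3$ by part (1), the same product witnesses non-closure under multiplication at every higher level. Finally, for part (4), induct once more: the base case $\cC_1 = \cP_n$ is closed under complex conjugation because $\bar I = I$, $\bar X = X$, $\bar Z = Z$, $\bar Y = -Y$, and scalars $\pm 1, \pm i$ are stable. For the step, given $U \in \cC_k$ and $P \in \cP_n$, observe that $\bar U P \bar U^\dagger = \overline{U \bar P U^\dagger}$; since $\bar P \in \cP_n$ we have $U \bar P U^\dagger \in \cC_{k-1}$, whose complex conjugate lies in $\cC_{k-1}$ by the inductive hypothesis, so $\bar U \in \cC_k$. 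The only step that requires genuine care is the finiteness argument in (1), where one must be precise about the ``up to phase'' quotient; the rest is routine once the conjugation identities are unpacked.
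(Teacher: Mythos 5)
The paper states \Cref{prop:cliff_hier} as a background ``Fact'' and provides no proof, treating all four items as standard; so there is nothing in the paper to compare against directly. Your blind derivation is essentially correct: the inductive inclusion $\cC_k \subseteq \cC_{k+1}$, the auxiliary lemma that Clifford conjugation preserves each level $\cC_m$ (from which left/right Clifford closure follows via the two conjugation identities), the non-closure of $\cC_3$ under products propagating up by $\cC_3 \subseteq \cC_k$, and the complex-conjugation induction via $\bar U P \bar U^\dagger = \overline{U\bar P U^\dagger}$ are all sound and are the standard arguments.

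The one place you flag but do not fully close is the finiteness claim in part (1). As written, ``finitely many tuples in $\cC_{k-1}^{2n}$'' is not literally true, since $\cC_{k-1}$ itself is infinite (it contains all global phases). The fix is the observation that each $X_i$ and $Z_i$ is a Hermitian involution, hence each conjugate $U X_i U^\dagger$, $U Z_i U^\dagger$ is a Hermitian involution in $\cC_{k-1}$; and in any coset $\{cV : |c|=1\}$ of the phase quotient there are at most two Hermitian involutions (the conditions $\bar c = c$ and $c^2=1$ force $c\in\{\pm 1\}$). Thus the conjugates land in a finite subset of $\cC_{k-1}$, the tuple map $U \mapsto (UX_iU^\dagger, UZ_iU^\dagger)_i$ is constant on phase classes and injective on $\cC_k/U(1)$, and finiteness follows. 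With that refinement spelled out, the argument is complete.
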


We say that a gate is a \emph{permutation gate} if it corresponds to a $2^n \times 2^n$ permutation matrix. 
Note that this is different from only permuting the qubits. 
Note that $\cX$ is exactly the set of all permutation gates in $\cP_n$. 
A gate is called \emph{diagonal} if its associated matrix is diagonal.

\begin{definition}[Semi-Clifford and generalized semi-Clifford gates]\label{def:sc-gsc}
A gate is \emph{semi-Clifford} if it can be written as $\phi_1d\phi_2$ for some Clifford gates $\phi_1, \phi_2$ and a diagonal gate $d$.
A gate is \emph{generalized semi-Clifford} if it can be written as $\phi_1\pi d\phi_2$ for some Clifford gates $\phi_1, \phi_2$, a permutation gate $\pi$, and a diagonal gate $d$.
\end{definition}

Observe that the inverse of a semi-Clifford gate is semi-Clifford. The inverse of a generalized semi-Clifford gate is generalized semi-Clifford, as we can write $(\phi_1 \pi d\phi_2)^{-1} = \phi_2^{-1} \pi^{-1} (\pi d^{-1} \pi^{-1}) \phi_1^{-1}$, and $\pi d^{-1} \pi^{-1}$ is diagonal. If we multiply a semi-Clifford (resp. generalized semi-Clifford) element on the left or right by a Clifford gate, the resulting operator is still semi-Clifford (resp. generalized semi-Clifford).

For a maximal abelian subgroup $A$ of $\cP_n$, let $\SPAN(A)$ denote its linear span with complex coefficients. 
The following lemma shows equivalent definitions of (generalized) semi-Clifford gates.
\begin{lemma}\label{lem:sc-with-subgps}
An operator $U$ is semi-Clifford if and only if there exist maximal abelian subgroups $A_1$ and $A_2$ of $\cP_n$ such that $UA_1U^{\dagger} = A_2$. An operator $U$ is generalized semi-Clifford if and only if there exist maximal abelian subgroups $A_1$ and $A_2$ of $\cP_n$ such that $U\SPAN(A_1)U^{\dagger} = \SPAN(A_2)$.
\end{lemma}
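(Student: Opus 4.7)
The plan is to reduce both equivalences to the canonical case $A_1 = A_2 = \mathcal{Z}$, where $\mathcal{Z} := \{\pm 1, \pm i\} \cdot \{I_2, Z\}^{\otimes n}$ is the maximal abelian subgroup of $\cP_n$ generated by single-qubit $Z$ operators. This reduction is available because the Clifford group acts transitively by conjugation on the set of maximal abelian subgroups of $\cP_n$, so for any maximal abelian $A \subseteq \cP_n$ there is a Clifford $\phi$ with $\phi \mathcal{Z} \phi^\dagger = A$.

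For the forward direction of the semi-Clifford equivalence, suppose $U = \phi_1 d \phi_2$. Setting $A_1 := \phi_2^{-1} \mathcal{Z} \phi_2$ and $A_2 := \phi_1 \mathcal{Z} \phi_1^{-1}$, both of which are Clifford conjugates of $\mathcal{Z}$ and hence maximal abelian, the identity $d \mathcal{Z} d^\dagger = \mathcal{Z}$---valid because $d$ and every element of $\mathcal{Z}$ are simultaneously diagonal and therefore commute---immediately gives $U A_1 U^\dagger = A_2$. The forward direction in the generalized case is analogous, using that $\SPAN(\mathcal{Z})$ is exactly the algebra of all diagonal $2^n \times 2^n$ matrices and is preserved under conjugation by any diagonal $d$ and any permutation $\pi$.

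For the reverse directions, pick Cliffords $\phi_1, \phi_2$ realizing $A_1 = \phi_2^{-1} \mathcal{Z} \phi_2$ and $A_2 = \phi_1 \mathcal{Z} \phi_1^{-1}$, and set $V := \phi_1^{-1} U \phi_2^{-1}$, so that $V$ normalizes $\mathcal{Z}$ (resp.\ $\SPAN(\mathcal{Z})$). Because $\SPAN(\mathcal{Z})$ is a maximal abelian subalgebra of $M_{2^n}(\mathbb{C})$, any unitary conjugating it to itself must permute its rank-one diagonal projectors $|x\rangle\langle x|$ and therefore has the \emph{monomial} form $V = D\pi$ for some diagonal unitary $D$ and some permutation matrix $\pi$. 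Substituting back into $U = \phi_1 V \phi_2$ already proves the generalized semi-Clifford direction.

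Under the stronger hypothesis $V \mathcal{Z} V^\dagger = \mathcal{Z}$ one must further upgrade $\pi$ to a Clifford. Since $D$ commutes with every $z \in \mathcal{Z}$, the condition collapses to $\pi \mathcal{Z} \pi^{-1} = \mathcal{Z}$. A direct computation gives $\pi Z_i \pi^{-1} |y\rangle = (-1)^{(\pi^{-1}(y))_i} |y\rangle$; requiring this operator to lie in $\mathcal{Z}$ forces the $i$-th coordinate of $\pi^{-1}(y)$ to be an affine function of $y$ over $\mathbb{F}_2$. Carrying this out for every $i$ shows that $\pi^{-1}$---hence $\pi$---is an affine permutation of $\mathbb{F}_2^n$; such permutations are generated by $\cnot$ and $X$ gates and are therefore Clifford. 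Absorbing $\pi$ into $\phi_2$ yields the semi-Clifford decomposition $U = \phi_1 D (\pi \phi_2)$. I expect the main technical step to be this last affine-permutation argument, as it is the only place where the semi-Clifford hypothesis genuinely differs from the generalized semi-Clifford one.
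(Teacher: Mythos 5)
Your proof is correct. Note, however, that the paper does not prove \Cref{lem:sc-with-subgps} at all---it explicitly defers to Appendix~A of Anderson's paper---so there is no in-paper proof to compare against. Your argument is the standard one found there and in related literature: reduce to the diagonal maximal abelian subgroup $\mathcal{Z}$ using Clifford transitivity, show that any unitary normalizing the diagonal masa $\SPAN(\mathcal{Z})$ is monomial (which directly gives the generalized semi-Clifford case), and for the stronger hypothesis $V\mathcal{Z}V^\dagger = \mathcal{Z}$ deduce that the permutation part is $\mathbb{F}_2$-affine and hence Clifford, which you can also cite directly from \Cref{clifford-perm} of the paper rather than re-deriving via $\cnot$/$X$ generators. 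One small point worth spelling out: in the last step you need $\pi Z_i\pi^{-1}$ to equal $\pm Z^s$ rather than $\pm i Z^s$ before reading off affinity of $(\pi^{-1}(y))_i$; this follows because $\pi Z_i \pi^{-1}$ is Hermitian with order two. Everything else holds up.
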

We note that in literature, semi-Clifford and generalized semi-Clifford are usually defined as in~\Cref{lem:sc-with-subgps} whereas~\Cref{def:sc-gsc} is proved as a proposition~\cite{dehaene2003clifford,zeng2008semi-clifford,beigi-shor,anderson}.
For proofs of this equivalence, we refer readers to Appendix~A of~\cite{anderson}.

\begin{proposition}[Semi-Clifford gates are closed under taking inverses]\label{lem:respect-sc}
For any $k$, the inverse of any semi-Clifford element of $\cC_k$ is in $\cC_k$.
\end{proposition}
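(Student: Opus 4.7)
The plan is to exploit the semi-Clifford factorization $U = \phi_1 d \phi_2$ from Definition \ref{def:sc-gsc} together with the closure of $\cC_k$ under left/right Clifford multiplication (Fact \ref{prop:cliff_hier}) to reduce the claim to a statement about diagonal gates only. Specifically, since $\phi_1, \phi_2$ are Clifford, $U = \phi_1 d \phi_2 \in \cC_k$ iff $d \in \cC_k$, and similarly $U^\dagger = \phi_2^\dagger d^\dagger \phi_1^\dagger \in \cC_k$ iff $d^\dagger \in \cC_k$. So it suffices to prove: for every diagonal unitary $d$ and every $k \geq 2$, $d \in \cC_k$ iff $d^\dagger \in \cC_k$. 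The case $k = 1$ is immediate because $\cC_1 = \cP_n$ is a group, and I would dispatch it separately at the start.

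The main step is a complex-conjugation trick. Assume $d \in \cC_k$, so $d P d^\dagger \in \cC_{k-1}$ for every Pauli $P$. Because $d$ is diagonal we have $d^* = d^\dagger$ and $(d^\dagger)^* = d$ entrywise, so
\[
(d P d^\dagger)^* \;=\; d^\dagger\, P^*\, d.
\]
By item 4 of Fact \ref{prop:cliff_hier}, $\cC_{k-1}$ is closed under entrywise complex conjugation, so the right-hand side lies in $\cC_{k-1}$. Since the map $P \mapsto P^*$ is a bijection of $\cP_n$ onto itself (it merely negates each tensor factor equal to $Y$ and complex-conjugates the overall phase), letting $P$ range over $\cP_n$ shows $d^\dagger Q d \in \cC_{k-1}$ for every Pauli $Q$, which is the statement $d^\dagger \in \cC_k$. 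The converse direction is obtained by swapping the roles of $d$ and $d^\dagger$.

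I expect the only real subtlety to be noticing that diagonality of $d$ is exactly what makes the complex-conjugation identity swap $d$ with $d^\dagger$; for a general unitary $U$ the analogous identity reads $(U P U^\dagger)^* = \bar U P^* U^T$, which does \emph{not} swap $U$ with $U^\dagger$. Once the diagonal reduction is in place via Fact \ref{prop:cliff_hier}, everything else is formal, and the argument pointedly avoids the heavier equivalent characterization from Lemma \ref{lem:sc-with-subgps}.
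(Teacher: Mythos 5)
Your proof is correct and follows essentially the same route as the paper: reduce via the semi-Clifford factorization $U=\phi_1 d\phi_2$ and Clifford closure (Fact~\ref{prop:cliff_hier}(2)) to the diagonal factor $d$, then exploit that $d^\dagger$ equals the entrywise complex conjugate of $d$ together with closure of $\cC_k$ under complex conjugation (Fact~\ref{prop:cliff_hier}(4)). The only difference is that the paper applies the conjugation-closure fact directly to $d$, concluding $d^{-1}=\overline d\in\cC_k$ in one line, whereas you descend to level $k-1$ and conjugate $dPd^\dagger$ before re-ascending; both rest on the same two observations and you even handle the $k=1$ case more explicitly than the paper does.
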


\begin{proof}
For any $U \in \cC_k$ that is semi-Clifford, 
by~\Cref{def:sc-gsc}, we can write $U = \phi_1 d \phi_2$ for some Clifford gates $\phi_1, \phi_2$ and a diagonal gate $d$. 
Using~\Cref{prop:cliff_hier} repeatedly,  
we know that $d = \phi_1^{-1} U \phi_2^{-1} \in \cC_k$. Hence, $d^{-1} = d^{\dagger} = \overline{d} \in \cC_k$ and thus $U^{-1} = \phi_2^{-1} d^{-1} \phi_1^{-1} \in \cC_k$.
\end{proof}

It was conjectured in~\cite{zeng2008semi-clifford} that all gates in $\cC_3$ are semi-Clifford.
This was disproved by Gottesman and Mochon~\cite{beigi-shor} with the following $7$-qubit unitary. 

\begin{lemma} \label{lem:gottesman-mochon}
For $n=7$, $\cC_3$ contains a non-semi-Clifford element.
\end{lemma}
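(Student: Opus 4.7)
The plan is to produce a concrete $7$-qubit unitary, verify it lies in $\cC_3$, and show it cannot be written as $\phi_1 d \phi_2$ with $\phi_1, \phi_2$ Clifford and $d$ diagonal. The natural candidate, flagged in the introduction, is the staircase-form permutation $U_3$ of \Cref{fig:seven_perm_ext}, which the paper later identifies as Clifford-conjugate to the original Gottesman--Mochon example. Since Clifford conjugation preserves both $\cC_3$-membership (\Cref{prop:cliff_hier}(2)) and semi-Cliffordness (by the remark following \Cref{def:sc-gsc}), it suffices to work with $U_3$.

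For $U_3 \in \cC_3$, I would compute $U_3 P U_3^\dagger$ on the $14$ generators $\{X_i, Z_i\}_{i=1}^{7}$ of $\cP_7$ and confirm each lies in $\cC_2$. Because $U_3$ is a permutation, $U_3 Z_i U_3^\dagger$ is diagonal with $(y,y)$-entry $(-1)^{q_i(y)}$, where $q_i$ is the $i$-th output polynomial of $U_3^{-1}$; back-substitution through the six Toffolis yields each $q_i$ of $\mathbb{F}_2$-degree at most two (the deepest being $q_7 = y_7 + y_1 y_6 + y_2 y_5 + y_3 y_4$), and such diagonal $\pm 1$ operators are Clifford (products of $Z_i$, $CZ_{ij}$, and phases). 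The $X$-type conjugates $U_3 X_i U_3^\dagger$ are themselves permutations, and using the $\mathbb{F}_2$-identity $p_j(x \oplus e_i) = p_j(x) + (\partial p_j/\partial a_i)(x)$ on the explicit outputs of $U_3$, each turns out to be an affine permutation, hence Clifford.

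For failure of semi-Cliffordness, I would use \Cref{lem:sc-with-subgps}: assume for contradiction that $U_3 A U_3^\dagger \le \cP_7$ for some maximal abelian $A \le \cP_7$. The decisive feature of $U_3$ is the cubic monomial $a_1 a_2 a_4$ appearing in the seventh output polynomial. Tracking how this cubic term propagates under $U_3$-conjugation, I would show that every candidate $A$ contains a generator $P$ whose conjugate $U_3 P U_3^\dagger$ is non-Pauli: generators with $X$-support on $\{1, 2, 4\}$ produce non-affine permutations through the $\partial/\partial a_i$ identity above (the bit-flip rule on qubit $7$ acquires a quadratic parity in $y$), while generators of purely $Z$-type on qubits involving the cubic term produce diagonal phases of cubic degree after inversion.

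The main obstacle is this last step: the argument must rule out \emph{every} maximal abelian subgroup $A \le \cP_7$, including mixed-type subgroups where several generators' $X$-supports on $\{1, 2, 4\}$ could a priori cancel, or where $X$-support combines with $Z$-support to yield a Pauli after conjugation. Making the case analysis airtight---showing that no $\mathbb{F}_2$-linear combination of the generators of any maximal abelian $A$ can eliminate the cubic obstruction---is the technical heart of the proof, and is essentially what the paper's later characterization of non-semi-Clifford $\cC_3$ permutations (via descending multiplications, \Cref{rst:decending-mult}) accomplishes in much greater generality.
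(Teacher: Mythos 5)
Your membership argument for $U_3\in\cC_3$ is sound in outline and matches the paper's machinery (\Cref{eq:pi_conjugate_on_Z}, \Cref{diagonal-in-ch}, \Cref{clifford-perm}). But for the non-semi-Clifford direction you have chosen a much harder route than necessary and then, as you yourself concede, left the hard step undone. Your plan is to invoke \Cref{lem:sc-with-subgps} directly and rule out every maximal abelian subgroup $A\le\cP_7$; you correctly call this case analysis ``the technical heart'' and note that completing it would essentially replicate the later descending-multiplication machinery. That is a genuine gap: a proof that defers its decisive step to ``what \Cref{rst:decending-mult} accomplishes'' is not a proof of \Cref{lem:gottesman-mochon}, which appears \emph{before} that machinery and should stand on its own.

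The paper sidesteps all of this via \Cref{lem:respect-sc}: semi-Clifford elements of $\cC_k$ have inverses in $\cC_k$. So to show $G$ (or $U_3$) is not semi-Clifford, it suffices to show the inverse is not in $\cC_3$. The paper does this by computer for $G$ (exhibiting $G^{-1}X_7G\notin\cC_2$), but you already hold the analytic version of the same fact in your hand: you observe that coordinate $7$ of $U_3$ contains the cubic monomial $a_1a_2a_4$. Since that polynomial is the $7$th coordinate of $(U_3^{-1})^{-1}$, \Cref{ck-perm-poly} applied to $U_3^{-1}$ with $k=2$ forces every coordinate of $U_3$ to have degree at most $2$ if $U_3^{-1}\in\cC_3$ --- contradiction. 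Then \Cref{lem:respect-sc} immediately gives that $U_3$ is not semi-Clifford. In short, you identified exactly the right obstruction (the cubic term) but deployed it in the wrong place; routing it through \Cref{ck-perm-poly} and \Cref{lem:respect-sc} rather than through \Cref{lem:sc-with-subgps} turns an open-ended case analysis into a two-line argument and is essentially what the paper does both here and in \Cref{thm:U_k_non_semi_cliff}.
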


\begin{proof}
Let $G$ be the $7$-qubit gate given by
\begin{equation*}
    G = \cswap_{7,1,6}\cswap_{7,2,5}\cswap_{7,4,3} \cdot \ccz_{1,2,3}\ccz_{1,4,5}\ccz_{2,4,6}\ccz_{3,5,6}, 
\end{equation*}
where CSWAP denotes the controlled SWAP gate and CCZ denotes the controlled controlled $Z$ gate. 
See~\Cref{fig:circ_gottesman_Mochon} for a circuit diagram. 

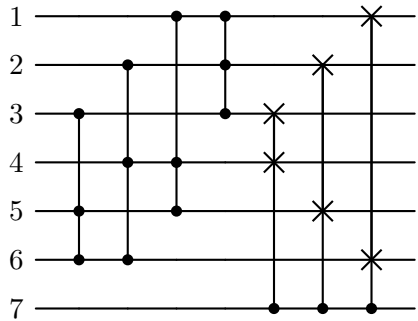
\begin{figure}[!ht]
    \centering
    \begin{equation*}
        \begin{quantikz}[slice style=blue] 
        \lstick{$1$}&&&\ctrl{4}&\ctrl{2}&&&\swap{5}&\\
        \lstick{$2$}&&\ctrl{4}&&\control{}&&\swap{3}&&\\
        \lstick{$3$}&\ctrl{3}&&&\control{}&\swap{1}&&&\\
        \lstick{$4$}&&\control{}&\control{}&&\targX{}&&&\\
        \lstick{$5$}&\control{}&&\control{}&&&\targX{}&&\\
        \lstick{$6$}&\control{}&\control{}&&&&&\targX{}&\\
        \lstick{$7$}&&&&&\ctrl{-4}&\ctrl{-5}&\ctrl{-6}&
        \end{quantikz}
    \end{equation*}
    \caption{Circuit diagram for the Gottesman--Mochon seven-qubit gate $G$ (with time flowing from left to right).}
    \label{fig:circ_gottesman_Mochon}
\end{figure}

It can be verified with a computer program that $G \in \cC_3$. If $G$ were semi-Clifford, then we would have $G^{-1} \in \cC_3$ by~\Cref{lem:respect-sc}. However, a computer calculation shows that $G^{-1} \notin \cC_3$ (in particular, $G^{-1}X_7G \notin \cC_2$). Thus, $G$ is not semi-Clifford.
\end{proof}

This $7$-qubit operator is the smallest known example of a non-semi-Clifford operator in $\cC_3$. 
\cite{zeng2008semi-clifford} showed that for $n \leq 3$, all elements of $\cC_3$ are semi-Clifford. 
Recently, \cite{anderson2025affine} showed that for $n=4$, all elements of $\cC_3$ are semi-Clifford. 
For $n = 5$ or $6$, it is an open problem whether there is a $\cC_3$ operator that is non-semi-Clifford. We make partial progress on this problem in \Cref{sec-seven-is-best} by showing that all permutation gates in $\cC_3$ on at most six qubits are semi-Clifford.

While the semi-Clifford characterization of $\cC_3$ was disproved, Beigi and Shor~\cite{beigi-shor} proved that every gate in $\cC_3$ is generalized semi-Clifford.

\begin{theorem} \label{c3-is-gsc}
Every element of $\cC_3$ is generalized semi-Clifford.
\end{theorem}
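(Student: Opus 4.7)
The plan is to apply~\Cref{lem:sc-with-subgps} and exhibit maximal abelian subgroups $A_1, A_2 \subseteq \cP_n$ such that $U\SPAN(A_1)U^{\dagger} = \SPAN(A_2)$. Since $\SPAN(A_2)$ equals the commutant of $A_2$ in the full matrix algebra, this is equivalent to requiring that for every $P \in A_1$, the Clifford $V_P := UPU^{\dagger}$ commutes as a matrix with every element of $A_2$.

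First I would set up the symplectic picture. Since $U \in \cC_3$, each $V_P$ is Clifford and therefore induces a symplectic transformation $\Psi(P) \in \mathrm{Sp}(2n,\mathbb{F}_2)$; the map $P \mapsto \Psi(P)$ is a group homomorphism $\Psi: \cP_n \to \mathrm{Sp}(2n,\mathbb{F}_2)$. Because $\cP_n$ is abelian modulo its center (which lies in $\ker \Psi$), the image $G := \Psi(\cP_n)$ is an abelian subgroup of $\mathrm{Sp}(2n,\mathbb{F}_2)$ of exponent $2$---a commuting family of involutions. Translating the commutation condition, $V_P$ commutes with a Pauli $Q$ if and only if $\Psi(P)$ fixes the vector $q \in \mathbb{F}_2^{2n}$ associated with $Q$ and the sign appearing in $V_P Q V_P^{\dagger} = \pm Q$ equals $+1$. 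The task therefore reduces to finding maximal isotropic subspaces $A_1, A_2 \subseteq \mathbb{F}_2^{2n}$ with $A_2 \subseteq \bigcap_{p \in A_1}\mathrm{Fix}(\Psi(p))$, plus a sign compatibility.

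The crux is a structural lemma about abelian subgroups of exponent $2$ in $\mathrm{Sp}(2n,\mathbb{F}_2)$: any such $G$ admits a pair of maximal isotropics $A_1, A_2$ with $A_2$ fixed pointwise by $\Psi(A_1)$. A useful input is that for each involution $g$, $\mathrm{Fix}(g)$ is coisotropic, with $\mathrm{Fix}(g)^{\perp} = (g-I)\mathbb{F}_2^{2n}$ under the symplectic form, enabling an inductive argument on $n$: find a common fixed isotropic line of $G$, quotient by it, and recurse on a $2(n-1)$-dimensional symplectic space. Once the linear condition holds, the residual sign data is bilinear in $(p,q)$ and can be driven to $+1$ by restricting $A_1$ to the radical of this pairing and possibly shifting by a fixed Pauli.

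The main obstacle is exactly the structural lemma above: commuting involutions in $\mathrm{Sp}(2n,\mathbb{F}_2)$ do not automatically share a large pointwise-fixed isotropic subspace, and producing $A_1$ and $A_2$ simultaneously requires genuine linear-algebraic work in characteristic $2$. This is precisely where the defining property of $\cC_3$---which is what makes $\Psi$ land in $\mathrm{Sp}(2n,\mathbb{F}_2)$ at all and endows $G$ with its abelian exponent-$2$ structure---is essential, and it is why no analogous argument yet exists for $\cC_k$ with $k \geq 4$.
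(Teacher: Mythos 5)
The paper does not prove this theorem---it is \Cref{c3-is-gsc}, stated as a known result and attributed to Beigi and Shor~\cite{beigi-shor}---so there is no in-paper argument to compare against. Your reformulation via \Cref{lem:sc-with-subgps} and the symplectic homomorphism $\Psi$ is the correct place to start and matches the general framework Beigi and Shor work in, but the argument you sketch for the crucial ``structural lemma'' has a real gap, and that lemma is essentially the entire content of their theorem.

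Concretely, the inductive step you describe does not close. Yes, a $2$-group acting $\mathbb{F}_2$-linearly has a nonzero common fixed vector $v$, and every vector is isotropic over $\mathbb{F}_2$, so you may pass to the symplectic reduction $v^{\perp}/\langle v\rangle$ and recurse. But if by induction you obtain maximal isotropics $\bar{A}_1,\bar{A}_2$ in $v^{\perp}/\langle v\rangle$ with $\Psi(\bar{A}_1)$ fixing $\bar{A}_2$ pointwise, and you lift them by adjoining $v$, the lifted condition you actually get is only $\Psi(p)a - a \in \langle v\rangle$ for $a \in A_2$, not $\Psi(p)a = a$. Fixing modulo $\langle v\rangle$ is strictly weaker than pointwise fixing, and nothing in the exponent-$2$ abelian structure alone forces the discrepancy to vanish. (You also have to account for $\Psi(v)$ itself, since $v$ now lies in $A_1$ but need not fix $\bar{A}_2$ on the nose after lifting.) So the induction as stated does not establish the lemma. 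A related concern: the sign repair at the end, ``restrict $A_1$ to the radical of the pairing,'' destroys maximality of $A_1$, and ``shifting by a fixed Pauli'' is not spelled out enough to see that it restores it; this step also needs genuine work.

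More broadly, I do not believe the structural lemma is a formal consequence of $G$ being an abelian exponent-$2$ subgroup of $\mathrm{Sp}(2n,\mathbb{F}_2)$ alone; the Beigi--Shor argument exploits additional structure coming from the fact that $G$ is the image of all of $\cP_n$ under conjugation by a single $\cC_3$ unitary, and their proof is a substantial piece of group theory, not a short induction. Your sketch correctly isolates where the difficulty lives, but it proposes an argument for that hard step that, as written, has a hole exactly at the lift-from-quotient stage.
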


The conjecture that every gate in $\ch$ is generalized semi-Clifford~\cite{zeng2008semi-clifford} remains open. Partial progress was made in~\cite{anderson}. 

\begin{conjecture} \label{hierarchy-is-gsc}
Every element of the Clifford hierarchy is generalized semi-Clifford.
\end{conjecture}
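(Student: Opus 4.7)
The statement is a long-standing open conjecture of~\cite{zeng2008semi-clifford}, so any serious proof plan is necessarily speculative. I would proceed by induction on the hierarchy level $k$. The base cases $k\leq 2$ are immediate (every Clifford $\phi$ is trivially generalized semi-Clifford with $\pi=d=I$), and $k=3$ is exactly \Cref{c3-is-gsc}. For the inductive step, take $U\in \cC_k$; by \Cref{lem:sc-with-subgps}, it suffices to exhibit a maximal abelian subgroup $A\subseteq \cP_n$ with $U\,\SPAN(A)\,U^{\dagger}=\SPAN(A')$ for some maximal abelian $A'\subseteq\cP_n$.

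My first observation is that the commutation relations of $\cP_n$ already give a lot for free. For any abelian $A\subseteq\cP_n$, the image $UAU^{\dagger}$ consists of pairwise commuting unitaries, and when $A$ is maximal abelian, $U\SPAN(A)U^{\dagger}$ is automatically a maximal commutative subalgebra of $M_{2^n}(\mathbb{C})$ of dimension $2^n$. Thus the joint eigenbasis $\{|\psi_x\rangle\}_{x\in\mathbb{F}_2^n}$ of $\{UPU^{\dagger}:P\in A\}$ is well defined. The conjecture therefore reduces to showing that, for some choice of $A$, this eigenbasis is a \emph{stabilizer basis}, i.e., the common eigenbasis of some maximal abelian Pauli subgroup $A'$. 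The inductive hypothesis supplies that each $UPU^{\dagger}\in\cC_{k-1}$ is generalized semi-Clifford, which constrains the eigenprojector structure of each individual conjugate; the task is to assemble these local constraints into a global stabilizer basis.

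The concrete strategy I would attempt is to start with the natural candidate $A=\cX=\{I,X\}^{\otimes n}$ and leverage \Cref{prop:closed_conj} together with the inductive decompositions $UX_iU^{\dagger}=\phi_1^{(i)}\pi^{(i)}d^{(i)}\phi_2^{(i)}$. The pairwise commutations $[UX_iU^{\dagger},UX_jU^{\dagger}]=0$ impose nontrivial relations among these decompositions that one hopes would force a common semi-Clifford ``frame,'' and hence a common stabilizer basis. An alternative route, modeled on the Beigi--Shor proof for $k=3$, is to pass through the dual: study $UZ_iU^{\dagger}$ and use the structure theorem for diagonal gates in $\ch$ from~\cite{cui2017diagonal} to argue that the diagonal parts $d^{(i)}$ must be simultaneously diagonalizable in a Pauli basis, thereby producing $A'$.

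The main obstacle, and the reason the conjecture has remained open, is that the generalized semi-Clifford decomposition of each $UPU^{\dagger}$ is highly non-unique, and there is no known rigidity principle forcing these local decompositions to vary consistently as $P$ ranges over $\cP_n$. The missing ingredient is essentially a ``global compatibility'' lemma that matches the local $\pi^{(P)}$ and $d^{(P)}$ factors into a single pair $(\pi,d)$ attached to $U$. The machinery developed in the present paper may provide useful test beds for such a lemma: \Cref{rst:decending-mult} delivers an explicit algebraic parametrization of $\cC_3^\sym$, and one could probe any candidate compatibility principle against $\{U_k^{\dagger}\}$ from \Cref{rst:Uk-examples}, whose members live arbitrarily deep in $\ch$ and whose generalized semi-Clifford structure, if the conjecture holds, must be determined by the descending multiplication underlying $U_k$.
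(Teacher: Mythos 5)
There is a genuine gap here, and you have in fact named it yourself: your text is a research plan, not a proof. The statement you were asked about is stated in the paper only as \Cref{hierarchy-is-gsc}, an open conjecture of~\cite{zeng2008semi-clifford}; the paper offers no proof of it (it only records the $k=3$ case, \Cref{c3-is-gsc}, due to Beigi and Shor, and cites partial progress in~\cite{anderson}). So there is no proof in the paper to compare against, and your proposal does not close the conjecture either: the entire inductive step --- the ``global compatibility'' lemma that would merge the per-Pauli generalized semi-Clifford decompositions of $UX_iU^{\dagger}$ (or $UZ_iU^{\dagger}$) into a single pair of maximal abelian subgroups as required by \Cref{lem:sc-with-subgps} --- is exactly the unproven core, and everything before it (choice of $A=\cX$, existence of the joint eigenbasis of $U\SPAN(A)U^{\dagger}$) is routine and does not advance the difficulty.

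One concrete reason the inductive strategy does not straightforwardly extend beyond $k=3$ is worth making explicit. The Beigi--Shor argument for \Cref{c3-is-gsc} leans on the fact that for $U\in\cC_3$ the conjugates $UPU^{\dagger}$ all lie in $\cC_2$, which is an honest group with a symplectic action on $\cP_n$ modulo phases; that group structure is what lets the local data be assembled globally. For $U\in\cC_k$ with $k\geq 4$, the conjugates lie in $\cC_{k-1}$, which by \Cref{prop:cliff_hier} is not a group, so the pairwise commutation relations among the $UX_iU^{\dagger}$ constrain their (highly non-unique) decompositions only weakly, and no analogue of the symplectic bookkeeping is available. Your suggestion to stress-test candidate compatibility principles against the gates $U_k^{\dagger}$ of \Cref{rst:Uk-examples} is a reasonable research direction, but as submitted the argument establishes nothing beyond the already-known cases $k\leq 3$.
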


\begin{remark}
    A generalized semi-Clifford gate takes the form $\phi_1 \pi d\phi_2$.
    A similar form of $\pi d \phi$ is considered in the context of approximate unitary designs or pseudorandom unitaries in~\cite{metger2024simple,chen2024incompressibility}, where $\phi$ and $\pi$ are sampled uniformly at random from their respective groups, and $d$ is a diagonal gate with random $\pm 1$ entries. 
\end{remark}

\subsection{Polynomial representations of permutations} \label{sec-poly}
To study the permutation gates in the Clifford hierarchy, we represent them as collections of Boolean polynomials.

\begin{fact}\label{poly-rep}
Any function from $\mathbb{F}_2^n$ to $\mathbb{F}_2$ can be uniquely written as an $n$-variable polynomial that has degree at most $1$ in each variable.
\end{fact}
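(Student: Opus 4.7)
The plan is to establish this by a counting argument combined with an explicit interpolation construction, giving existence and uniqueness simultaneously.

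First I would count both sides. A polynomial in $x_1, \ldots, x_n$ with degree at most $1$ in each variable has the form $\sum_{S \subseteq [n]} c_S \prod_{i \in S} x_i$ with coefficients $c_S \in \mathbb{F}_2$, since there is one monomial per subset of $[n]$. This gives exactly $2^{2^n}$ polynomials. The number of functions $\mathbb{F}_2^n \to \mathbb{F}_2$ is also $2^{2^n}$. So the natural evaluation map from polynomials to functions is a map between finite sets of equal cardinality, and it suffices to prove surjectivity; injectivity (which is exactly uniqueness) then follows automatically by pigeonhole.

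For surjectivity, I would produce explicit interpolating polynomials. For each point $a = (a_1, \ldots, a_n) \in \mathbb{F}_2^n$, define
$$\mathbb{1}_a(x_1, \ldots, x_n) \;=\; \prod_{i=1}^n (x_i + a_i + 1).$$
Working over $\mathbb{F}_2$, the factor $x_i + a_i + 1$ equals $1$ when $x_i = a_i$ and $0$ otherwise, so $\mathbb{1}_a(x) = 1$ if $x = a$ and $\mathbb{1}_a(x) = 0$ otherwise. Each $\mathbb{1}_a$ is visibly a product of linear factors, one in each variable, so after expanding it has degree at most $1$ in each variable. Given any $f : \mathbb{F}_2^n \to \mathbb{F}_2$, the polynomial $\sum_{a \in \mathbb{F}_2^n} f(a) \, \mathbb{1}_a(x)$ agrees with $f$ pointwise and still has degree at most $1$ in each variable, establishing surjectivity.

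There is essentially no obstacle here; this is a standard textbook result. The only item requiring mild care is that the expansion of $\mathbb{1}_a$ and of the sum is computed in $\mathbb{F}_2[x_1, \ldots, x_n]/(x_i^2 - x_i)$, i.e.\ one keeps only the squarefree part, which automatically respects the degree constraint since no $x_i^2$ ever arises from multiplying linear factors in distinct variables.
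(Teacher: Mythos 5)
The paper states this as \Cref{poly-rep} without any proof at all, treating it as a standard background fact (it is the uniqueness of the algebraic normal form of a Boolean function). Your argument is a correct and entirely standard proof: the counting step (both sides have cardinality $2^{2^n}$) reduces uniqueness to existence, and the explicit interpolation polynomials $\mathbb{1}_a(x) = \prod_{i=1}^n (x_i + a_i + 1)$ are visibly multilinear and correctly indicate the point $a$, so $\sum_{a} f(a)\,\mathbb{1}_a$ realizes any given $f$. Your closing remark about keeping the squarefree part is actually unnecessary: since each linear factor of $\mathbb{1}_a$ involves a distinct variable, no $x_i^2$ term ever arises in the expansion, and the same holds for the sum $\sum_a f(a)\,\mathbb{1}_a$ because it is an $\mathbb{F}_2$-linear combination of multilinear polynomials. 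So no quotient is needed; the construction already lands in the multilinear polynomial space. Nothing is missing and nothing fails.
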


\begin{lemma}\label{diagonal-in-ch}
For any function $f: \mathbb{F}_2^n \rightarrow \mathbb{F}_2$, the diagonal gate $\sum_{a\in\bF_2^n} (-1)^{f(a)} \ketbra{a}$ is in $\cC_k$ if and only if $f$, considered as a polynomial, has degree at most $k$.
\end{lemma}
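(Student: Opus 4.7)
The plan is to induct on $k$, using the standard identity that conjugation of a Pauli $X$ by a diagonal gate produces a discrete partial derivative. Concretely, for any $i$, a direct matrix computation gives
\begin{equation*}
    D_f X_i D_f^\dagger = D_{\partial_i f}\, X_i, \qquad \text{where } \partial_i f(a) := f(a) + f(a \oplus e_i).
\end{equation*}
Since $D_f$ is diagonal, it also commutes with every $Z_i$, so $D_f Z_i D_f^\dagger = Z_i$. Because $\cP_n$ is generated by the $X_i$'s and $Z_i$'s (together with phases), the membership $D_f \in \cC_k$ is equivalent to $D_{\partial_i f} X_i \in \cC_{k-1}$ for every $i$.

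Next I would record the elementary polynomial fact that $\deg f \le k$ if and only if $\deg \partial_i f \le k-1$ for every $i$. The forward direction is immediate by writing $f(a) = a_i\,g(\hat a) + h(\hat a)$ and noting $\partial_i f = g$, which has degree at most $\deg f - 1$. For the converse, any monomial of $f$ of degree $d \ge 1$ contains some variable $a_i$, and $\partial_i f$ then contains that monomial with $a_i$ removed, of degree $d-1$; so if all $\partial_i f$ have degree at most $k-1$, no monomial of $f$ can exceed degree $k$.

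With these two observations, the induction almost writes itself. The base case is $k = 1$: the diagonal elements of $\cP_n$ are precisely products of $Z_i$'s times a phase, which correspond exactly to affine (degree $\le 1$) polynomials $f$. For the inductive step, note that $D_{\partial_i f} X_i \in \cC_{k-1}$ if and only if $D_{\partial_i f} \in \cC_{k-1}$: for $k \ge 3$, this is \Cref{prop:cliff_hier}(2) applied to the Clifford gate $X_i$; for $k = 2$, it follows from $\cC_1 = \cP_n$ being a group and $X_i \in \cP_n$. Combining this equivalence with the inductive hypothesis applied to $\partial_i f$, we get $D_f \in \cC_k$ iff $\deg \partial_i f \le k-1$ for all $i$ iff $\deg f \le k$.

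The main obstacle, such as it is, lies in handling the extra $X_i$ factor that appears on the right of $D_{\partial_i f} X_i$: one must invoke the closure of $\cC_{k-1}$ under right multiplication by Cliffords (and separately handle the case $k=2$ where $\cC_{k-1}$ is only the Pauli group, not all Cliffords). Everything else is routine: the conjugation identity is a one-line calculation, the polynomial degree equivalence is elementary, and the induction is forced by the structure $\cC_k = \{U : UPU^\dagger \in \cC_{k-1}\ \forall P \in \cP_n\}$.
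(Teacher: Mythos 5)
The paper proves this lemma simply by citing the main theorem of Cui--Gottesman--Krishna~\cite{cui2017diagonal}, so your self-contained induction via discrete derivatives is a genuinely different route and would be more informative for a reader if it were airtight. However, there is a real gap in the step where you assert that ``because $\cP_n$ is generated by the $X_i$'s and $Z_i$'s (together with phases), the membership $D_f \in \cC_k$ is equivalent to $D_{\partial_i f} X_i \in \cC_{k-1}$ for every $i$.'' The forward implication is fine, but the reverse one does \emph{not} follow from the generating-set structure: for $k \ge 4$, the set $\cC_{k-1}$ is not a group (it is not closed under products), so knowing $D_f X_i D_f^\dagger \in \cC_{k-1}$ and $D_f Z_i D_f^\dagger \in \cC_{k-1}$ for each generator does not automatically give $D_f P D_f^\dagger \in \cC_{k-1}$ for an arbitrary Pauli $P$. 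You flagged the $X_i$ factor as the main obstacle, but this generator-vs-general-Pauli issue is the more serious one.

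The fix stays within your framework but reorders the logic. For an arbitrary Pauli $P = cX^uZ^v$, the same one-line computation you did for $X_i$ gives $D_f P D_f^\dagger = c\, D_{\partial_u f} X^u Z^v$ where $\partial_u f(a) := f(a) + f(a\oplus u)$. Telescoping $f(a) + f(a\oplus u)$ along a bit-by-bit path from $a$ to $a\oplus u$ expresses $\partial_u f$ as a sum of shifts of the $\partial_i f$, so $\deg \partial_u f \le \max_i \deg \partial_i f$ for every $u$. With that in hand, the ``if'' direction of the lemma should read: if $\deg f \le k$, then by your polynomial fact $\deg \partial_i f \le k-1$ for all $i$, hence $\deg \partial_u f \le k-1$ for all $u$, hence by induction $D_{\partial_u f} \in \cC_{k-1}$, hence $D_f P D_f^\dagger = c\,D_{\partial_u f} X^u Z^v \in \cC_{k-1}$ for every $P \in \cP_n$ by the same closure argument you already used to strip off $X_i$ (Clifford closure for $k\ge 3$, Pauli group for $k=2$); therefore $D_f \in \cC_k$. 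The ``only if'' direction is as you wrote it. In short: you must verify the $\cC_{k-1}$-membership for all of $\cP_n$, not just its generators, and the polynomial-degree fact is exactly what promotes the generator check to the general one --- so it must be invoked \emph{before}, not after, concluding $D_f \in \cC_k$.
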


\begin{proof}
This is a special case of the main theorem in~\cite[See Eq. (1)]{cui2017diagonal}. 
\end{proof}

\begin{definition}[Polynomial representation]~\label{def:polynomial_rep}
    Given a permutation gate $\pi: \mathbb{F}_2^n\rightarrow \mathbb{F}_2^n$, let $\pi_i: \mathbb{F}_2^n \rightarrow \mathbb{F}_2$ denote the function $\pi$ restricted to the $i$-th output bit, i.e.\
    \begin{equation*}
        \pi = \sum_{a\in\bF_2^n} \ketbratwo{\pi_1(a),\ldots, \pi_n(a)}{a}.
    \end{equation*}
    From~\Cref{poly-rep} we know that each $\pi_i$ can be written as a polynomial in the input bits. 
    We refer to $(\pi_1, \ldots, \pi_n)$ as the \emph{polynomial representation of $\pi$} and $\pi_i$ as the \emph{$i$-th coordinate of $\pi$}. 
\end{definition}
As an example, $\tof_{1,2,3}$ can be represented as $(a_1,a_2,a_3) \mapsto (a_1,a_2,a_3+a_1a_2)$.
We prove a few useful lemmas regarding the polynomial representations of permutation gates.

\begin{lemma} \label{ck-perm-poly}
For any integer $k\geq 1$ and permutation gate $\pi \in \cC_{k+1}$, each coordinate of $\pi^{-1}$ has degree at most $k$.
\end{lemma}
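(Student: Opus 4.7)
The plan is to conjugate a well-chosen Pauli operator by $\pi$ and then apply the diagonal characterization from \Cref{diagonal-in-ch}. Specifically, since $\pi \in \cC_{k+1}$, the definition of the Clifford hierarchy gives $\pi Z_i \pi^{\dagger} \in \cC_k$ for every $i \in [n]$. The key observation is that conjugating a diagonal gate by a permutation gate yields another diagonal gate, so $\pi Z_i \pi^{\dagger}$ is diagonal and its diagonal entries directly encode $\pi^{-1}$.

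To make this precise, I would compute
\begin{equation*}
\pi Z_i \pi^{\dagger} \;=\; \sum_{a \in \bF_2^n} (-1)^{a_i}\, \pi \ketbra{a} \pi^{\dagger} \;=\; \sum_{a \in \bF_2^n} (-1)^{a_i}\, \ketbra{\pi(a)} \;=\; \sum_{b \in \bF_2^n} (-1)^{(\pi^{-1}(b))_i}\, \ketbra{b},
\end{equation*}
where the last equality substitutes $b = \pi(a)$. Thus $\pi Z_i \pi^{\dagger}$ is the diagonal gate $\sum_{b} (-1)^{\pi^{-1}_i(b)}\ketbra{b}$ corresponding to the Boolean function $\pi^{-1}_i$ in the notation of \Cref{def:polynomial_rep}.

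Since this diagonal gate lies in $\cC_k$, \Cref{diagonal-in-ch} immediately gives that $\pi^{-1}_i$ has degree at most $k$ as a polynomial over $\bF_2$. Running over all $i \in [n]$ yields the claim.

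There is really no significant obstacle here: the only conceptual point is to recognize that one should conjugate a $Z$-type Pauli (not an $X$-type one) in order to obtain a diagonal gate whose entries one can read off, and to notice that the relabeling $b = \pi(a)$ produces $\pi^{-1}$ rather than $\pi$ in the exponent — this is what forces the bound to be on the coordinates of $\pi^{-1}$ rather than of $\pi$. After that, the argument is a direct invocation of the Cui–Gottesman–Krishna classification recorded in \Cref{diagonal-in-ch}.
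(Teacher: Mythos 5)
Your proof is correct and is essentially identical to the paper's: both conjugate $Z_i$ by $\pi$, rewrite the result as the diagonal gate associated to the Boolean function $\pi^{-1}_i$, and invoke \Cref{diagonal-in-ch} to read off the degree bound. Nothing to add.
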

\begin{proof}
For each $i\in [n]$, we have 
\begin{equation}\label{eq:pi_conjugate_on_Z}
    \cC_k \ni \pi Z_i \pi^{-1} = \sum_{a\in\bF_2^n} (-1)^{a_i} \ketbra{\pi(a)} = \sum_{a\in\bF_2^n} (-1)^{(\pi^{-1}(a))_i} \ketbra{a} =  \sum_{a\in\bF_2^n} (-1)^{\pi_i^{-1}(a)} \ketbra{a}.
\end{equation}
It follows from~\Cref{diagonal-in-ch} that $\pi_i^{-1}$, the $i$-th coordinate of $\pi^{-1}$, must have degree at most $k$. 
\end{proof}

\begin{remark} \label{non-quad-perm}
For $\pi \in \cC_3$, \Cref{ck-perm-poly} tells us that every coordinate of $\pi^{-1}$ has degree at most $2$; however, as we will see in~\Cref{sec:family_gates}, the coordinates of $\pi$ themselves do not necessarily have degree at most $2$.
\end{remark}

We denote by $e_1, \ldots, e_n$ the standard basis of $\mathbb{F}_2^n$.
\begin{proposition}[Clifford permutations] \label{clifford-perm}
For any $n \times n$ invertible matrix $M$ over $\mathbb{F}_2$ and any vector $w$, the permutation gate sending $|v \ra \mapsto |Mv + w\ra$ is Clifford. Conversely, any Clifford permutation is of this form for some $M$ and $w$.
\end{proposition}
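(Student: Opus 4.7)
The proposition has two directions, so the plan is to handle them separately.

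\textbf{Forward direction (affine maps are Clifford).} The plan is to decompose the permutation $|v\rangle \mapsto |Mv+w\rangle$ into a product of basic Clifford permutations. First I would factor the map as the composition of $|v\rangle \mapsto |Mv\rangle$ followed by $|u\rangle \mapsto |u+w\rangle$. The translation part is clearly Clifford: it is $\prod_{i:\,w_i=1} X_i$, a product of Pauli $X$ gates. For the linear part, I would invoke the standard fact that any invertible matrix over $\mathbb{F}_2$ is a product of elementary matrices of the form $I+E_{ij}$ (add row $j$ to row $i$); over $\mathbb{F}_2$ we need no scaling, and a row swap can itself be written as three such elementaries by the usual $R_i\mapsto R_i+R_j$, $R_j\mapsto R_j+R_i$, $R_i\mapsto R_i+R_j$ trick. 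A matrix $I+E_{ij}$ acts on basis states by $|v\rangle \mapsto |v+v_j e_i\rangle$, which is exactly $\cnot_{j,i}$. Since $\cnot$ is Clifford and the Clifford group is closed under multiplication, the whole map is Clifford.

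\textbf{Reverse direction (Clifford permutations are affine).} The plan is to leverage \Cref{ck-perm-poly} applied with $k=1$. Let $\pi$ be a Clifford permutation. Since $\cC_2$ is a group, $\pi^{-1}\in \cC_2$. Applying \Cref{ck-perm-poly} to $\pi^{-1}\in\cC_2 = \cC_{1+1}$ yields that each coordinate of $(\pi^{-1})^{-1}=\pi$ is a polynomial of degree at most $1$. Hence each coordinate $\pi_i(v)$ is of the form $\langle m_i, v\rangle + w_i$ for some $m_i\in\mathbb{F}_2^n$ and $w_i\in\mathbb{F}_2$, i.e.\ $\pi(v) = Mv+w$ where $M$ has rows $m_i$ and $w$ has coordinates $w_i$. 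Since $\pi$ is a permutation of $\mathbb{F}_2^n$ (hence injective), the affine map $v\mapsto Mv+w$ is injective on a finite set, so $M$ is invertible. This gives the desired form.

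\textbf{Expected obstacles.} I do not expect significant obstacles here, as both directions reduce to facts either proved earlier in the paper or entirely standard. The only point worth being slightly careful about is making sure that \Cref{ck-perm-poly} is invoked on $\pi^{-1}$ rather than $\pi$ (so that the conclusion pertains to $\pi$ itself), and that the Gaussian elimination argument is phrased to avoid any mention of scaling, which would be vacuous but potentially confusing over $\mathbb{F}_2$.
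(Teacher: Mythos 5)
Your reverse direction is essentially identical to the paper's: both apply \Cref{ck-perm-poly} with $k=1$ to $\pi^{-1}$ and then read off invertibility from injectivity. Your forward direction, however, takes a genuinely different route. The paper verifies the Clifford property directly from the definition: it computes $\pi X_i \pi^{-1}$ (which sends $|v\ra \mapsto |v + Me_i\ra$, hence a product of $X$'s) and $\pi Z_i \pi^{-1}$ (a product of $Z$'s up to a sign), so both conjugates land in $\cP_n$. You instead factor the affine map as a translation (a product of Pauli $X$'s) followed by a linear map, and express the latter as a product of CNOTs via Gaussian elimination over $\mathbb{F}_2$: each elementary matrix $I+E_{ij}$ is precisely $\cnot_{j,i}$, and row swaps are three such elementaries. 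Both arguments are correct. The paper's is shorter and stays close to the definition of Clifford; yours is more constructive, exhibiting an explicit circuit of $X$'s and CNOTs, which is the sort of decomposition one wants when actually synthesizing such a gate. One small thing worth attending to if you flesh this out: you should note (as you do implicitly) that the Clifford group is closed under multiplication, so the product of these elementary Cliffords is Clifford.
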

\begin{proof} 
For the first claim, it is clear that $|v \ra \mapsto |Mv+w \ra$ is a permutation, which we denote $\pi$.
To show that $\pi$ is Clifford, it suffices to show that $\pi X_i \pi^{-1}, \pi Z_i \pi^{-1} \in \cP_n$. 
For $\pi X_i \pi^{-1}$, it sends 
\[
|v \ra \mapsto |M^{-1}(v-w) \ra \mapsto |M^{-1}(v-w)+e_i \ra \mapsto |M(M^{-1}(v-w)+e_i) +w \ra = |v+Me_i \ra.
\] 
Therefore $\pi X_i \pi^{-1}$ is equivalent to a product of $X$ gates. 
For $\pi Z_i \pi^{-1}$, it sends 
\[
|v \ra \mapsto |M^{-1}(v-w) \ra \mapsto (-1)^{e_i^\top M^{-1}(v-w)} |M^{-1}(v-w) \ra \mapsto (-1)^{e_i^\top M^{-1}(v-w)} |v \ra. 
\] 
We can rewrite this as $|v \ra \mapsto (-1)^{-e_i^\top M^{-1}w}(-1)^{((M^{-1})^\top e_i)^\top v} |v \ra$, so this is a product of $Z$ operators up to a phase of $\pm 1$. 
Hence we have $\pi \in \cC_2$.

For the converse claim, we have $\pi^{-1}$ is a permutation in $\cC_2$ (as $\cC_2$ is a group). 
Using~\Cref{ck-perm-poly} with $k=1$, every coordinate of $(\pi^{-1})^{-1} = \pi$ has degree at most $1$.
This directly yields 
a matrix $M$ and vector $w$ so that $\pi$ can be written as $|v \ra \mapsto |Mv+w \ra$. 
Since $\pi$ is a permutation, $M$ must be invertible. 
\end{proof}

Recall that $\cX$ is the set of all the $n$-qubit Pauli $X$ operators.
\begin{proposition}[Pauli permutations] \label{conj-cliff-perm}
Suppose $X'_1, X'_2, \ldots, X'_m\in \cX$ are independent (that is, no nontrivial product of them yields the identity). 
Then there exists some Clifford permutation $\nu$ such that $\nu |0^n \ra = |0^n \ra$ and $\nu X_i \nu^{-1}=X'_i$ for all $i\in [m]$.
\end{proposition}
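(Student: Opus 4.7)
The plan is to reduce the statement to a linear algebra fact using the explicit form of Clifford permutations given by \Cref{clifford-perm}. First, I would translate the hypothesis into $\mathbb{F}_2$-linear-algebraic language: each $X_i'\in\cX$ can be uniquely written as $X^{v_i}:=X_1^{(v_i)_1}\cdots X_n^{(v_i)_n}$ for some $v_i\in\mathbb{F}_2^n$, and the assumption that $X_1',\dots,X_m'$ are independent is equivalent to the vectors $v_1,\dots,v_m$ being linearly independent in $\mathbb{F}_2^n$.

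Next, I would parametrize candidate $\nu$'s. By \Cref{clifford-perm}, any Clifford permutation has the form $|v\rangle\mapsto|Mv+w\rangle$ for some invertible $M\in\mathrm{GL}_n(\mathbb{F}_2)$ and $w\in\mathbb{F}_2^n$. The condition $\nu|0^n\rangle=|0^n\rangle$ forces $w=0$, so we look for $\nu_M\colon|v\rangle\mapsto|Mv\rangle$.

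Then I would compute the conjugation action: for such $\nu_M$, we have
\begin{equation*}
\nu_M X_i \nu_M^{-1}\colon |v\rangle \;\longmapsto\; |M^{-1}v\rangle \;\longmapsto\; |M^{-1}v+e_i\rangle \;\longmapsto\; |v+Me_i\rangle,
\end{equation*}
so $\nu_M X_i \nu_M^{-1}=X^{Me_i}$. Therefore the requirement $\nu_M X_i \nu_M^{-1}=X_i'=X^{v_i}$ is exactly the condition that the $i$-th column of $M$ equal $v_i$, for each $i\in[m]$.

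The final step is a basis-extension argument: since $v_1,\dots,v_m$ are linearly independent in $\mathbb{F}_2^n$, we can extend them to a basis $v_1,\dots,v_n$ of $\mathbb{F}_2^n$ and let $M$ be the invertible matrix with these vectors as columns. The corresponding $\nu_M$ satisfies all the required properties. There is no real obstacle here; the content is just the observation that \Cref{clifford-perm} reduces Pauli-conjugation problems to column-choice problems for an invertible $\mathbb{F}_2$-matrix, after which the independence hypothesis is exactly what is needed for basis extension.
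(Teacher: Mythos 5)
Your proof is correct and follows essentially the same route as the paper: identify each $X_i'$ with a vector $v_i\in\mathbb{F}_2^n$, use linear independence to build an invertible $M$ with $Me_i=v_i$ (extending to a basis), and take $\nu\colon|v\rangle\mapsto|Mv\rangle$, verifying the conjugation by direct computation. The only cosmetic difference is that you make the basis-extension step and the parametrization via \Cref{clifford-perm} explicit, whereas the paper states the existence of $M$ more tersely.
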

\begin{proof}
Note that we can view each $X'_i$ as a map $|v \ra \mapsto |v + v_i \ra$. 
The independence property gives that $v_1, \ldots, v_m$ are linearly independent, i.e.\ there exists an invertible matrix $M$ such that $Me_i=v_i$ for all $i\in [m]$. 
Let $\nu$ be $|v \ra \mapsto |Mv \ra$ which is a Clifford permutation by~\Cref{clifford-perm}, and has $\nu |0^n \ra = |0^n \ra$. 
Then $\nu X_i \nu^{-1}$ sends \[ |v \ra \mapsto |M^{-1}v \ra \mapsto |M^{-1}v+e_i \ra \mapsto |M(M^{-1}v+e_i) \ra = |v + Me_i \ra = |v+v_i \ra, \]
which means that $\nu X_i \nu^{-1} = X'_i$, as desired.
\end{proof}

\subsection{Anderson's conjectures}

Besides polynomials, we consider the more operational representation of permutation gates as products of multi-controlled NOT gates, which we denote by $C^kX$ for $k\geq 0$ (note that Toffoli is $C^2X$). 
In~\cite{anderson}, Anderson considered \emph{mismatch-free} circuits, which are products of pairwise commuting multi-controlled NOT gates.
\begin{theorem}[Theorem~D.4 of~\cite{anderson}]\label{thm:anderson-mismatch-free}
    A mismatch-free permutation circuit is in $\ch$ at the level of the highest-level gate in the circuit.
\end{theorem}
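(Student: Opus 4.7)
The plan is to induct on the maximum level $k$ appearing in the circuit, where a $C^{k-1}X$ gate (with $k-1$ controls) sits at level $k$ of the hierarchy. Before beginning the induction, I would recast the mismatch-free hypothesis into a clean structural form. A short case analysis on two multi-controlled NOT gates shows they commute precisely when either they share a target or neither target lies in the other's control set; this is equivalent to the cleaner statement that the set $T$ of target qubits (those serving as the target of at least one gate) is disjoint from every control set. Under this partition, the whole circuit $\Pi$ realizes the permutation $|a\rangle \mapsto |a + \Delta(a)\rangle$, where $\Delta$ is supported on $T$, its $t$-th coordinate $f_t$ is a polynomial in the variables indexed by $\bar T := [n]\setminus T$ of degree at most $k-1$, and the monomials of $f_t$ enumerate the control sets of the gates with target $t$.

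The base case $k=1$ is immediate ($\Pi$ is a product of single-qubit $X$ gates). For the inductive step, I would verify that $\Pi P \Pi^\dagger \in \cC_{k-1}$ for $P = X_i$ and $P = Z_i$, $i \in [n]$. Since $\Delta$ only writes to $T$ and reads only from $\bar T$, two of the four cases are immediate: $\Pi X_i \Pi^\dagger = X_i$ when $i \in T$, and $\Pi Z_i \Pi^\dagger = Z_i$ when $i \in \bar T$. For $i \in T$, a direct computation gives $\Pi Z_i \Pi^\dagger = Z_i \cdot D_i$, where $D_i$ is the diagonal gate with phase polynomial $f_i$; since $\deg f_i \le k-1$, \Cref{diagonal-in-ch} places $D_i \in \cC_{k-1}$, hence $\Pi Z_i \Pi^\dagger \in \cC_{k-1}$.

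The interesting case is $i \in \bar T$, where computing $\Pi X_i \Pi^\dagger$ yields $X_i \cdot D'$ with $D'$ acting as $|a\rangle \mapsto |a + \sum_{t \in T}(\partial_i f_t)(a)\, e_t\rangle$, and $\partial_i f_t := f_t(a + e_i) + f_t(a) \in \mathbb{F}_2[a_{\bar T}]$ is the formal partial derivative. Each monomial of $\partial_i f_t$ corresponds to a multi-controlled NOT with target $t$ and at most $k-2$ controls, so $D'$ is a product of multi-controlled NOT gates whose maximum level is at most $k-1$. Crucially, these residual gates still have targets inside $T$ and controls inside $\bar T$, so $D'$ is itself mismatch-free, and the inductive hypothesis yields $D' \in \cC_{k-1}$. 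Thus $\Pi X_i \Pi^\dagger \in \cC_{k-1}$, completing the induction.

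The main subtlety, and the step I expect to require the most care, is precisely this closure property: I need to check that conjugation by $\Pi$ produces a residual circuit whose combinatorial structure remains mismatch-free, so that the inductive hypothesis can be invoked. Once the target/control partition is set up correctly, this closure follows by tracking where targets and controls land; the remaining work is routine polynomial and commutator bookkeeping together with one invocation of \Cref{diagonal-in-ch}.
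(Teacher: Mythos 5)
The paper does not supply a proof of this theorem; it is imported verbatim as Theorem~D.4 of~\cite{anderson}, and the argument is assumed from there. So there is no proof in this paper for your proposal to be compared against; I will instead evaluate it on its own terms.

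Your argument is correct and, I believe, essentially the natural one. Recasting ``pairwise commuting'' as ``the target set $T$ is disjoint from every control set'' is the right structural reduction (and matches \Cref{lem:mismatch-free-equals-commute}). The identification $\Pi\colon |a\rangle \mapsto |a+\Delta(a)\rangle$ with $\Delta$ supported on $T$ and reading only from $\bar T$, with coordinate polynomials $f_t$ of degree at most $k-1$, is accurate. The four Pauli conjugation cases all check out, and the key computation --- that for $i\in\bar T$, conjugation by $\Pi$ turns $X_i$ into $X_i$ times a permutation encoded by the discrete partial derivatives $\partial_i f_t$, which has strictly smaller degree and is still mismatch-free with the same $T/\bar T$ split --- is the heart of the induction and is handled correctly. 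A few items you should spell out explicitly if you write this up in full: (i) to conclude $X_iD'\in\cC_{k-1}$ from $D'\in\cC_{k-1}$ you invoke closure of $\cC_{k-1}$ under multiplication by the Clifford element $X_i$ (item~2 of~\Cref{prop:cliff_hier}), which requires $k-1\ge 2$, so the cases $k=1,2$ should be dispatched directly (Paulis and Cliffords respectively); (ii) for $i\in T$, the cleaner route is to observe that $Z_iD_i$ is itself a diagonal gate with phase polynomial $a_i+f_i$, of degree at most $k-1$, and apply \Cref{diagonal-in-ch} once, sidestepping the multiplication issue; (iii) if ``at the level of the highest-level gate'' is read as membership \emph{exactly} at level $k$ rather than merely $\Pi\in\cC_k$, you should add the one-line observation that $\Pi^{-1}=\Pi$ has a degree-$(k-1)$ monomial in its polynomial form (the monomials from distinct gates cannot cancel since they have distinct supports), so $\Pi\notin\cC_{k-1}$ by \Cref{ck-perm-poly} --- though the upper bound is what the paper actually uses.
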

\noindent Following this classification, Anderson gave two conjectures on permutation gates in $\ch$.
\begin{conjecture}\label{conj:c3-tof}
    A permutation is in $\cC_3$ if and only if it can be written as a circuit of commuting Toffoli gates, up to left and right multiplication by Clifford gates.
\end{conjecture}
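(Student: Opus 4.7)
My plan is to refute this conjecture rather than prove it, since the abstract and \Cref{rst:Uk-examples} announce that it is false. The ``if'' direction is immediate: any product of pairwise commuting Toffoli gates is mismatch-free and each Toffoli lies in $\cC_3$, so \Cref{thm:anderson-mismatch-free} combined with \Cref{prop:cliff_hier} gives $T \in \cC_3$, and closure under left/right Clifford multiplication finishes the argument. Hence the task is to falsify the ``only if'' direction by exhibiting some $\pi \in \cC_3^\sym$ that admits no decomposition $\pi = \phi_1 T \phi_2$ with $\phi_1, \phi_2$ Clifford and $T$ a product of pairwise commuting Toffolis.

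The key reduction is to exploit self-inversion. If $T = \tof_{a_1} \cdots \tof_{a_m}$ has pairwise commuting factors, then each factor is an involution and the factors commute, so $T^{-1} = T$. Thus $\pi = \phi_1 T \phi_2$ would force $\pi^{-1} = \phi_2^{-1} T \phi_1^{-1}$ to have the same form, and in particular to lie in $\cC_3$. Contrapositively, it suffices to produce $\pi \in \cC_3^\sym$ with $\pi^{-1} \notin \cC_3$.

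To detect when $\pi^{-1} \notin \cC_3$, I would conjugate a Pauli $Z$:
\begin{equation*}
\pi^{-1} Z_j \pi = \sum_{b \in \bF_2^n} (-1)^{\pi_j(b)} \ketbra{b}.
\end{equation*}
By \Cref{diagonal-in-ch}, if any forward coordinate $\pi_j$ has polynomial degree at least $3$, then this diagonal operator is not in $\cC_2$, hence $\pi^{-1} \notin \cC_3$. So my search reduces to producing a staircase-form $\cC_3^\sym$ gate whose forward polynomial carries a cubic term.

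The natural candidate is the $7$-qubit gate $U_3$ of \Cref{fig:seven_perm_ext}, whose seventh coordinate is $a_7 + a_1 a_6 + a_2 a_5 + a_3 a_4 + a_1 a_2 a_4$ and visibly contains the cubic monomial $a_1 a_2 a_4$. The main obstacle is certifying $U_3 \in \cC_3$: the brute-force route of checking $U_3 P U_3^{\dagger} \in \cC_2$ for every Pauli $P$ on $7$ qubits is unwieldy. My plan is to invoke the descending multiplication correspondence (\Cref{rst:decending-mult}): the bilinear operation on $\mathbb{F}_2^7$ defined by $e_S e_T = e_{S \cup T}$ when $S \cap T = \varnothing$ and $e_S e_T = 0$ otherwise is easily seen to be a commutative, associative descending multiplication by a routine set-theoretic check. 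The induced staircase-form gate is precisely $U_3$, so $U_3 \in \cC_3^\sym$. Combined with the degree-$3$ coordinate above, this yields $U_3^{-1} \notin \cC_3$ and refutes the conjecture.
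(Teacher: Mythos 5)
Your proposal is correct, and the refutation takes a genuinely different and somewhat more elementary route than the paper's. The paper reasons via the chain ``commuting $\Rightarrow$ mismatch-free (\Cref{lem:mismatch-free-equals-commute}) $\Rightarrow$ semi-Clifford (\Cref{mismatch-free-is-sc}) $\Rightarrow$ inverse stays in $\cC_3$ (\Cref{lem:respect-sc})'' and then contradicts this with the non-semi-Cliffordness of $U_3$. You bypass the semi-Clifford machinery entirely by observing that a product $T$ of pairwise commuting Toffoli involutions satisfies $T^2=I$, so $\pi=\phi_1 T\phi_2$ forces $\pi^{-1}=\phi_2^{-1}T\phi_1^{-1}\in\cC_3$ directly by closure under Clifford multiplication; this is cleaner and avoids appealing to the mismatch-free/semi-Clifford characterization. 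Both arguments then bottom out at the same two facts: $U_3\in\cC_3$ (which you certify via the descending-multiplication correspondence, exactly as the paper does in \Cref{def:family_gates} and \Cref{thm:perm-is-mult}) and $U_3^{-1}\notin\cC_3$ (which you certify from the cubic monomial $a_1a_2a_4$ in the seventh coordinate of $U_3$ via \Cref{diagonal-in-ch}; the paper reaches the same conclusion through \Cref{ck-perm-poly}, which is itself an instance of \Cref{diagonal-in-ch}). The paper routes through semi-Clifford primarily because that framing simultaneously handles \Cref{conj:perm-inverse} and fits the broader semi-Clifford narrative of \Cref{sec-sc-perm}, whereas your self-inversion shortcut is the leaner argument if the goal is solely to falsify \Cref{conj:c3-tof}. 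Minor note: your ``if'' direction via \Cref{thm:anderson-mismatch-free} is fine but is not logically needed for the refutation, and as you observe the key step $T\in\cC_3$ in the contrapositive also follows just from $\pi\in\cC_3$ and Clifford closure without invoking that theorem.
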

\begin{conjecture}\label{conj:perm-inverse}
    For a permutation $\pi\in \cC_k$, we have $\pi^\dagger\in \cC_k$. 
\end{conjecture}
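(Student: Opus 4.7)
The plan is to disprove, not prove, \Cref{conj:perm-inverse}. The principal diagnostic tool is \Cref{ck-perm-poly}: if a permutation $\pi$ is in $\cC_k$, then every coordinate of $\pi^{-1}$ has polynomial degree at most $k-1$. Contrapositively, a permutation $\pi \in \cC_3$ with any output coordinate of degree at least $k$ forces $\pi^{\dagger} = \pi^{-1} \notin \cC_k$. Thus a single $\pi \in \cC_3$ with some cubic output coordinate already refutes the conjecture at level $3$, while a whole family $\{U_k\}_{k \geq 3}$ with $U_k \in \cC_3$ and some output coordinate of degree $\geq k$ simultaneously refutes it at every level. Building such a family is the goal.

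For the construction I would lean on the staircase-form and descending-multiplication framework developed in \Cref{rst:staircase-ext} and \Cref{rst:decending-mult}: any descending multiplication on $\mathbb{F}_2^n$ produces a staircase-form gate guaranteed to sit in $\cC_3$. A natural candidate takes $n = 2^k - 1$, indexes the standard basis by nonempty subsets $S \subseteq [k]$ ordered so that smaller subsets precede strictly larger ones, and defines $e_S \cdot e_T = e_{S \cup T}$ whenever $S \cap T = \varnothing$ and $0$ otherwise. Commutativity, bilinearity, vanishing on the diagonal, and the descending property all follow routinely from the set-union structure, so \Cref{rst:decending-mult} immediately delivers a gate $U_k \in \cC_3$ in staircase form with no further work.

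The principal obstacle is then identifying a high-degree monomial in the output coordinate of $U_k$ indexed by $[k]$. I expect the monomial $a_{\{1\}} a_{\{2\}} \cdots a_{\{k\}}$ to appear with coefficient $1$, since the only way to reach coordinate $[k]$ through iterated staircase Toffolis is by merging disjoint subsets into $[k]$, and the all-singleton chain is essentially unique in a way that should rule out parity cancellations. Making this rigorous calls for an induction along the staircase order: when the Toffoli targeting $e_{S \cup T}$ fires, its two controls carry precisely the Boolean polynomials already accumulated on $e_S$ and $e_T$, so their product is the new contribution written onto $e_{S \cup T}$. Tracking which monomials survive this iterated-product expansion, and verifying non-cancellation for the top monomial, is the technical heart of the argument, and I expect it to be where the most care is required. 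Once the degree-$k$ monomial is exhibited, \Cref{ck-perm-poly} instantly yields $U_k^{\dagger} \notin \cC_k$, refuting the conjecture at every level $k \geq 3$.
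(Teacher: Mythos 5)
Your proposal is a disproof by counterexample and it matches the paper's route exactly: the same family $U_k$ built from the same disjoint-union descending multiplication on $n=2^k-1$ coordinates, the same target monomial $a_{\{1\}}\cdots a_{\{k\}}$ in the coordinate indexed by $[k]$, and the same invocation of \Cref{ck-perm-poly} to place $U_k^{-1}$ outside $\cC_k$. The non-cancellation step you flag as the technical heart is resolved cleanly in the paper by \Cref{prod-ei-and-ai}, which identifies the set of output coordinates containing $\prod_{i\in S}a_i$ with the vector $\prod_{i\in S}e_i$; taking $S$ to be the singletons gives $e_{[k]}$, so the degree-$k$ monomial appears with coefficient $1$ and there is nothing to cancel.
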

In \Cref{sec-sc-perm}, we show that mismatch-free permutation circuits are precisely the semi-Clifford permutation gates (\Cref{thm:semi-clifford-mismatch-free}). 
However, as we will see in the next few sections, the permutation gates in $\cC_3$ form a much richer space, which we characterize in this work. In particular, we disprove both \Cref{conj:c3-tof} and \Cref{conj:perm-inverse}.

\section{Staircase form representations of \texorpdfstring{$\cC_3$}{C₃} permutations} 
\label{sec-c3-perm}

We begin by presenting the most important definition of our work. 

\begin{definition}[Staircase form Toffoli circuits]\label{def:staircase}
    A product of pairwise distinct Toffoli gates is said to be in \emph{staircase form} if each gate $\tof_{i,j,k}$ in the product has $i<j < k$ 
    and the target qubits are in nondecreasing order in the order that the gates are applied.
    See~\Cref{fig:eg_staircase} for an example.
\end{definition}
\begin{figure}[!ht]
    \centering
    \begin{equation*}
        \begin{quantikz}[slice style=blue] 
        \lstick{$a_1$}&\ctrl{2}&\ctrl{3}&\ctrl{3}&\rstick{$a_1$}\\
        \lstick{$a_2$}&\control{}&&\control{}&\rstick{$a_2$}\\
        \lstick{$a_3$}&\targ{}&\control{}&&\rstick{$a_3+a_1a_2$}\\
        \lstick{$a_4$}&&\targ{}&\targ{}&\rstick{$a_4+a_1a_3$}
        \end{quantikz}
    \end{equation*}
    \caption{This circuit for  $\tof_{1,2,4}\tof_{1,3,4}\tof_{1,2,3}$ is in staircase form but not mismatch-free, as qubit $3$ is used as a control for $\tof_{1,3,4}$ and a target for $\tof_{1,2,3}$.}
    \label{fig:eg_staircase}
\end{figure}
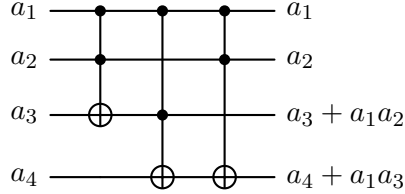

Note that a staircase form Toffoli circuit is unique up to ordering of gates with the same target.
We say that a permutation is in staircase form if it can be written as a Toffoli circuit in staircase form.

Our main result of this section is the following.

\begin{restatable}{theorem}{mainthmstaircase}
\label{thm:c3-staircase}
Suppose $\pi \in \cC_3$ is a permutation gate. Then there exist Clifford permutations $\phi_1, \phi_2$ and a staircase form permutation $\mu \in \cC_3$ such that $\pi = \phi_1\mu\phi_2$.
\end{restatable}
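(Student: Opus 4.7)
The plan is to prove the theorem by induction on the number of qubits $n$. For $n \leq 2$, every permutation is affine and therefore Clifford by \Cref{clifford-perm}, so one can take $\mu = I$. For the inductive step, since Clifford permutations normalize $\cC_3$ by \Cref{prop:cliff_hier}(2), it suffices to find $\phi_1, \phi_2$ such that $\mu := \phi_1^{-1} \pi \phi_2^{-1}$ has the staircase shape; the membership $\mu \in \cC_3$ is then automatic. A direct unfolding of the definition shows that $\mu$ is staircase iff its inverse has the triangular quadratic shape $\mu^{-1}_k(a) = a_k + Q_k(a_1, \ldots, a_{k-1})$ with each $Q_k$ a homogeneous quadratic in the earlier variables: reading the circuit in reverse, each Toffoli's controls sit at indices below its target and are untouched by all later-applied gates, so their control values at firing time coincide with the raw input bits. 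Hence the real task is to conjugate $\tau := \pi^{-1}$, which by \Cref{ck-perm-poly} already has every coordinate of degree $\leq 2$, into this lower-triangular form via Clifford (affine) permutations on both sides.

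I will exploit both halves of the $\cC_3$-condition on $\pi$ separately. The diagonal half $\pi Z_i \pi^{-1} \in \cC_2$ is equivalent, via \Cref{diagonal-in-ch}, to the degree-$\leq 2$ property of $\tau$. The permutation half $\pi X_i \pi^{-1} \in \cC_2$ is equivalent, via \Cref{clifford-perm}, to the map $a \mapsto \pi(\tau(a) + e_i)$ being affine in $a$ for every $i$. This second condition is the source of all nontrivial constraints: it rules out generic quadratic permutations, and it encodes exactly the structural information needed to triangularize $\tau$.

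The inductive step then proceeds as follows. After normalizing $\tau(0) = 0$ by a translation (a Clifford permutation), I use invertible linear changes of basis on both input and output — chosen by analyzing how the bilinear forms associated to $\tau$'s quadratic part interact with the affine constraints from $\pi X_i \pi^{-1}$ — to produce an ordering of the qubits under which the last coordinate of the conjugated $\tau$ has the shape $\tau_n(a) = a_n + Q_n(a_1, \ldots, a_{n-1})$ and no $\tau_k$ for $k < n$ depends on $a_n$. At that point, $\pi$ factors as $\pi = \mu_n \cdot (\pi' \otimes I_n)$, where $\mu_n$ is the product of the pairwise commuting Toffolis $\tof_{i,j,n}$ corresponding to the nonzero quadratic monomials of $Q_n$ (a valid single-target staircase), and $\pi'$ is an $(n-1)$-qubit permutation. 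A direct verification shows $\pi' \in \cC_3$: the coordinates of $(\pi')^{-1}$ are the restrictions of $\tau_1, \ldots, \tau_{n-1}$ (still degree $\leq 2$), and $\pi' X_i (\pi')^{-1}$ is obtained by restricting $\pi X_i \pi^{-1}$ to the first $n-1$ qubits (still affine). The inductive hypothesis yields a staircase decomposition of $\pi'$ on $n-1$ qubits, and appending $\mu_n$ extends it to a staircase for $\pi$ on $n$ qubits, modulo Clifford permutations that undo the initial normalization.

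The main obstacle is the middle step: exhibiting the qubit ordering and change of basis that yield the triangular form for $\tau_n$. Mere quadraticity of $\tau$ is not enough — there are quadratic permutations whose inverses lie outside $\cC_3$ and which cannot be triangularized in this way — so the crux is to show that the affine conditions on $\pi X_i \pi^{-1}$ (and in particular the compatibility they impose between the linear parts of the permutations $\pi X_i \pi^{-1}$ and the bilinear forms of $\tau$'s quadratic part) always single out such a qubit. Once this is established, the rest of the argument is essentially bookkeeping.
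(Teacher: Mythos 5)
Your proposed proof has a genuine gap at the exact point you yourself flag: the step that produces a qubit which "splits off" after Clifford conjugation is the entire content of the theorem, and you do not prove it. Let me spell out why this matters and how it compares to the paper's route.

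Your inductive step needs the following: after left/right Clifford permutations, there is a qubit (say $n$) such that $\pi$ commutes with $X_n$, equivalently such that $\tau = \pi^{-1}$ has $\tau_k$ independent of $a_n$ for $k<n$ and $\tau_n = a_n + Q_n(a_{<n})$. Unwinding the conjugation, this is equivalent to: there exists a nontrivial $X' \in \cX$ with $\pi^{-1} X' \pi \in \cX$, i.e.\ writing $\pi X_i \pi^{-1}(v) = v + A_i v + b_i$ and composing, some nonempty $S\subseteq[n]$ has $A_S = 0$ (and, once $\pi(0)=0$ is arranged, automatically $b_S \neq 0$). You describe wanting to find this via "analyzing how the bilinear forms associated to $\tau$'s quadratic part interact with the affine constraints," but no construction or existence argument is given. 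Nothing in the setup makes it obvious that the group homomorphism $w\mapsto I+A_w$ from $\mathbb{F}_2^n$ into the unipotent matrices has nontrivial kernel: there exist elementary abelian $2$-groups of unipotent involutions in $\mathrm{GL}_n(\mathbb{F}_2)$ of order $\geq 2^n$ once $n\geq 4$, so some genuine use of the $\cC_3$ hypothesis is required here, and that is precisely where all the work lives.

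The paper does not induct on $n$. Its Lemma~\ref{lem:clean_pi} first reduces, via simultaneous strict lower-triangularization of the commuting nilpotent $A_i$ (Lemma~\ref{simul-lower-tri}) and a "twisted Gaussian elimination" on the pairs $(A_i, b_i)$ (Proposition~\ref{twisted-gauss-elim}), to the state where $b_i = e_i$ and the $A_i$ are strictly lower triangular. It then proves globally that $\alpha(\pi(v))=\alpha(v)$ (leading-index preservation) and uses this to pin down the polynomial form of $\pi^{-1}$ all at once. Only as a consequence of the completed proof does the last qubit become a pure target with $A_n=0$; Lemma~\ref{lem:clean_pi} alone does \emph{not} give $A_n=0$, since a strictly lower-triangular $A_n$ need not vanish. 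So your inductive step is not "a piece" of the paper's argument that you could borrow -- the existence of a splitting qubit is a consequence of, not an ingredient for, the paper's proof. As written, your plan reduces the theorem to a key lemma that is not clearly any easier than the theorem itself, and that lemma is left unproved.

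Two smaller points. First, the verification that $\pi'\in\cC_3$ is correct but slightly more delicate than "restriction": from $\pi X_i\pi^{-1} = \mu_n((\pi'X_i\pi'^{-1})\otimes I)\mu_n^{-1}$ one computes that the first $n-1$ output coordinates of $\pi X_i \pi^{-1}$ are exactly those of $\pi' X_i\pi'^{-1}$, so affineness of the former does force affineness of the latter; it's worth writing out since $\cC_3$ is not closed under products and you can't just multiply $\mu_n^{-1}\pi$. Second, the bookkeeping to reassemble $\phi_1\mu\phi_2$ from $\mu_n\cdot(\phi_1'\mu'\phi_2'\otimes I)$ works, but requires noting that $(\phi_1'\otimes I)^{-1}\mu_n(\phi_1'\otimes I)$ is a product of Toffoli, CNOT, and $X$ gates all with target $n$ and controls $<n$, after which the Clifford pieces slide into $\phi_1$; this is fine but is not trivial enough to omit.
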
 

Since $\pi\in\cC_3$ is a permutation, each $\pi X_j\pi^{-1}$ is a Clifford permutation. 
Then by \Cref{clifford-perm}, we can find a binary matrix $A_j$ and a vector $b_j$ such that $\pi X_j\pi^{-1}$ implements the permutation $|v \ra \mapsto |v + A_jv+b_j \ra$. 
To prove \Cref{thm:c3-staircase}, we construct a sequence of Clifford operators to reduce $A_j, b_j$ to a specific form, which will help us build a staircase form representation of $\pi$.

We caution that \Cref{thm:c3-staircase} is not if-and-only-if, because there exist permutations in staircase form that are not in $\cC_3$.
For example, $\pi'=\tof_{3,4,5}\tof_{1,2,3}$ is in staircase form, but $\pi' X_1\pi'^{-1} = X_1 \cnot_{2,3}\tof_{2,4,5}\notin \cC_2$.

\subsection{Reducing \texorpdfstring{$\cC_3$}{C₃} permutations}

The goal of this subsection is to reduce a $\cC_3$ permutation to a much more friendly form, which serves as the first step in proving \Cref{thm:c3-staircase}: 
\begin{restatable}[Reducing $\cC_3$ permutations]{lemma}{lemcleanpi}\label{lem:clean_pi}
    For any permutation gate $\pi\in\cC_3$, there exist Clifford permutations $\phi_1,\phi_2$ such that the following is true. Let $\tau = \phi_1 \pi \phi_2$. Then 
    $\tau\ket{0^n} = \ket{0^n}$ and 
    for each $i$, we have
    $\tau \ket{e_i} = \ket{e_i}$ and
    $\tau X_i \tau^{-1}$ is given by $|v \ra \mapsto |v + A_i v+e_i \ra$ for some strictly lower triangular matrix $A_i$ over $\mathbb{F}_2$. 
\end{restatable}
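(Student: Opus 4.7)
The plan is to construct $\phi_1, \phi_2$ in three stages, each progressively normalizing $\pi$.

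First, I left-multiply $\pi$ by $X^{\pi(0^n)}$ so $\pi(0^n) = 0^n$. Each $\pi X_i \pi^{-1}$ is then simultaneously in $\cC_2$ and a permutation, hence a Clifford permutation $\ket{v} \mapsto \ket{M_i v + d_i}$ by \Cref{clifford-perm}. The involution $X_i^2 = I$ gives $M_i = I + A_i$ with $A_i^2 = 0$; evaluating at $\ket{0^n}$ gives $d_i = \pi(e_i)$; and $M_i d_i = d_i$ gives $A_i d_i = 0$. Commutativity of the $X_i$'s yields $A_iA_j = A_jA_i$ and $A_id_j = A_jd_i$, so $\{\pi X_i\pi^{-1}\}$ generate an abelian subgroup $H \cong (\mathbb{F}_2)^n$ of Clifford permutations acting regularly on $V = \mathbb{F}_2^n$.

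Second, I right-multiply by a Clifford permutation $\phi_2$ to produce a $\tau$ with $\tau\ket{e_i} = \ket{e_i}$. By \Cref{ck-perm-poly}, $\pi^{-1}$ has degree at most $2$, so $\pi^{-1}(v) = Lv + Q(v)$ with $Q$ purely quadratic, and since each monomial $a_ja_k$ (with $j\ne k$) vanishes on single-bit inputs, $Q(e_i) = 0$ and $\pi^{-1}(e_i)$ equals the $i$-th column of $L$. When $L$ is invertible, the linear Clifford $\ket{v} \mapsto \ket{Lv}$ is the desired $\phi_2$, giving $\tau^{-1}$ linear part $I$ and hence $\tau\ket{e_i} = \ket{e_i}$. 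In general $L$ may be singular: small $\cC_3$ examples exist with $d_1 + d_2 + d_3 = 0$, so the $d_i$'s need not be linearly independent. In that case I first change the generating set of $H$ by right-conjugating $\pi$ by a Clifford permutation (which replaces each $\pi X_i\pi^{-1}$ with $\pi X^{u_i}\pi^{-1}$ for an invertible $\mathbb{F}_2$-linear change of indexing), producing new generators whose translation parts are linearly independent, and then apply the linear right-Clifford as above. After this, $\tau X_i \tau^{-1} : \ket{v}\mapsto\ket{(I+A_i)v + e_i}$ with $A_ie_i = 0$, $A_ie_j = A_je_i$, $A_i^2 = 0$, and $A_iA_j = A_jA_i$.

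Finally, I apply one more Clifford conjugation to triangularize all $A_i$ simultaneously. The bilinear product $e_i \cdot e_j := A_ie_j$ makes $\mathbb{F}_2^n$ into a commutative, associative, square-zero algebra, and the desired strict lower triangularity in some basis ordering is exactly the descending-multiplication condition from the introduction. I proceed inductively: find a qubit $k$ whose basis vector plays the role of the ``last'' index (so $A_k = 0$ in the current frame), swap it to the bottom via a qubit-permutation Clifford, and recurse on the complementary subspace; when no such $k$ is immediately available, a carefully chosen CNOT-type conjugation --- paired with a compensating right-side Clifford that restores $\tau\ket{e_i}=\ket{e_i}$ --- exposes one. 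I expect this last stage to be the main obstacle: the commutativity and nilpotency relations on $\{A_i\}$ alone are insufficient (for example, after stages 1 and 2 the Fredkin gate $\mathrm{CSWAP}_{1,2,3}$ has an $A_1$ with nonzero diagonal entries that no standard-basis permutation removes), and the proof must exploit the additional rigidity from $\tau \in \cC_3$ --- specifically the degree-at-most-$2$ bound on $\tau^{-1}$, captured by the functional equation $Q((I+A_i)v+e_i) = Q(v) + A_iv$ linking $A_i$ to $Q$ --- to find the right CNOT adjustment (in the CSWAP case, one that turns it into a single Toffoli).
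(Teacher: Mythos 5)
Your first stage matches the paper. The real divergence is in the order of stages two and three: you pin down $\tau\ket{0^n}=\ket{0^n}$ and $\tau\ket{e_i}=\ket{e_i}$ \emph{first} and only afterward attempt to triangularize the $A_i$. But once $\tau\ket{e_i}=\ket{e_i}$ is fixed, the only transformations that automatically preserve this constraint are qubit permutations, and (as you observe with $\cswap$) those cannot in general triangularize the $A_i$. You are then forced into a delicate interplay of a conjugation plus a compensating right-side Clifford, and you honestly flag this as ``the main obstacle'' without closing it. That is a genuine gap, not a detail.

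The paper does the two reductions in the \emph{reverse} order, which dissolves the difficulty. After arranging $\pi\ket{0^n}=\ket{0^n}$, it immediately invokes \Cref{simul-lower-tri} to find a single change of basis $M$ making all $A_j$ strictly lower triangular, and left-multiplies $\pi$ by $\ket{v}\mapsto\ket{Mv}$; this does not disturb $\pi\ket{0^n}=\ket{0^n}$ because nothing about the $e_i$ has been fixed yet. Only then does it run the ``twisted Gaussian elimination'' of \Cref{twisted-gauss-elim} on the pairs $(A_j, b_j)$ to drive each $b_j$ to $e_j$, by repeatedly replacing a generator $X_i$ with $X_iX_j$. Each such replacement sends $(A_i, b_i)\mapsto(A_i+A_j+A_iA_j,\; b_i+b_j+A_ib_j)$, which manifestly keeps the $A$'s strictly lower triangular, and the pivoting argument --- that composing strictly increases $\sum_i \alpha(b_i)$ until the $\alpha(b_i)$ are distinct, then row-reduces to $b_i=e_i$ --- works precisely because $A_j$ strictly lower triangular forces $\alpha(A_j b) > \alpha(b)$. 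Doing the elimination before triangularizing, as you propose, removes exactly this lever.

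Two smaller points. First, your appeal to the linear part $L$ of $\pi^{-1}$ to set $\tau\ket{e_i}=\ket{e_i}$ is not what the paper does, and it raises its own unresolved question of whether $L$ is invertible; the paper replaces this with the twisted elimination and the observation that no $b_j$ can be made zero because $\pi$ is a permutation fixing $0^n$. Second, your diagnosis that ``commutativity and nilpotency alone are insufficient'' and that the degree-$2$ bound from $\cC_3$ must be invoked is not how the paper's proof of \Cref{lem:clean_pi} actually goes: it uses only that each $\pi X_j\pi^{-1}$ is an affine Clifford permutation (hence the $A_j$ commute and square to zero) and that $\pi$ fixes $0^n$. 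The degree bound from \Cref{ck-perm-poly} enters later, in the proof of \Cref{thm:c3-staircase}, not in this lemma. The missing ingredient in your argument is the ordering of the reductions, not a stronger hypothesis.
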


To prove \Cref{lem:clean_pi}, we will use a series of helper lemmas which can be proved via standard linear algebra arguments. 

The first helper lemma stated below is essentially the same as standard results on simultaneous triangularization of commuting nilpotent matrices; see, for example, \cite{radjavi2000simultaneous}. We include a proof for completeness. 
\begin{lemma} \label{simul-lower-tri}
Suppose $A_1, \ldots, A_k$ are linear transformations of an $n$-dimensional vector space $V$ over a field $F$ such that $A_j^2=0$ and $A_iA_j=A_jA_i$. Then there exists a basis of $F^n$ in which all the $A_i$ are strictly lower triangular. Recall that a matrix is \emph{strictly lower triangular} if it is lower triangular, and all diagonal elements are $0$.
\end{lemma}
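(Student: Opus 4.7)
My plan is to prove the lemma by induction on the dimension $n$, following the standard approach for simultaneous triangularization of a commuting family of matrices, specialized to the nilpotent setting. The base case $n=1$ is immediate: the hypothesis $A_j^2 = 0$ in a one-dimensional space forces $A_j = 0$, which is already strictly lower triangular in the (unique up to scaling) basis.

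For the inductive step, the crucial ingredient is the existence of a common null vector, i.e.\ a nonzero $v \in V$ with $A_j v = 0$ for every $j$. I would establish this as a sublemma by a secondary induction on the number of matrices $k$. When $k=1$, any nilpotent operator on a nonzero vector space has nontrivial kernel. When $k > 1$, the inductive hypothesis produces $W = \bigcap_{j<k} \ker(A_j) \neq 0$; because $A_k$ commutes with each $A_j$ for $j<k$, the subspace $W$ is $A_k$-invariant, and the restriction $A_k|_W$ is still nilpotent (since $A_k^2 = 0$), so it has a nonzero kernel inside $W$, yielding a vector annihilated by all $A_j$ simultaneously.

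Once a common null vector $v$ has been produced, I would pass to the quotient $V/\langle v \rangle$ of dimension $n-1$. Each $A_j$ descends to a well-defined operator $\bar A_j$ on the quotient (since $A_j v = 0$), and the conditions $\bar A_i \bar A_j = \bar A_j \bar A_i$ and $\bar A_j^2 = 0$ are inherited. The inductive hypothesis then gives a basis $\bar v_1, \ldots, \bar v_{n-1}$ of $V/\langle v \rangle$ in which every $\bar A_j$ is strictly lower triangular. Choosing arbitrary preimages $v_1, \ldots, v_{n-1} \in V$ and appending $v$ as the final basis vector produces a basis of $V$ in which the last column of each $A_j$ is zero (because $A_j v = 0$) and the upper-left $(n-1) \times (n-1)$ block is strictly lower triangular by construction; the matrix is therefore strictly lower triangular globally.

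The main obstacle is genuinely the common null vector step, since everything else is bookkeeping about quotients and lifts. I do not anticipate any complication arising from the field $F$ being arbitrary: the only spectral fact invoked is that $0$ is an eigenvalue of any nonzero nilpotent operator, which holds over every field, so no algebraic closure or characteristic assumption is needed.
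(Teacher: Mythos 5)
Your proposal is correct and takes essentially the same two-stage route as the paper: first produce a nonzero vector $v$ in $\bigcap_j \ker(A_j)$, then induct on the dimension by passing to $V/\langle v\rangle$ and appending $v$ as the last basis vector. The only variation is in establishing the common null vector—you induct on $k$ and use that $\bigcap_{j<k}\ker(A_j)$ is $A_k$-invariant, whereas the paper picks a nonzero $v$ maximizing the number of kernels containing it and derives a contradiction; both are standard and equally elementary.
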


\begin{proof} 
First we show that the intersections of kernels of $A_i$, namely $\cap_{i}\ker(A_i)$, is non-empty. 
Assume for the sake of contradiction it is empty, and let $v$ be a non-zero vector which maximizes the number of indices $i$ for which $A_iv=0$. 
Take $j$ with $A_jv \neq 0$, we see that $A_i(A_jv) = A_jA_iv = 0$ for any $i$ with $A_iv = 0$, and $A_j(A_jv) = A_j^2v = 0$. 
Therefore, $A_jv$ is in more kernels $\ker(A_i)$ than $v$ is, contradicting our assumption on $v$. 
Hence, there must exist non-zero $v$ such that for all $i$, $A_iv = 0$.

We now induct on $n$. 
Consider the $(n-1)$-dimensional vector space $V/\{v\}$. 
Since $v\in \cap_{i}\ker(A_i)$, all linear transformations $A_i$ are well-defined on $V/ \{v\}$ and satisfy $A_iA_j = A_jA_i$ and $A_i^2=0$. 
So there exists a basis $v_1+\{v\}, \ldots, v_{n-1}+\{v\}$ of $V/\{v\}$ in which all the $A_i$ are strictly lower triangular. 
Now take the basis $v_1, v_2, \ldots, v_{n-1}, v$ (in that order) on $V$, one can check that all $A_i$ are strictly lower triangular, as desired.
\end{proof}

For any nonzero column vector $v$ over $\mathbb{F}_2$, let $\alpha(v)$ denote the index of its first nonzero component. Set $\alpha(0) = \infty$ as convention.

\begin{lemma} \label{lower-tri-mult}
Suppose $A$ is an $n \times n$ strictly lower triangular matrix over $\mathbb{F}_2$, and $b$ is a nonzero column vector in $\mathbb{F}_2^n$. Then $\alpha(Ab)>\alpha(b)$. 
\end{lemma}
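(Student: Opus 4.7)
The plan is to unfold both definitions and chase indices. Let $k = \alpha(b)$, so that $b_j = 0$ for all $j < k$ and $b_k = 1$. I want to show that $(Ab)_i = 0$ for every $i \leq k$; this forces $\alpha(Ab) > k$ (with the convention $\alpha(0) = \infty$ handling the case $Ab = 0$ automatically).

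For $i \leq k$, write out
\begin{equation*}
(Ab)_i = \sum_{j=1}^{n} A_{ij} b_j.
\end{equation*}
Strict lower triangularity kills every term with $j \geq i$, so the sum reduces to $\sum_{j < i} A_{ij} b_j$. But $j < i \leq k$ forces $j < k = \alpha(b)$, hence $b_j = 0$. Every term vanishes, giving $(Ab)_i = 0$ for $i \leq k$, which is exactly the claim.

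The proof is a one-line index chase; there is no real obstacle, and no need to invoke anything beyond the definitions of $\alpha$, strict lower triangularity, and matrix multiplication.
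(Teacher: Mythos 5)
Your proof is correct and matches the paper's approach; the paper simply states that the claim ``follows directly from the definition of strictly lower triangular,'' and your index chase is exactly the elementary argument being gestured at.
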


\begin{proof}
This follows directly from the definition of strictly lower triangular.
\end{proof}

\Cref{lower-tri-mult} will be used tacitly throughout what follows.

\begin{proposition} \label{twisted-gauss-elim}
Suppose we have a list of tuples $(A_1, b_1), \ldots, (A_n, b_n)$, where each $A_i$ is an $n \times n$ strictly lower triangular matrix over $\mathbb{F}_2$ and each $b_i$ is a column vector in $\mathbb{F}_2^n$. Suppose we can perform the following operations:

\begin{enumerate}

\item ``Swap'': swap the indices of two pairs $(A_i, b_i)$ and $(A_j, b_j)$, or

\item ``Compose'': choose two distinct indices $i$ and $j$, and update $A_i$ to be $A_i+A_j+A_iA_j$ and update $b_i$ to be $b_i+b_j+A_ib_j$.
\end{enumerate}
Then it is possible to perform operations either to reach a state where $b_i=e_i$ for all $i$, or to reach a state where some $b_i$ is $0$. 
\end{proposition}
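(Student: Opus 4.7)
The plan is to adapt Gaussian elimination to the twisted ``Compose'' operation, using the leading-index function $\alpha$ and \Cref{lower-tri-mult}. The key calculation is that composing $i$ with $j$ replaces $b_i$ by $b_i + b_j + A_i b_j$, and $\alpha(A_i b_j) > \alpha(b_j)$. Hence, if $\alpha(b_i) = \alpha(b_j) < \infty$, the leading terms of $b_i$ and $b_j$ cancel and the correction $A_i b_j$ lives strictly below them, so the new $\alpha(b_i)$ is strictly greater than it was before. The algorithm will proceed in two phases.

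Phase 1: while all $b_i$ are nonzero and some two indices share the same $\alpha$, pick such a pair and ``Compose''. The monovariant $\Phi = \sum_{i=1}^{n} \alpha(b_i)$, on the regime where all $b_i$ are nonzero, is bounded above by $n^2$ and strictly increases at each step, so the phase terminates. At termination either some $b_i = 0$, in which case we are done, or $\alpha(b_1), \ldots, \alpha(b_n)$ are $n$ pairwise distinct integers in $\{1, \ldots, n\}$ and hence exactly $\{1, 2, \ldots, n\}$. Applying ``Swap''s to sort, we reach a state with $\alpha(b_i) = i$ for all $i$, so that $b_i = e_i + \sum_{k > i} c_{ik} e_k$ for some $c_{ik} \in \mathbb{F}_2$.

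Phase 2: clear the $c_{ik}$ by iterating $i$ from $n$ down to $1$ in an outer loop and $j$ from $i+1$ up to $n$ in an inner loop. If the current $e_j$-coefficient of $b_i$ equals $1$, then ``Compose'' $i$ with $j$. Because the outer induction maintains $b_j = e_j$, this replaces $b_i$ by $b_i + e_j + A_i e_j$, and $A_i e_j$ is the $j$-th column of a strictly lower triangular matrix and therefore lies in the span of $e_{j+1}, \ldots, e_n$. So the $e_j$-coefficient flips to $0$ without disturbing any previously cleared coefficient in positions $i+1, \ldots, j-1$; coefficients at positions $> j$ may change but will be handled in subsequent inner iterations. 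At the end of the inner loop we have $b_i = e_i$, and since ``Compose'' only modifies $b_i$, the vectors $b_{i+1}, \ldots, b_n$ remain untouched, propagating the outer induction.

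The principal thing to watch is that the twisting term $A_i b_j$ prevents reducing the problem to ordinary $\mathbb{F}_2$ Gaussian elimination. However, strict lower triangularity is preserved by the updates, since $I + (A_i + A_j + A_i A_j) = (I + A_i)(I + A_j)$ is a product of unit lower triangular matrices, so $\alpha(A_i v) > \alpha(v)$ continues to hold throughout both phases. This single estimate drives both the monovariant in Phase 1 and the noninterference of the cancellation steps in Phase 2.
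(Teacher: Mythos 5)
Your proof is correct and follows essentially the same two-phase strategy as the paper: raise $\sum_i \alpha(b_i)$ by Compose steps until the $\alpha(b_i)$ are pairwise distinct (or some $b_i = 0$), Swap to sort, then clear the off-leading coefficients by twisted row reduction. The only cosmetic difference is that in Phase~2 you process $i$ from $n$ down to $1$ so that $b_j = e_j$ before you compose with $j$, whereas the paper observes that the weaker invariant $\alpha(b_j) = j$ already suffices and so handles the indices in any order.
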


\begin{proof}
First note that the new matrix given by ``compose'' is always strictly lower triangular.
Let us assume without loss of generality that we cannot reach $b_i=0$ for any $i$. 
We describe a two-phase procedure which will reach the state $b_i=e_i$ for all $i$.

For the first phase of the process, we will reach a state with $\alpha(b_i)=i$ for all $i$, as follows. There are finitely many reachable states, so we can reach a state maximizing the value of $\sum_{i=1}^n \alpha(b_i)$ over all reachable states. In this state, the values of $\alpha(b_i)$ must be pairwise distinct. To see this, suppose $\alpha(b_i) = \alpha(b_j) = k$ for some $i \neq j$. Then note that $\alpha(b_i+b_j) > k$ and $\alpha(A_ib_j) > k$, so $\alpha(b_i+b_j+A_ib_j) > k = \alpha(b_i)$. 
This means if we compose $(A_i, b_i)$ with $(A_j, b_j)$ to obtain $(A_i+A_j+A_iA_j, b_i+b_j+A_ib_j)$, we will increase the value of $\sum_{i=1}^n \alpha(b_i)$, which is a contradiction. 
Therefore $\alpha(b_1), \ldots, \alpha(b_n)$ are pairwise distinct, so they must equal $1,2, \ldots, n$ in some order. Perform swaps so that $\alpha(b_i)=i$ for all $i$, this completes the first phase.

The second phase of our procedure is simply row reduction. Suppose there exists $b_i\ne e_i$ and let $\alpha(b_i + e_i) = k > i$. Then we can compose $(A_i, b_i)$ with $(A_k, b_k)$ to get the new vector $b_i + b_k + A_ib_k$. Observe that 
\[
\alpha(b_i + e_i + b_k) > k, \alpha(A_ib_k) > k \Rightarrow \alpha(b_i + b_k + A_ib_k + e_i) > k. 
\]
Therefore we can repeat this procedure until $\alpha(b_i + e_i) > n$, which means $b_i = e_i$. Repeating this for all $i$ leads to our desired state.
\end{proof}

Using \Cref{clifford-perm,simul-lower-tri,twisted-gauss-elim,conj-cliff-perm}, we are ready to prove \Cref{lem:clean_pi}, restated below for convenience. 

\lemcleanpi*
\begin{proof}
By multiplying $\pi$ by suitable $X$'s on the left, we assume without loss of generality that $\pi|0^n \ra = |0^n \ra$.

Since $\pi\in\cC_3$ is a permutation, each $\pi X_j \pi^{-1}$ is a Clifford permutation. Then by \Cref{clifford-perm} we can  
write $\pi X_j \pi^{-1}$ as $|v \ra \mapsto |v + A_jv+b_j \ra$ for some matrix $A_j$ and vector $b_j$ over $\mathbb{F}_2$. 
Since $X_j^2 = I$ and $X_iX_j = X_jX_i$, we have $A_j^2 = 0$ and $A_iA_j = A_jA_i$. 
By~\Cref{simul-lower-tri}, these conditions imply that there is some basis in which the $A_j$ are simultaneously strictly lower triangular, 
so we can take some matrix $M$ such that, for all $i$, $MA_iM^{-1}$ is strictly lower triangular. 
Let $\psi$ be the permutation gate $|v \ra \mapsto |Mv \ra$, which is Clifford by~\Cref{clifford-perm}. 
Now $\psi \pi |0^n \ra = |0^n \ra$, and the map $(\psi \pi) X_j (\psi \pi)^{-1}$ sends 
\[
|v \ra \mapsto |M^{-1}v \ra \mapsto |M^{-1}v + A_jM^{-1}v + b_j \ra \mapsto |v + MA_jM^{-1}v + Mb_j \ra. 
\] 
Therefore, by replacing $\pi$ with $\psi \pi$, we can assume without loss of generality that all matrices $A_j$ are strictly lower triangular, and preserve the property that $\pi |0^n \ra = |0^n \ra$.

We now apply~\Cref{twisted-gauss-elim} to reduce $b_i$ to $e_i$. Note that the map 
\begin{equation*}
    \pi X_iX_j\pi^{-1} = (\pi X_i\pi^{-1})(\pi X_j\pi^{-1})
\end{equation*}
sends
\begin{align*}
|v \ra \mapsto |v+A_jv+b_j \ra \mapsto& |(v+A_jv+b_j) + A_i(v+A_jv+b_j)+b_i \ra \\ =& |v+(A_i+A_j+A_iA_j)v+b_i+b_j+A_ib_j \ra,
\end{align*}
which corresponds to the compose operation. 
Therefore, by~\Cref{twisted-gauss-elim}, there exists a sequence of swaps and multiplications which transform the generators $ X_1, \ldots, X_n $ to $ X_1', \ldots, X_n'$, where each $X_i'$ is a product of $X$ gates, 
such that either $\pi X_i'\pi^{-1}$ sends $|v \ra \mapsto |v+A_i'v+e_i \ra$ for all $i$, or there exists $i$ such that $\pi X_i'\pi^{-1}$ sends $|v \ra \mapsto |v+A_i'v\ra$. 
However, the latter case cannot happen, as otherwise $\pi X_i'\pi^{-1}$ sends $\ket{0^n}\mapsto \ket{0^n + A_i'0^n} = \ket{0^n}$, which contradicts $\pi\ket{0^n} = \ket{0^n}$.

Since $X'_1, \ldots, X'_n$ form a basis for $\cX$, by~\Cref{conj-cliff-perm}, there exists a Clifford permutation $\nu$ such that $\nu |0^n \ra = |0^n \ra$ and $\nu X_i \nu^{-1} = X'_i$ for all $i$. Note that $\pi \nu |0^n \ra = |0^n \ra$. Therefore, if we replace $\pi$ with $\pi \nu$, we get $(\pi \nu) X_i (\pi \nu)^{-1} = \pi X'_i \pi^{-1}$, and we preserve $\pi |0^n \ra = |0^n \ra$, which means we can assume without loss of generality that $b_i=e_i$ for all $i$. 
In particular, we have $\pi|e_i \ra = (\pi X_i \pi^{-1}) |0^n \ra = |e_i \ra$. 
\end{proof}

\subsection{Proof of \texorpdfstring{\Cref{thm:c3-staircase}}{Theorem 3.2}}

Now we only need one more ingredient to prove \Cref{thm:c3-staircase}: the polynomial representations of staircase form permutations, whose proof follows intuitively from the circuit.

\begin{lemma}[Characterization of staircase form permutation via the polynomial representation of its inverse]\label{staircase-poly}
A permutation gate $\pi$ is staircase form if and only if, in the polynomial representation of $\pi^{-1}$, for all $k$, the $k$-th coordinate is $a_k$ plus a (possibly empty) sum of terms of the form $a_ia_j$ with $i<j<k$. 

Furthermore, given a permutation $\pi$ written as a staircase form Toffoli circuit, for any $i<j<k$, we have that $\tof_{i,j,k}$ appears in the product if and only if the $k$-th coordinate in the polynomial form of $\pi^{-1}$ contains an $a_ia_j$ term.
\end{lemma}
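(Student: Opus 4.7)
My plan is to prove both directions and the ``furthermore'' clause in a unified way by analyzing the polynomial representation of $\pi^{-1}$ as a Toffoli circuit executed in nonincreasing order of target index. The key observation is that if $\pi$ is written in staircase form, then $\pi^{-1}$ is the same list of Toffolis applied in reverse order (since each Toffoli is self-inverse), so in $\pi^{-1}$ the targets appear in \emph{nonincreasing} order.

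For the forward direction, suppose $\pi$ is in staircase form, and consider an arbitrary Toffoli $\tof_{i,j,k}$ in the product (so $i<j<k$). I would first show that at the moment this Toffoli is applied within $\pi^{-1}$, the values on qubits $i$ and $j$ are still the input values $a_i$ and $a_j$. The reason is that every Toffoli applied earlier in $\pi^{-1}$ corresponds to a Toffoli applied later in $\pi$, and by the staircase property its target must be $\geq k > i,j$; since that earlier Toffoli in $\pi^{-1}$ only modifies its target, neither $a_i$ nor $a_j$ has been changed. Consequently, applying $\tof_{i,j,k}$ at that moment adds exactly the monomial $a_i a_j$ (in the original input variables) to the running value on qubit $k$. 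Summing over all Toffolis with target $k$, and noting that qubit $k$ is never modified by a Toffoli with a different target, the $k$-th coordinate of $\pi^{-1}$ equals
\begin{equation*}
    a_k \;+\; \sum_{\substack{\tof_{i,j,k}\ \text{in}\ \pi \\ i<j<k}} a_i a_j.
\end{equation*}
Because the Toffolis in the circuit are pairwise distinct, each monomial $a_i a_j$ is contributed at most once and therefore does not cancel. This simultaneously establishes that $\pi^{-1}$ has the required polynomial form and that $\tof_{i,j,k}$ appears in the product if and only if $a_i a_j$ appears in the $k$-th coordinate of $\pi^{-1}$, which is precisely the ``furthermore'' statement.

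For the backward direction, suppose the polynomial representation of $\pi^{-1}$ has $k$-th coordinate $a_k + \sum_{(i,j)\in S_k} a_i a_j$ with $S_k \subseteq \{(i,j) : i<j<k\}$. I would define $\mu$ to be the staircase form circuit built by applying, in order of $k=1,2,\ldots,n$ (and in any order within a fixed $k$), exactly the Toffolis $\{\tof_{i,j,k} : (i,j)\in S_k\}$. By construction $\mu$ is in staircase form, so the forward direction applied to $\mu$ shows that $\mu^{-1}$ has the same polynomial representation as $\pi^{-1}$. By the uniqueness of the polynomial representation of a function $\mathbb{F}_2^n \to \mathbb{F}_2$ (\Cref{poly-rep}), each coordinate function of $\mu^{-1}$ equals that of $\pi^{-1}$, hence $\mu^{-1} = \pi^{-1}$ and thus $\pi = \mu$ is in staircase form.

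The only real subtlety I anticipate is the bookkeeping in the forward direction: one must verify that, in $\pi^{-1}$, no previously applied Toffoli has modified the control qubits of the Toffoli currently being applied. This is where the staircase condition is used essentially (both $i<j<k$ for each gate and the monotonicity of targets), and it is also what guarantees that the contributions from distinct Toffolis do not interact or cancel, so that each $a_i a_j$ term survives. Once this commutation-like fact is in hand, the rest of the argument is immediate from \Cref{poly-rep}.
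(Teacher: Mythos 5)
Your proposal is correct and follows essentially the same route as the paper: reverse the staircase circuit to get $\pi^{-1}$, observe that when each $\tof_{i,j,k}$ is applied in $\pi^{-1}$ its controls have not yet been targeted (since earlier gates in $\pi^{-1}$ have target $\ge k > j > i$), so each gate contributes exactly the distinct monomial $a_i a_j$ to output $k$. You simply spell out the "easily shown" parts (non-cancellation, the converse via constructing $\mu$ and invoking uniqueness of polynomial representations) that the paper leaves implicit.
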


\begin{proof}
If $\pi$ is a product of Toffoli gates in staircase form, then $\pi^{-1}$ is the product of those same Toffoli gates in reverse order.
In that product, whenever a gate is applied, its controls have never been targeted so far, and thus are unchanged from the input. This means that, when performing the gates of $\pi^{-1}$ in order, a gate $\tof_{i,j,k}$ adds a term of $a_ia_j$ to the $k$-th coordinate of the output. All parts of the desired result now can be easily shown.
\end{proof}

We are now ready to prove \Cref{thm:c3-staircase}, restated below for convenience. 

\mainthmstaircase*
\begin{proof}
By \Cref{lem:clean_pi}, we can assume without loss of generality that $\pi \ket{0^n} = \ket{0^n}$ and for each $i\in [n]$, 
$\pi \ket{e_i} = \ket{e_i}$ and $\pi X_i \pi^{-1}$ is given by $|v \ra \mapsto |v + A_i v+e_i \ra$ for some strictly lower triangular matrix $A_i$ over $\mathbb{F}_2$.

We now show that for any $v$, if $\pi |v \ra = |w \ra$, then $\alpha(v)=\alpha(w)$. Suppose for the sake of contradiction that this is false. We know it is true for $v=0^n$ or $e_n$, so $\alpha(v) < n$ in any counterexample. Take the largest $k$ for which there exists a counterexample with $\alpha(v)=k$. We know $\pi|e_k \ra= |e_k \ra$, so $v \neq e_k$. Let $u\ne 0^n$ be such that $\pi |v+e_k \ra = |u \ra$. Then $\alpha(v+e_k) > k$, so $\alpha(u) = \alpha(v+e_k)= m$ for some $m > k$. 
We have 
\[
|w \ra = \pi |v \ra = \pi X_k |v+e_k \ra = \pi X_k \pi^{-1} |u \ra = \ket{u+A_ku+e_k}.
\] 
Since $\alpha(u)=m>k$, and $\alpha(A_ku) >m >k$, we must have $\alpha(w) = \alpha(e_k+u+A_ku) = k = \alpha(v)$, which is a contradiction. 

We now build a polynomial representation (see~\Cref{def:polynomial_rep}) for $\pi^{-1}$. By~\Cref{ck-perm-poly}, every coordinate of $\pi^{-1}$ has degree at most $2$. 
Since 
$\pi^{-1}|0^n \ra = |0^n \ra$ and 
$\pi^{-1}|e_i \ra = |e_i \ra$ for all $i$, we have that the constant term of every coordinate is $0$ and the
linear term of the $i$-th coordinate is $a_i$ for all $i$.
Thus we can write $\pi^{-1}$ as 
\[
|a_1, \ldots, a_n \ra \mapsto |a_1+q_1, \ldots, a_n+q_n \ra,
\]
where each $q_k$ is a sum of some (possibly zero) monomials of the form $a_ia_j$ with $i<j$.

For any $i<j$, we have $\pi^{-1}|e_i+e_j \ra = |e_i+e_j+w_{ij} \ra$, where $w_{ij}$ has ones exactly at the positions $k$ for which $q_k$ contains the monomial $a_ia_j$. 
Let $v$ be such that $\pi|e_j+w_{ij} \ra = |v \ra$, we have
\begin{align*}
    \pi X_i \pi^{-1} |v \ra 
    &= \pi X_i |e_j+w_{ij} \ra = \pi |e_i+e_j+w_{ij} \ra = |e_i+e_j \ra \\
    &= \ket{v+A_iv+e_i},
\end{align*}
which means $v+A_iv = e_j$, and $j=\alpha(v+A_iv)=\alpha(v)$. 
Since $\pi|e_j+w_{ij} \ra = |v \ra$, we must also have $\alpha(e_j+w_{ij}) = \alpha(v)$. Thus $\alpha(e_j+w_{ij})=j$, which means $\alpha(w_{ij})>j$. 
In other words, any appearance of an $a_ia_j$ term can only be in a $q_k$ with $k>j$. Then \Cref{staircase-poly} implies that $\pi$ is in staircase form.

Thus, unraveling our without-loss-of-generality assumptions, we can write $\pi = \phi_1\mu\phi_2$ for Clifford permutations $\phi_1$ and $\phi_2$ and a staircase form permutation $\mu$. Finally, $\mu=\phi_1^{-1}\pi\phi_2^{-1}$ is in $\cC_3$ by \Cref{prop:cliff_hier}, since $\phi_1^{-1}, \phi_2^{-1} \in \cC_2$ and $\pi \in \cC_3$.
\end{proof}

\subsection{A consequence of \texorpdfstring{\Cref{thm:c3-staircase}}{Theorem 3.2}}

Below, in \Cref{cor:strengthen_bs}, we strengthen the result of Beigi and Shor \cite{beigi-shor}, who showed that any $\cC_3$ element is generalized semi-Clifford (restated in \Cref{c3-is-gsc}). Our corollary sharpens this by showing that the underlying permutation can be taken to be in staircase form.

\begin{lemma}
\label{pi-d-in-c3}
For any permutation gate $\pi$ and diagonal gate $d$, if $\pi d \in \cC_3$, then $\pi \in \cC_3$.
\end{lemma}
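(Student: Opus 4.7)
The plan is to verify, directly from the hypothesis $\pi d\in\cC_3$, that $\pi P\pi^{-1}\in\cC_2$ for each Pauli generator $P=X_i,Z_i$; this suffices because $\cC_2$ is a group and Pauli operators are generated (up to phase) by these. The $Z_i$ case is essentially automatic: $d$ commutes with $Z_i$, so $\pi Z_i\pi^{-1}=(\pi d)Z_i(\pi d)^{-1}$, which lies in $\cC_2$ by hypothesis.

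The substantive part concerns the $X_i$ case. A short calculation shows that $dX_id^{-1}=X_iD_i$, where $D_i$ is the diagonal unitary whose $a$-th entry is $d(a\oplus e_i)/d(a)$. Substituting into the hypothesis and regrouping as a permutation times a diagonal gives
\begin{equation*}
(\pi X_i\pi^{-1})\,(\pi D_i\pi^{-1})\in\cC_2.
\end{equation*}
Writing $\sigma:=\pi X_i\pi^{-1}$ (a permutation) and $E:=\pi D_i\pi^{-1}$ (a diagonal unitary), the task collapses to a \emph{sublemma}: if $\sigma E\in\cC_2$ for any permutation $\sigma$ and any diagonal unitary $E$, then $\sigma\in\cC_2$.

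I would prove the sublemma by the same two-case pattern. Conjugating by $Z_j$ passes through $E$ cleanly and yields $\sigma Z_j\sigma^{-1}\in\cP_n$. Conjugating by $X_j$ produces $(\sigma X_j\sigma^{-1})(\sigma F\sigma^{-1})\in\cP_n$ for some diagonal unitary $F$; the left-hand side is a monomial matrix whose permutation part is exactly $\sigma X_j\sigma^{-1}$. The decisive structural fact is that every element of $\cP_n$, when written as (permutation)$\times$(diagonal), has its permutation factor in $\cX$, because a Pauli $cX_SZ_T$ has nonzero entries precisely at the positions prescribed by $X_S$. Matching permutation parts therefore forces $\sigma X_j\sigma^{-1}\in\cX\subset\cP_n$, and together with the $Z_j$ case this gives $\sigma\in\cC_2$.

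Applying the sublemma for each $i$ shows $\pi X_i\pi^{-1}\in\cC_2$ for all $i$; combined with the $Z_i$ case, this establishes $\pi\in\cC_3$. The only delicate point, and hence the main obstacle, is the ``monomial parts are unique'' step inside the sublemma — namely that the permutation factor of a Pauli is always a pure $X_S$, so no signs or phases can leak onto $\sigma X_j\sigma^{-1}$. Everything else is a short algebraic manipulation using commutation of diagonals with $Z_j$ and the simple identity $dX_id^{-1}=X_iD_i$.
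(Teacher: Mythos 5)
Your proof is correct, and it takes a genuinely different route from the paper, which offers no argument of its own but simply cites Corollary A.11.2 of Anderson's paper. Your argument is a self-contained two-stage descent through the hierarchy: conjugating $\pi d$ by $Z_i$ hands you $\pi Z_i\pi^{-1}\in\cC_2$ for free, while conjugating by $X_i$ produces a fresh permutation-times-diagonal $(\pi X_i\pi^{-1})(\pi D_i\pi^{-1})\in\cC_2$, and your sublemma repeats the same split one level lower, using the rigidity of the monomial decomposition of a Pauli to force $\sigma X_j\sigma^{-1}\in\cX\subset\cP_n$. All the steps check out: $D_i$ and hence $\pi D_i\pi^{-1}$ are diagonal unitaries, $\pi X_i\pi^{-1}$ is a permutation, and the decisive uniqueness fact you isolate (a $0/1$ permutation matrix times an invertible diagonal decomposes uniquely, so the permutation factor of a Pauli must be $X^u$) is precisely what the paper records as \Cref{pauli-is-pd}, so you could cite that directly instead of rederiving it. Two observations worth adding: first, your two steps are an induction that has been unrolled, and the identical reasoning proves the stronger statement that $\pi d\in\cC_k$ implies $\pi\in\cC_k$ for every $k\geq 1$, with the base case $k=1$ being exactly the monomial-rigidity fact; second, this generalization is presumably closer to what Anderson's cited corollary actually asserts, so your argument recovers the full content being invoked rather than only the special case stated in the lemma. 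In short, where the paper outsources, you supply a clean first-principles proof, and it is sound.
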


\begin{proof}
This is a special case of Corollary A.11.2 from \cite{anderson}.
\end{proof}

\begin{corollary}\label{cor:strengthen_bs}
For any $\psi \in \cC_3$, there exist $\phi_1, \phi_2 \in \cC_2$, a diagonal gate $d$, and a staircase form permutation $\pi \in \cC_3$ with $\psi = \phi_1 \pi d \phi_2$.
\end{corollary}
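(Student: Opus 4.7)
The plan is to chain three previously established results. First, applying Beigi and Shor's theorem (\Cref{c3-is-gsc}) to $\psi \in \cC_3$ yields a decomposition $\psi = \phi_1' \pi' d' \phi_2'$ with Cliffords $\phi_1', \phi_2'$, a permutation $\pi'$, and a diagonal gate $d'$. Multiplying by the Clifford inverses on each side and invoking \Cref{prop:cliff_hier}, I get $\pi' d' = \phi_1'^{-1} \psi \phi_2'^{-1} \in \cC_3$, and then Anderson's \Cref{pi-d-in-c3} strips off $d'$ to yield $\pi' \in \cC_3$.

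Next, I would apply the staircase-form theorem (\Cref{thm:c3-staircase}) to the permutation $\pi' \in \cC_3$, producing Clifford permutations $\phi_1'', \phi_2''$ and a staircase form permutation $\mu \in \cC_3$ with $\pi' = \phi_1'' \mu \phi_2''$. Substituting back gives
\[
\psi = \phi_1' \phi_1'' \, \mu \, \phi_2'' d' \phi_2'.
\]

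To reach the desired form, I need a diagonal factor immediately to the right of the staircase permutation $\mu$. The key observation is that $\phi_2''$ is a Clifford \emph{permutation}, and conjugating any diagonal gate by a permutation merely permutes its diagonal entries, so $d := \phi_2'' d' (\phi_2'')^{-1}$ remains diagonal. Therefore
\[
\psi = (\phi_1' \phi_1'') \, \mu \, d \, (\phi_2'' \phi_2'),
\]
and setting $\phi_1 := \phi_1' \phi_1''$, $\pi := \mu$, and $\phi_2 := \phi_2'' \phi_2'$ (the outer factors being products of Clifford gates, hence Clifford) gives the claimed decomposition. There is no genuine obstacle here; the proof is essentially an organizational exercise, with the only subtlety being the use of conjugation by the permutation $\phi_2''$ to commute it past $d'$ rather than naively absorbing it into $\phi_2'$, which is what preserves diagonality of the middle factor.
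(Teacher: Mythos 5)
Your proof is correct and follows essentially the same route as the paper: invoke \Cref{c3-is-gsc}, strip the diagonal off with \Cref{pi-d-in-c3}, apply \Cref{thm:c3-staircase}, and then conjugate the diagonal factor by the Clifford permutation $\phi_2''$ to preserve diagonality while regrouping. The notation differs but every step matches the paper's argument.
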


\begin{proof}
By \Cref{c3-is-gsc}, we can write $\psi = \phi_3 \pi_0 d_0 \phi_4$ for $\phi_3, \phi_4 \in \cC_3$, a permutation gate $\pi_0$, and a diagonal gate $d_0$. Now $\pi_0 d_0 = \phi_3^{-1} \psi \phi_4^{-1} \in \cC_3$, so $\pi_0 \in \cC_3$ by \Cref{pi-d-in-c3}. Then, by \Cref{thm:c3-staircase}, we can write $\pi_0 = \phi_5 \mu \phi_6$ for Clifford permutations $\phi_5, \phi_6$ and a staircase form permutation $\mu \in \cC_3$. Now $\phi_6 d_0 \phi_6^{-1}$ is diagonal as $\phi_6$ is a permutation and $d_0$ is diagonal. Let $\phi_1 = \phi_3 \phi_5$, $\pi = \mu$, $d = \phi_6 d_0 \phi_6^{-1}$, $\phi_2 = \phi_6 \phi_4$. Then $\phi_1, \phi_2 \in \cC_2$, $\pi$ is a staircase form permutation in $\cC_3$, and $d$ is diagonal, and 
\begin{align*}
    \phi_1 \pi d \phi_2 &= \phi_3 \phi_5 \mu \phi_6 d_0 \phi_6^{-1} \phi_6 \phi_4 = \phi_3 \phi_5 \mu \phi_6 d_0 \phi_4 = \phi_3 \pi_0 d_0 \phi_4 = \psi. \qedhere
\end{align*}
\end{proof}

\section{Descending multiplications}\label{sec-descending-mult}

As we proved that every permutation in $\cC_3$ can be written in staircase form, in this section we show sufficient conditions for a staircase form permutation to be in $\cC_3$.
Previously we have been working with $\mathbb{F}_2^n$ as a vector space without any multiplicative structure. 
Our next definition captures bilinear products defined over $\mathbb{F}_2^n$ which encode staircase form Toffoli circuits in $\cC_3$.

\begin{definition}[Descending multiplications]\label{def:descending-mult}
    A map $\mathbb{F}_2^n \times \mathbb{F}_2^n \rightarrow \mathbb{F}_2^n$, denoted by juxtaposition, is called a \emph{descending multiplication} if
    \begin{itemize}
        \item it is linear in each coordinate (distributive), associative, and commutative,
        \item for all $i \in [n]$, we have $e_ie_i = e_i^2=0$, and
        \item for all $i < j \in [n]$, we have $e_ie_j$ is in the span of $\{e_k: k> j\}$.
    \end{itemize}
\end{definition}

Observe that, for any descending multiplication, we have $v^2=0$ for all $v$. 
Henceforth, for any permutation gate $\pi$, we will also interpret $\pi$ as a permutation of $\mathbb{F}_2^n$, so that whenever $\pi|v \ra = |w \ra$, we can write $\pi(v)=w$.
We will also write $0$ and $0^n$ interchangeably. 

We now state the main theorem of this section. 

\begin{theorem}[A bijection between descending multiplications and $\cC_3$ permutations in staircase form]\label{thm:perm-is-mult}
There is a one-to-one correspondence of descending multiplications to $\cC_3$ permutations in staircase form, which we describe as follows. 
For each descending multiplication, the corresponding $\cC_3$ permutation $\pi$ is given by
\begin{equation}\label{mult-to-pi}
    \forall S\subseteq [n], \quad \pi \ket{\sum_{i \in S} e_i} = \ket{\sum_{T \subseteq S, T \neq \varnothing} \prod_{i \in T} e_i}. 
\end{equation}
Each permutation $\pi \in \cC_3$ in staircase form induces a multiplication operation where each $e_ie_j$ is given by
\begin{equation}\label{pi-to-mult}
    \pi \ket{e_i+e_j} = \ket{e_i+e_j+e_ie_j}.
\end{equation}
The multiplication is then extended linearly.
\end{theorem}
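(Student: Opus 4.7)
The plan is to establish the two constructions in turn and then verify they compose to the identity. For the forward direction, given a descending multiplication I would define $\pi$ by \eqref{mult-to-pi} and interpret the formula in the augmented unital ring $\mathbb F_2 \oplus \mathbb F_2^n$, where it becomes $\pi(u) + 1 = \prod_{i\in\mathrm{supp}(u)}(1+e_i)$. Since $e_l^2 = 0$ forces $(1+e_l)^2 = 1$, a short algebraic manipulation gives the key recurrence
\begin{equation*}
    \pi(u + e_l) = \pi(u) + e_l \pi(u) + e_l,
\end{equation*}
valid whether or not $l \in \mathrm{supp}(u)$. Injectivity of $\pi$ follows by comparing two preimages at the smallest coordinate where they disagree and using the descending property to kill higher-order contributions there. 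From the recurrence, $\pi X_l \pi^{-1}$ acts as $|v\rangle \mapsto |v + A_l v + e_l\rangle$ with $A_l : v \mapsto e_l v$ linear; by \Cref{clifford-perm} this is a Clifford permutation, so $\pi X_l \pi^{-1} \in \cC_2$.

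To upgrade this to $\pi \in \cC_3$ I also need $\pi Z_l \pi^{-1} \in \cC_2$, which by \Cref{diagonal-in-ch} reduces to showing that $\pi^{-1}_l$ has degree at most $2$. This degree bound is the main obstacle of the forward direction; I would prove it by induction on $l$, unraveling $\pi^{-1}_l(v)$ through the recursion and verifying that the apparent higher-degree contributions in the resulting polynomial cancel. These cancellations are forced by associativity: the identity $(e_ie_j)e_k = e_i(e_je_k)$ rewrites each higher-order coefficient $(e_{i_1}\cdots e_{i_m})_l$ as a sum of products of pairwise coefficients, and these combine exactly to annihilate the would-be degree-$3$-and-above contributions arising from substituting $u_i = v_i + q_i(v)$ into $\sum_T c_T \prod_{i\in T} u_i$. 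Once $\pi \in \cC_3$ is in hand, the descending property makes each $A_l$ strictly lower triangular and $\pi$ already fixes $|0^n\rangle$ and each $|e_l\rangle$, so $\pi$ matches the conclusion of \Cref{lem:clean_pi}; running the remainder of \Cref{thm:c3-staircase}'s proof from this point then exhibits $\pi$ as a staircase form Toffoli circuit.

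For the reverse direction, given $\pi \in \cC_3$ in staircase form I would set $e_ie_j := \pi(e_i + e_j) - e_i - e_j$ and extend bilinearly. Commutativity is automatic, and $e_l^2 = 0$ follows from $\pi|0^n\rangle = |0^n\rangle$ (each Toffoli fixes $|0^n\rangle$). For the descending property, I would track the staircase Toffoli circuit applied to $|e_i + e_j\rangle$ for $i < j$: a short induction on the Toffolis processed shows that the $1$-valued positions in $[1,j]$ remain exactly $\{i,j\}$ throughout, because each Toffoli in staircase form has target strictly above its two controls, so any Toffoli with target in $[1,j]$ would need two $1$-valued controls in $[1,j-1]$, and only one such position (namely $i$) is ever available. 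Hence $e_ie_j \in \mathrm{span}\{e_k : k > j\}$. For associativity I would use that $\pi \in \cC_3$ makes $\pi X_l \pi^{-1}|v\rangle = |v + A_l v + e_l\rangle$ a Clifford permutation, with $A_l e_k = e_l e_k$ verified by evaluating on $|0\rangle$ and $|e_k\rangle$; the commutation $X_i X_j = X_j X_i$ then forces $A_iA_j = A_jA_i$, and setting $v = e_k$ combined with commutativity yields $(e_ie_j)e_k = e_k(e_ie_j) = e_i(e_ke_j) = e_i(e_je_k)$.

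Finally, the two constructions are mutually inverse. Starting from a multiplication, $\pi(e_i + e_j) = e_i + e_j + e_ie_j$ is \eqref{mult-to-pi} specialized to $S = \{i, j\}$, so \eqref{pi-to-mult} recovers the original. Starting from $\pi$ in staircase form and letting $\pi'$ be the permutation produced from the induced multiplication via \eqref{mult-to-pi}, both $\pi$ and $\pi'$ satisfy the same recurrence $\alpha(u+e_l) = \alpha(u) + e_l\alpha(u) + e_l$ and fix $|0^n\rangle$, so induction on Hamming weight gives $\pi = \pi'$.
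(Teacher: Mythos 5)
Most of your proposal is the paper's own route in light disguise: your recurrence $\pi(u+e_l)=\pi(u)+e_l\pi(u)+e_l$ is the special case $w=e_l$ of the paper's product rule \Cref{pi-v-w}, the identification $\pi X_l\pi^{-1}:v\mapsto v+e_lv+e_l$ via \Cref{clifford-perm}, the associativity argument from commuting affine conjugates ($A_iA_j=A_jA_i$ with $A_le_k=e_le_k$), the descending property read off from the staircase circuit, and the weight-induction showing the two constructions are mutually inverse all match the paper's \Cref{pi-to-mult-well-defined,mult-to-pi-to-mult,pi-to-mult-to-pi}; those parts are sound, and replacing the paper's \Cref{staircase-conditions} by an appeal to \Cref{lem:clean_pi} plus the tail of the proof of \Cref{thm:c3-staircase} is a legitimate variant.

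The gap is in the step you yourself call the main obstacle: showing every coordinate of $\pi^{-1}$ has degree at most $2$, which is what gives $\pi Z_l\pi^{-1}\in\cC_2$ via \Cref{diagonal-in-ch} and is the heart of \Cref{mult-to-pi-well-defined}. Your plan is to induct on $l$ with the hypothesis ``$\pi^{-1}_i$ is quadratic for $i<l$,'' substitute $u_i=v_i+q_i(v)$ into $u_l=v_l+\sum_{T\subseteq[l-1],\,|T|\geq 2}c_{T,l}\prod_{i\in T}u_i$, and assert that associativity makes every degree-$\geq 3$ contribution cancel. As set up, this induction does not close: knowing only that the $q_i$ are quadratic with unspecified coefficients gives you nothing to check the cancellation against, since the cancellation is an identity relating the structure constants $c_{T,l}$ to the actual coefficients of the $q_i$. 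So the inductive hypothesis must identify those coefficients, and the asserted identity (``these combine exactly to annihilate...'') is precisely the nontrivial content still requiring proof. The paper closes exactly this point by strengthening the statement to an explicit preimage formula: with $v_{ij}$ defined by $\pi(e_i+e_j+v_{ij})=e_i+e_j$, so that $\pi(v_{ij})=e_ie_j$ (\Cref{eq:pi(vij)}), one proves $\pi\bigl(\sum_{i\in S}e_i+\sum_{i,j\in S;\,i<j}v_{ij}\bigr)=\sum_{i\in S}e_i$ for every $S$ (\Cref{pi-inverse-formula}) by induction on $|S|$ using the product rule; the quadratic bound then falls out with no coordinatewise cancellation bookkeeping. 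If you replace your cancellation claim with this strengthened induction, the forward direction, and hence your whole proposal, goes through.
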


We break the proof of this theorem into several propositions. 

\begin{restatable}{proposition}{DMmulttopiwelldefined}
\label{mult-to-pi-well-defined}
For any descending multiplication, the resulting $\pi$ from \Cref{mult-to-pi} is indeed a staircase form permutation in $\cC_3$.
\end{restatable}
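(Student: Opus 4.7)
The plan is to adjoin a multiplicative identity $1$ to the descending multiplication, obtaining an $\mathbb{F}_2$-algebra $A$ of dimension $n+1$. In $A$ the defining formula \Cref{mult-to-pi} becomes $1 + \pi(v) = \prod_{i : a_i = 1}(1 + e_i)$, and since $u^2 = 0$ for every $u \in \mathbb{F}_2^n$ we have $(1 + u)^2 = 1$, so $G := \{1 + u : u \in \mathbb{F}_2^n\}$ is an abelian group under algebra multiplication. The identity $(1 + a_i e_i)(1 + a'_i e_i) = 1 + (a_i + a'_i)e_i$ exhibits the map $a \mapsto \prod_i(1 + a_i e_i)$ as a group homomorphism from $(\mathbb{F}_2^n, +)$ into $G$. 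For injectivity (which gives that $\pi$ is a permutation), I would suppose $\prod_{i \in S}(1 + e_i) = 1$ for some nonempty $S$, let $m := \min S$, and use the descending property to write the product as $(1 + e_m)(1 + u)$ with $u \in \mathrm{span}\{e_l : l > m\}$; computing the $e_m$-coefficient of $(1 + e_m)(1 + u) = 1 + e_m + u + e_m u$ then yields $1$, contradicting the equation.

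For $\pi \in \cC_3$ it suffices to show $\pi X_j \pi^{-1}, \pi Z_j \pi^{-1} \in \cC_2$ for every $j$. The product identity gives the functional equation $\pi(v + v') = \pi(v) + \pi(v') + \pi(v)\pi(v')$, so $\pi X_j \pi^{-1}$ sends $b \mapsto b + e_j + e_j b = (I + L_j) b + e_j$ with linear part $L_j(b) := e_j b$. Since $e_j^2 = 0$ yields $L_j^2 = 0$, we get $(I + L_j)^2 = I$, so $I + L_j$ is invertible and \Cref{clifford-perm} identifies $\pi X_j \pi^{-1}$ as a Clifford permutation. The operator $\pi Z_j \pi^{-1}$ is diagonal with entries $(-1)^{(\pi^{-1})_j(b)}$, so by \Cref{diagonal-in-ch} its membership in $\cC_2$ reduces to the claim that $(\pi^{-1})_j$ has degree at most $2$.

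The degree bound is the main obstacle, and I would establish it by strong induction on $k$, simultaneously producing the staircase form. Writing $\pi(a)_k = a_k + p_k(a_1, \ldots, a_{k-1})$ where $p_k(a) = \sum_{T \subseteq [k-1],\, |T| \ge 2} (E_T)_k \prod_{i \in T} a_i$ and $E_T := \prod_{i \in T} e_i$, the equation inverts to $a_k = b_k + p_k(a_1, \ldots, a_{k-1})$. Substituting the inductive formulas $a_i = b_i + q_i(b)$, with each $q_i$ of degree at most $2$ and supported on monomials $b_{i'}b_{j'}$ with $i' < j' < i$, yields an a priori high-degree polynomial in $b$. The claim is that after reduction modulo $b_l^2 - b_l$, every monomial of degree at least $3$ cancels. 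The cancellation is driven by associativity: for $|T| \ge 3$, the coefficient $(E_T)_k$ decomposes via $E_T = E_{T_1} \cdot E_{T_2}$ into sums of products of pair coefficients $(e_i e_j)_l$, and one checks that each target degree-$\ge 3$ monomial $b_{l_1} \cdots b_{l_r}$ arises in the expansion with an even total multiplicity, hence vanishes in $\mathbb{F}_2$. The resulting form $(\pi^{-1})_k(b) = b_k + \sum_{i < j < k} d_{ij}^k b_i b_j$ combined with \Cref{staircase-poly} identifies $\pi$ with the staircase circuit $\prod_k \prod_{i < j < k,\, d_{ij}^k = 1} \tof_{i, j, k}$, completing the proof. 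I expect the combinatorial bookkeeping matching the various partitions of a high-degree monomial to the corresponding substitution contributions to be the most delicate step.
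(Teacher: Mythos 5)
Your algebraic reformulation up through the Clifford-conjugation of $X_j$ is correct and is essentially the same content as the paper's helper lemma giving the functional equation $\pi(v+w)=\pi(v)+\pi(w)+\pi(v)\pi(w)$; adjoining a unit and working in the group $G=\{1+u\}$ is a slightly slicker way to package the group-homomorphism structure, and your injectivity argument via the $e_m$-coefficient is sound.

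The genuine gap is the degree bound for $(\pi^{-1})_j$, which is the heart of the proposition. You outline an inductive substitution $a_i = b_i + q_i(b)$ into $p_k$, and assert that after reducing modulo $b_l^2-b_l$ all monomials of degree $\geq 3$ cancel by an associativity-driven parity argument. You do not actually carry out that cancellation, and you concede the bookkeeping is the "most delicate step." As written this is not a proof; it is a plan whose central step is missing, and the step is nontrivial: the expansion of $\prod_{i\in T}(b_i + q_i(b))$ over $T\subseteq[k-1]$, $|T|\geq 2$, produces a large number of degree-$\geq 3$ monomials whose claimed pairwise matching is exactly what must be proved. Moreover, your plan for staircase form is coupled to this same unproved cancellation, so that part is also not established.

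The paper routes around the cancellation entirely. It never inverts $\pi$ by substitution; instead it defines $v_{ij}$ by $\pi(e_i+e_j+v_{ij})=e_i+e_j$ and proves by induction on $|S|$ the closed-form inverse
\[
\pi\Bigl(\sum_{i\in S}e_i + \sum_{\substack{i,j\in S \\ i<j}}v_{ij}\Bigr)=\sum_{i\in S}e_i,
\]
using the functional equation and the identity $\pi(v_{ij})=e_ie_j$. Reading off the $k$-th coordinate immediately gives $(\pi^{-1})_k = a_k + \sum_{i<j}(v_{ij})_k\,a_ia_j$, which is visibly of degree at most $2$, with no cancellation lemma needed. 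Staircase form, in the paper's proof, is handled separately and earlier via \Cref{staircase-conditions}, directly from the descending-multiplication axioms, rather than being derived from the inversion. If you want to salvage your route, you must supply the full cancellation argument (or, more profitably, replace it with the explicit $v_{ij}$ construction, which your group-algebra framing would actually express cleanly as $1+\pi(v_{ij}) = (1+\pi(e_i+e_j))^{-1}(1+e_i+e_j) = (1+e_i)(1+e_j)$ hence $\pi(v_{ij}) = e_ie_j$).
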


\begin{restatable}{proposition}{DMpitomultwelldefined}
\label{pi-to-mult-well-defined}
For any staircase form permutation $\pi$ in $\cC_3$, the resulting operation from \Cref{pi-to-mult} is indeed a descending multiplication.
\end{restatable}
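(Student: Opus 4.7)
The plan is to verify each axiom of \Cref{def:descending-mult} for the product defined on basis pairs by $e_ie_j := \pi(e_i+e_j)+e_i+e_j$ (with the convention $e_ie_i=0$) and extended bilinearly to $\mathbb{F}_2^n\times\mathbb{F}_2^n$. First, observe that $\pi(0^n)=0^n$ since a staircase circuit is a product of Toffoli gates and each Toffoli fixes $\ket{0^n}$, so the conventions are internally consistent. Bilinearity is automatic from the extension, and commutativity $e_ie_j=e_je_i$ is immediate from the symmetry of $e_i+e_j$. Only the descending property and associativity require work.

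For the descending property, fix $i<j$ and execute the staircase circuit on the input $e_i+e_j$, maintaining the invariant that the current state $s$ agrees with $e_i+e_j$ on every coordinate $k\leq j$. This is trivial initially. A gate $\tof_{p,q,r}$ with $p<q<r$ flips coordinate $r$ only when $s_p=s_q=1$; if $r\leq j$ then $p<q\leq j$, and under the invariant the only positions in $[1,j]$ carrying a $1$ are $i$ and $j$, forcing $(p,q)=(i,j)$ and hence $r>j$, a contradiction. Thus gates with target $\leq j$ never alter $s$, while gates with target $>j$ only modify coordinates strictly above $j$, so the invariant persists. Reading off the final state gives $\pi(e_i+e_j)-(e_i+e_j)\in\mathrm{span}\{e_k:k>j\}$, i.e., $e_ie_j$ lies in this span.

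Associativity is the main obstacle, and I would extract it from the Clifford constraint on $\pi$. By \Cref{clifford-perm}, each $Y_i:=\pi X_i\pi^{-1}\in\cC_2$ has the affine form $v\mapsto v+M_iv+c_i$, and $c_i=Y_i\ket{0^n}=\pi\ket{e_i}=\ket{e_i}$ (the same control-never-fires argument used in the previous paragraph shows a staircase $\pi$ fixes every $\ket{e_j}$). Evaluating $Y_i\ket{e_j}=\pi\ket{e_i+e_j}=\ket{e_i+e_j+e_ie_j}$ identifies $M_ie_j=e_ie_j$, so by linearity $M_iv=e_iv$ for every $v$. The identity $X_iX_j=X_jX_i$ forces $Y_iY_j=Y_jY_i$; expanding each composition via the affine form and cancelling the terms that agree by commutativity collapses the equation to $M_iM_j=M_jM_i$. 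Applying both sides to $e_l$ and substituting $M_re_s=e_re_s$ yields $e_i(e_je_l)=e_j(e_ie_l)$ on basis vectors. A single commutativity manipulation, $(e_ie_j)e_l=e_l(e_ie_j)=e_i(e_le_j)=e_i(e_je_l)$ (the middle step applies the derived relation to the triple $(l,i,j)$), then delivers associativity on basis vectors, which extends bilinearly to arbitrary $x,y,z\in\mathbb{F}_2^n$. The delicate point is precisely this bootstrap: associativity is not visible from $\pi$ on the pairs $e_i+e_j$ alone, but emerges from combining the Clifford affine structure of $Y_iY_j$ with the commutativity of the Pauli $X$'s.
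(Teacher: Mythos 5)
Your proof is correct and follows essentially the same strategy as the paper's: bilinearity and commutativity are formal, $e_i^2=0$ comes from $\pi$ fixing $\ket{0^n}$, the descending property follows from the staircase structure (you argue it directly on the circuit where the paper cites its \Cref{staircase-conditions}, but it is the same computation), and the crux — associativity — is obtained exactly as in the paper by writing $\pi X_i\pi^{-1}$ in affine Clifford form $v\mapsto v+M_iv+e_i$, identifying $M_iv=e_iv$, and extracting $M_iM_j=M_jM_i$ from $[X_i,X_j]=0$.
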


\begin{restatable}{proposition}{DMmulttopitomult}
\label{mult-to-pi-to-mult}
Given a descending multiplication, applying the procedure in \Cref{mult-to-pi} to get a permutation $\pi$, then applying the procedure in \Cref{pi-to-mult} to that permutation, yields the original multiplication.
\end{restatable}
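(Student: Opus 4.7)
The plan is to observe that any descending multiplication is completely pinned down by its values on ordered pairs of distinct standard basis vectors, and then to read off those values directly from the permutation $\pi$ constructed via \Cref{mult-to-pi}.

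First I would remind the reader that a descending multiplication $\cdot$ is bilinear, commutative, and satisfies $e_i \cdot e_i = 0$ for every $i$. Consequently, once the values $\{e_i \cdot e_j : 1 \leq i < j \leq n\}$ are specified, the full multiplication is determined: for any $u = \sum_i u_i e_i$ and $v = \sum_j v_j e_j$ in $\mathbb{F}_2^n$, bilinearity gives
\begin{equation*}
    u \cdot v \;=\; \sum_{i,j} u_i v_j \, (e_i \cdot e_j) \;=\; \sum_{i < j} (u_i v_j + u_j v_i)\, (e_i \cdot e_j).
\end{equation*}
Therefore, to show that the two descending multiplications (the original $\cdot$ and the one $\star$ extracted from $\pi$ via \Cref{pi-to-mult}) coincide, it suffices to check that $e_i \cdot e_j = e_i \star e_j$ for each pair $i \neq j$.

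Next I would evaluate \Cref{mult-to-pi} on the two-element set $S = \{i,j\}$. The nonempty subsets of $S$ are $\{i\}$, $\{j\}$, and $\{i,j\}$, whose respective products in the multiplication $\cdot$ are $e_i$, $e_j$, and $e_i \cdot e_j$. Hence the construction yields
\begin{equation*}
    \pi \ket{e_i + e_j} \;=\; \ket{e_i + e_j + e_i \cdot e_j}.
\end{equation*}
On the other hand, the defining equation \Cref{pi-to-mult} for $\star$ reads $\pi \ket{e_i + e_j} = \ket{e_i + e_j + e_i \star e_j}$. Comparing the two expressions gives $e_i \cdot e_j = e_i \star e_j$ for every pair $i \neq j$, and together with the first paragraph this yields $\cdot = \star$ on all of $\mathbb{F}_2^n \times \mathbb{F}_2^n$.

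There is essentially no genuine obstacle here: the content of the proposition is really the careful bookkeeping in \Cref{mult-to-pi-well-defined} and \Cref{pi-to-mult-well-defined}, which ensure that $\pi$ is a well-defined permutation and that $\star$ is a well-defined descending multiplication; once those are in hand, the round-trip identity is visible from evaluating $\pi$ on basis states indexed by two-element sets. The only thing to be mindful of is that the formula in \Cref{mult-to-pi} must be applied with the original multiplication $\cdot$ (so that the product of the singleton term $\{i,j\}$ is $e_i \cdot e_j$ and not some a priori unrelated quantity), which is immediate from the statement of the construction.
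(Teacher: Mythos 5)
Your proof is correct and follows essentially the same approach as the paper's: both evaluate \Cref{mult-to-pi} on $S=\{i,j\}$, match against \Cref{pi-to-mult}, and conclude by bilinearity. The only cosmetic difference is that the paper explicitly checks $e_i^2=0$ by setting $j=i$ in \Cref{pi-to-mult}, while you cite that $\star$ is already known to be a descending multiplication from \Cref{pi-to-mult-well-defined}, which handles the diagonal automatically.
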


\begin{restatable}{proposition}{DMpitomulttopi}
\label{pi-to-mult-to-pi}
Given a staircase form permutation $\pi$ in $\cC_3$, applying the procedure in \Cref{pi-to-mult} to get a descending multiplication, then applying the procedure in \Cref{mult-to-pi} to that multiplication, yields the original $\pi$.
\end{restatable}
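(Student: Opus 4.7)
The plan is to show that both $\pi$ and the rebuilt permutation $\widetilde{\pi}$, obtained by feeding the induced multiplication back into \eqref{mult-to-pi}, satisfy the common single-bit flip recurrence
\begin{equation*}
    \rho(v + e_i) \;=\; \rho(v) + e_i + e_i\, \rho(v), \qquad v \in \mathbb{F}_2^n,\ i \in [n],
\end{equation*}
together with the initial condition $\rho(0) = 0$. Induction on Hamming weight then yields $\pi(v) = \widetilde{\pi}(v)$ for every $v$: given $v \neq 0$, pick any $i \in \mathrm{supp}(v)$, so that $v + e_i$ has strictly smaller Hamming weight than $v$; the inductive hypothesis $\pi(v+e_i) = \widetilde{\pi}(v+e_i)$ combined with the recurrence for both sides forces $\pi(v) = \widetilde{\pi}(v)$.

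For $\widetilde{\pi}$, write $g(S) := \sum_{\varnothing \neq T \subseteq S} \prod_{j \in T} e_j$ so that $\widetilde{\pi}\bigl(\sum_{j \in S} e_j\bigr) = g(S)$. I verify the recurrence in two cases. When $i \notin S$, grouping the nonempty subsets of $S \cup \{i\}$ that contain $i$ and factoring out $e_i$ using bilinearity gives
\begin{equation*}
    g(S \cup \{i\}) \;=\; g(S) + e_i + e_i\, g(S),
\end{equation*}
which is exactly the desired recurrence. When $i \in S$, setting $S' = S \setminus \{i\}$ and applying the previous identity backwards yields $g(S) = g(S') + e_i + e_i\, g(S')$; substituting this into $\widetilde{\pi}(v) + e_i + e_i\, \widetilde{\pi}(v)$ and cancelling using $e_i e_i = 0$ together with associativity collapses everything to $g(S')$. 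Both algebraic facts about the multiplication are supplied by \Cref{pi-to-mult-well-defined}.

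For $\pi$ itself, observe first that a staircase form Toffoli product fixes $0$ and every standard basis vector $e_j$, since no Toffoli $\tof_{a,b,c}$ with $a<b<c$ can fire on an input of Hamming weight at most one. Since $\pi \in \cC_3$, the conjugate $\pi X_i \pi^{-1}$ is a Clifford permutation, so by \Cref{clifford-perm} it acts as $w \mapsto w + A_i w + e_i$ for some linear map $A_i$; the constant term is forced to be $e_i$ because this Clifford permutation sends $0 = \pi(0)$ to $\pi(e_i) = e_i$. Consequently
\begin{equation*}
    \pi(v + e_i) \;=\; (\pi X_i \pi^{-1})(\pi(v)) \;=\; \pi(v) + A_i\, \pi(v) + e_i,
\end{equation*}
so the recurrence for $\pi$ reduces to identifying $A_i$ with the linear map $w \mapsto e_i\, w$. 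To do so, I apply $\pi X_i \pi^{-1}$ to $e_j = \pi(e_j)$ to obtain $\pi(e_i + e_j) = e_j + A_i e_j + e_i$, then compare with $\pi(e_i + e_j) = e_i + e_j + e_i e_j$ from \eqref{pi-to-mult} to conclude $A_i e_j = e_i e_j$ for every $j$; bilinearity extends this to $A_i w = e_i\, w$ for all $w$. The most delicate point in the whole argument is the $i \in S$ case of the recurrence for $\widetilde{\pi}$, which leans on the full algebraic structure of the multiplication established in \Cref{pi-to-mult-well-defined}; every other step is a direct unwinding of definitions.
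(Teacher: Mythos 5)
Your proof is correct and follows essentially the same route as the paper: the core fact both rely on is that $\pi X_i \pi^{-1}$ acts by $v \mapsto v + e_iv + e_i$, and then iterating single-bit flips starting from $0$ recovers the formula $\sum_{\varnothing\neq T\subseteq S}\prod_{i\in T}e_i$. The paper cites this map from the proof of \Cref{pi-to-mult-well-defined} and states the last step is "easily verified," whereas you re-derive the map and carry out the bookkeeping as an explicit Hamming-weight induction; the substance is the same.
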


\begin{proof}[Proof of \Cref{thm:perm-is-mult}]
Combining \Cref{mult-to-pi-well-defined,pi-to-mult-well-defined,mult-to-pi-to-mult,pi-to-mult-to-pi} yields the theorem.
\end{proof}

For the rest of this section, we will include two technical lemmas in \Cref{sec:DM_lemmas} that are helpful for proving \Cref{mult-to-pi-well-defined,pi-to-mult-well-defined,mult-to-pi-to-mult,pi-to-mult-to-pi}, and then we will prove these propositions in \Cref{sec:DM_props_proofs}. 

\subsection{Helper lemmas}\label{sec:DM_lemmas}

\begin{lemma}\label{staircase-conditions}
Let $\pi \in \cC_3$ be a permutation. Then $\pi$ is in staircase form if and only if the following conditions hold:

\begin{itemize}
\item $\pi|0 \ra = |0 \ra$,

\item $\pi|e_i \ra = |e_i \ra$ for all $i\in [n]$, and

\item for any vector $v$ with at least two $1$s, the indices of the first two $1$s in $v$ and $\pi(v)$ are the same.
\end{itemize}
\end{lemma}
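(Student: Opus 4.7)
The plan is to prove the equivalence using \Cref{staircase-poly} (which characterizes staircase form via the polynomial representation of $\pi^{-1}$) together with \Cref{ck-perm-poly} (which, for $\pi \in \cC_3$, gives $\pi^{-1}$ polynomial degree at most $2$).

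For the ``only if'' direction, I would write $\pi = \tof_{i_m, j_m, k_m} \cdots \tof_{i_1, j_1, k_1}$ in staircase form with $i_s < j_s < k_s$ and nondecreasing targets. Since a Toffoli requires both controls active to flip its target, it fixes $|0\rangle$ and every $|e_i\rangle$, and an easy induction on the partial product gives the first two conditions. For the third condition, I fix $v$ with first two $1$'s at positions $p < q$ and let $v^{(s)}$ denote the state after $s$ Toffolis; by induction on $s$ I would show the invariant that $(v^{(s)})_r = 0$ for $r < p$, $(v^{(s)})_p = 1$, $(v^{(s)})_r = 0$ for $p < r < q$, and $(v^{(s)})_q = 1$. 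The key point is that any Toffoli with target $k_{s+1} \leq q$ has both controls strictly below $k_{s+1}$; by the invariant only position $p$ carries a $1$ among those positions, and since $i_{s+1} < j_{s+1}$ forces the two controls to be distinct, at most one can equal $p$, so the Toffoli cannot fire. Toffolis with target $> q$ leave all bits at positions $\leq q$ untouched, closing the induction.

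For the ``if'' direction, I would start from $\pi \in \cC_3$ satisfying the three conditions. By \Cref{ck-perm-poly}, each coordinate of $\pi^{-1}$ is a polynomial of degree at most $2$. The assumption $\pi|0\rangle = |0\rangle$ kills all constant terms, and $\pi|e_i\rangle = |e_i\rangle$ pins the linear term of the $k$-th coordinate to $a_k$; so the $k$-th coordinate of $\pi^{-1}$ can be written as $a_k + q_k$ where $q_k$ is a sum of monomials $a_i a_j$ with $i < j$. For each pair $i < j$, evaluating at $v = e_i + e_j$ gives $\pi^{-1}(v) = v + w_{ij}$ where $(w_{ij})_k = 1$ precisely when $q_k$ contains $a_i a_j$. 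The third condition applies equally to $\pi^{-1}$ (bijectivity together with the first two conditions ensures $v$ has $\geq 2$ ones iff $\pi(v)$ does, so applying the condition to $\pi^{-1}(v) \mapsto v$ is legitimate), which forces the first two $1$'s of $e_i + e_j + w_{ij}$ to sit at positions $i$ and $j$. Hence $(w_{ij})_k = 0$ for all $k \leq j$, so $a_i a_j$ appears in $q_k$ only for $k > j$, and \Cref{staircase-poly} concludes.

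I expect the main subtlety to lie in the inductive invariant for the ``only if'' direction; the staircase constraints (each control strictly below its target, targets nondecreasing) are precisely what is needed to confine all relevant control positions to the range $[1, q)$, and once only position $p$ carries a $1$ there, the two-control requirement blocks any Toffoli from firing. The ``if'' direction, by contrast, is essentially a bookkeeping argument on top of the polynomial representation already guaranteed by \Cref{ck-perm-poly}.
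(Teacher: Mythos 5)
Your proof is correct. The ``if'' direction matches the paper's argument essentially verbatim: both use \Cref{ck-perm-poly} to bound the degree of $\pi^{-1}$, pin down the constant and linear terms from the first two conditions, and then apply the third condition to the pair $\bigl(\pi^{-1}(e_i+e_j),\,e_i+e_j\bigr)$ to conclude that $a_ia_j$ can only occur in coordinates $k>j$, finishing via \Cref{staircase-poly}. (Your parenthetical that the third condition ``applies equally to $\pi^{-1}$'' is a valid rephrasing: the paper simply plugs $v=\pi^{-1}(e_i+e_j)$ directly into the condition, which is the same move.)

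For the ``only if'' direction you diverge from the paper: you run a direct induction on the Toffoli circuit, maintaining the invariant that all coordinates at and below $q$ (with only $p$ and $q$ equal to $1$) are frozen because any gate with target $\leq q$ has both controls $<q$, at most one of which can be the unique active position $p$. The paper instead pulls the conclusion out of \Cref{staircase-poly}: it reads off the polynomial form of $\pi^{-1}$ and argues that any coordinate where $v$ and $w=\pi(v)$ disagree must exceed the position $J$ of the second $1$ in $w$. Both are sound. Yours is more self-contained and elementary (it does not invoke \Cref{staircase-poly} for this direction at all); the paper's is shorter by leveraging the already-proven lemma. Nothing in your argument is missing or broken---the invariant closes because targets are nondecreasing and each control is strictly below its target, which is exactly the content of the staircase constraint.
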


\begin{proof}
We use the equivalent characterization of staircase form permutations in \Cref{staircase-poly}.

For the ``if'' direction, consider the polynomial representation $(b_1, \dots, b_n)$ of $\pi^{-1}$. We know from \Cref{ck-perm-poly} that $b_i$ has degree at most $2$ for all $i$. The condition $\pi^{-1}(0)=0$ yields that the constant term of $b_i$ is $0$ for all $i$. 
Moreover, for all $i\in [n]$, the condition $\pi^{-1}(e_i)=e_i$ yields that the linear term of $b_i$ is $a_i$. 
Thus we can write $b_k=a_k+q_k$ where $q_k$ is a sum of monomials of the form $a_ia_j$ with $i<j$.
To show that $\pi$ is in staircase form using \Cref{staircase-poly}, it remains to show that each monomial $a_ia_j$ in $q_k$ with $i<j$ must also satisfy $j<k$. 

Now for any $i<j$, $\pi^{-1}(e_i+e_j)$ cannot be $0$ or any $e_k$ (since $\pi$ is a permutation).
Therefore, $\pi^{-1}(e_i+e_j)$ considered as a vector has at least two $1$s.
Invoking the third condition tells us that the indices of the first two $1$s in $\pi^{-1}(e_i+e_j)$ and $\pi(\pi^{-1}(e_i+e_j)) = e_i+e_j$ are the same. 
So we can write $\pi^{-1}(e_i+e_j) = e_i+e_j+x_{ij}$ where $x_{ij}$ is a (possibly empty) sum of terms of the form $e_{\ell}$ for $\ell > j$. 
Recall that $b_k = a_k+q_k$ is the $k$-th coordinate of the polynomial representation of $\pi^{-1}$. 
So if $q_k$ contains an $a_ia_j$ term, then $x_{ij}$ must contain an $e_k$ term, which implies that $j<k$.

For the ``only if'' direction, suppose $\pi$ is staircase form. We know from \Cref{staircase-poly} that we can write the polynomial representation of $\pi^{-1}$ as $(a_1+q_1, \dots, a_n+q_n)$ where $q_k$ is a (possibly empty) sum of terms of the form $a_ia_j$ with $i<j<k$. 
This implies that that $\pi(0)=0$ and $\pi(e_i)=e_i$ for all $i$.
To prove the third condition, consider $v$ containing at least two $1$s. 
Let $w=\pi(v)$; then $w\ne 0$ and $w\ne e_i$ for all $i$, so it contains at least two $1$s. 
Let $I<J$ be the positions of the first two $1$s in $w$.
Suppose that $w$ and $\pi^{-1}(w)=v$ differ on the $k$-th coordinate, i.e.\ $0\neq a_k(w) + b_k(w) = q_k(w)$. 
This means that there is
a term of $q_k$, say $a_ia_j$ with $i<j<k$, is nonzero when evaluated at $w$. 
In other words $a_i=a_j=1$ in $w$, which means $j \geq J$ and $k>J$. 
Thus $w$ and $v$ agree on the first $J$ coordinates; in particular, they agree on the positions of the first two $1$s.
\end{proof}

\begin{lemma}
For any descending multiplication, 
the resulting $\pi$ from \Cref{mult-to-pi} satisfies that
\begin{equation}\label{pi-v-w}
\pi(v+w)=\pi(v)+\pi(w)+\pi(v)\pi(w), \quad \text{for all }v,w\in \mathbb{F}_2^n. 
\end{equation}
\end{lemma}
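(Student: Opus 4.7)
The plan is to adjoin a formal identity $1$ to the $\mathbb{F}_2^n$-algebra defined by the descending multiplication, so that we may use the generating-function identity
\[
1 + P(S) \;=\; \prod_{i \in S}(1 + e_i), \qquad P(S) := \sum_{\varnothing \neq T \subseteq S} \prod_{i \in T} e_i,
\]
where $\pi\bigl(\sum_{i\in S} e_i\bigr) = P(S)$ by definition~(\ref{mult-to-pi}). Here the product is well-defined because the given multiplication is commutative and associative, and the equation is an immediate consequence of expanding $\prod_{i\in S}(1+e_i)$.

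Writing $v = \sum_{i \in S_v} e_i$ and $w = \sum_{i \in S_w} e_i$, we have $v + w = \sum_{i \in S_v \triangle S_w} e_i$ since we are in characteristic $2$. The target identity (\ref{pi-v-w}) becomes, after adding $1$ to both sides,
\[
1 + P(S_v \triangle S_w) \;=\; \bigl(1 + P(S_v)\bigr)\bigl(1 + P(S_w)\bigr).
\]
Using the generating-function identity, I would rewrite both sides as products over single-element factors and split $S_v \cup S_w$ into the three disjoint pieces $S_v \cap S_w$, $S_v \setminus S_w$, and $S_w \setminus S_v$. The factors indexed by $S_v \cap S_w$ occur twice on the right, producing squares $(1+e_i)^2 = 1 + 2e_i + e_i^2$; these collapse to $1$ because the field is $\mathbb{F}_2$ and because the descending-multiplication axiom $e_i e_i = 0$ holds. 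What remains is exactly $\prod_{i \in S_v \triangle S_w}(1 + e_i) = 1 + P(S_v \triangle S_w)$, completing the identity.

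The only subtlety—and hence the step most deserving care—is the passage to a unital algebra: one must verify that adjoining a symbol $1$ with $1 \cdot x = x$ to the (non-unital) commutative, associative algebra on $\mathbb{F}_2^n$ given by the descending multiplication genuinely yields a commutative ring in which the manipulations above make sense. This is routine, since the axioms in \Cref{def:descending-mult} already supply commutativity, associativity, and distributivity. Once the unital extension is in place, the whole argument is a short symbolic computation using only $e_i^2 = 0$ and characteristic $2$; no appeal to the ``descending'' structure $e_i e_j \in \mathrm{span}\{e_k : k > j\}$ is needed for this particular identity, as it is purely an algebraic consequence of $x^2 = 0$ for all $x$.
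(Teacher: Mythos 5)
Your proof is correct, and it takes a genuinely cleaner route than the paper's. The paper decomposes the supports $V, W$ of $v, w$ into the three disjoint pieces $V\cap W$, $V\setminus W$, $W\setminus V$, proves the disjoint-support identity $\pi(a)+\pi(b)+\pi(a)\pi(b)=\pi(a+b)$ by directly expanding the sum $\sum_{T\subseteq V, T\neq\varnothing}\prod_{i\in T}e_i$, and then combines three applications of it with a hands-on algebraic simplification (using $\pi(c)^2=0$). You instead pass to the Dorroh unitalization $\mathbb{F}_2 \oplus \mathbb{F}_2^n$, note that $1+\pi\bigl(\sum_{i\in S}e_i\bigr)=\prod_{i\in S}(1+e_i)$, and reduce the whole identity to $(1+e_i)^2=1$, which holds by $e_i^2=0$ in characteristic $2$. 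The underlying algebraic ingredients are identical (only commutativity, associativity, char $2$, and $e_i^2=0$ are used; the descending condition is not needed, which also matches the paper's proof), but the generating-function formulation collapses both the base identity and the final combination into one product manipulation, so it is noticeably shorter. The one point you flag---that adjoining a unit is legitimate---is indeed the only subtlety, and it is routine since distributivity, commutativity, and associativity of the nonunital algebra are supplied by \Cref{def:descending-mult}; if you want to avoid it entirely, you could work with the polynomial identity $\pi(u) = \sum_{\varnothing\neq T}\prod_{i\in T}e_i$ directly, which is what the paper opts for at the cost of a longer expansion.
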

\begin{proof}
Let us write $v = \sum_{i \in V} e_i$ and $w = \sum_{i \in W} e_i$ for some subsets $V,W\subseteq [n]$. 
Let $A = V \cap W$, $B = V \backslash W$, and $C = W \backslash V$. 
Let $a = \sum_{i \in A} e_i, b = \sum_{i \in B} e_i, c = \sum_{i \in C} e_i$. 
Since $A \cup B = V$ and $A \cap B = \varnothing$, it follows from \Cref{mult-to-pi} that
\begin{align*} 
\pi(a) + \pi(b) + \pi(a)\pi(b) &= \sum_{T \subseteq A, T \neq \varnothing} \prod_{i \in T} e_i + \sum_{T \subseteq B, T \neq \varnothing} \prod_{i \in T} e_i + \left(\sum_{T \subseteq A, T \neq \varnothing} \prod_{i \in T} e_i\right)\left(\sum_{T \subseteq B, T \neq \varnothing} \prod_{i \in T} e_i\right) \\ 
&= \sum_{T \subseteq A \cup B, T \neq \varnothing} \prod_{i \in T} e_i = \pi(v). 
\end{align*}

We can similarly show that $\pi(a)+\pi(c)+\pi(a)\pi(c) = \pi(w)$ and $\pi(b)+\pi(c)+\pi(b)\pi(c) = \pi(v+w)$. 
Further, since $\pi(c)\pi(c)=0$, we have \begin{align*} 
\pi(v) + \pi(w) + \pi(v)\pi(w)
&= \pi(a)+\pi(b)+\pi(a)\pi(b) + \pi(a)+\pi(c)+\pi(a)\pi(c) \\ 
&\quad +(\pi(a)+\pi(b)+\pi(a)\pi(b))(\pi(a)+\pi(c)+\pi(a)\pi(c)) \\ 
&= \pi(b)+\pi(a)\pi(b)+\pi(c)+\pi(a)\pi(c) \\
&\quad + (\pi(a)\pi(b)+\pi(a)\pi(c)+\pi(b)\pi(c)) \\ 
&= \pi(b)+\pi(c)+\pi(b)\pi(c) \\
&= \pi(v+w). \qedhere
\end{align*}
\end{proof}

\subsection{Proofs of \texorpdfstring{\Cref{mult-to-pi-well-defined,pi-to-mult-well-defined,mult-to-pi-to-mult,pi-to-mult-to-pi}}{Propositions 4.3 to 4.6}}\label{sec:DM_props_proofs}

We now restate and prove the four key propositions individually.

\DMmulttopiwelldefined*
\begin{proof}
We first note that \Cref{mult-to-pi} implies the three conditions in \Cref{staircase-conditions}. 
In particular, the fact that $e_ie_j$ is a descending multiplication implies the third condition. 
Therefore, it suffices for us to show that \(\pi\) is a permutation gate in $\cC_3$.

Since $\pi(v)$ is nonzero for $v \neq 0$, $\pi$ is injective: if $\pi(v)=\pi(w)$, then \Cref{pi-v-w} implies that $\pi(v+w)=0$, which means $v=w$. Thus $\pi$ is a valid permutation of $\mathbb{F}_2^n$.

To show $\pi\in \cC_3$, we first prove that $\pi X_i \pi^{-1} \in \cC_2$ for each $i$.
For any $v$ and any index $i$, if we let $\pi^{-1}(v) = w$, then 
\[
\pi X_i \pi^{-1}(v) = \pi(e_i+w) = \pi(e_i)+\pi(w)+\pi(e_i)\pi(w) = e_i+v+e_iv.
\] 
The map $v \mapsto e_i+v+e_iv$ is invertible since it is its own inverse; thus $\pi X_i \pi^{-1} \in \cC_2$ by \Cref{clifford-perm}.

Next, to prove that $\pi Z_i \pi^{-1} \in \cC_2$ for all $i$, by \Cref{eq:pi_conjugate_on_Z,diagonal-in-ch}, it suffices to show that the $i$-th coordinate in the polynomial representation of $\pi^{-1}$ has degree at most $2$. 

For each pair of indices $i<j$, define $v_{ij}$ to be such that $\pi(e_i+e_j+v_{ij})=e_i+e_j$. 
We will show that for any set $S$ of indices, 
\begin{equation} \label{pi-inverse-formula}
\pi \left(\sum_{i \in S} e_i + \sum_{i,j \in S; i<j} v_{ij} \right) = \sum_{i \in S} e_i.
\end{equation}
This implies that every coordinate of $\pi^{-1}$ has degree at most $2$, since each vector $v_{ij}$ appear in the sum when both $i,j$ are in $S$. 

For the rest of the proof, we will prove \Cref{pi-inverse-formula} by induction on $|S|$. 
The base case $|S| \leq 2$ is clear. 
For the inductive step, 
let $i$ be the smallest element of $S$, and let $T = S \backslash \{i\}$. Let $w = \sum_{j \in T} e_j$, and let $x=e_i + \sum_{j \in T} v_{ij}$. 
By the inductive hypothesis, $\pi(w + \sum_{j,k \in T; j<k} v_{jk}) = w$. 
Thus 
\begin{align}\label{eq:induction}
    \pi \left(\sum_{j \in S} e_j + \sum_{j,k \in S; j<k} v_{jk} \right) &= \pi \left( e_i + \sum_{j\in T} e_j + \sum_{j,k\in T; j<k} v_{jk} + \sum_{j\in T}v_{ij}\right) \nonumber \\
    &= \pi \left(\Big(w + \sum_{j,k \in T; j<k}v_{jk}\Big) + x\right) \nonumber\\
    &= w + \pi(x) + w\pi(x) .
\end{align}
where the last equality follows from \Cref{pi-v-w} and the induction hypothesis. 
Observe that 
\begin{equation}\label{eq:pi(vij)}
\begin{split}
    \pi(v_{ij}) &= \pi((e_i+e_j+v_{ij})+(e_i+e_j)) \\
    &= (e_i+e_j)(e_i+e_j+e_ie_j)+(e_i+e_j)+(e_i+e_j+e_ie_j) \\
    &= e_ie_j.
\end{split}
\end{equation}
We can show that $\pi(x) = e_i + \sum_{j\in T}e_ie_j$ via a simple induction on $|T|$, as well as using \Cref{pi-v-w} and the facts that $e_i^2=0$, $\pi(e_i) = e_i$, and $\pi(v_{ij})=e_ie_j$ repeatedly. 
Now continuing \Cref{eq:induction}:
\begin{align*}
    \pi \left(\sum_{j \in S} e_j + \sum_{j,k \in S; j<k} v_{jk} \right) &= w + (e_i+e_iw) + w(e_i+e_iw) = w+e_i = \sum_{j \in S} e_j. 
\end{align*}
This completes the inductive step, so \Cref{pi-inverse-formula} holds. 
\end{proof}

\DMpitomultwelldefined*
\begin{proof}
The distributive property holds by definition. The commutative property clearly holds, since $e_ie_j=e_je_i$. We have $e_i^2=0$ since $\pi|0 \ra = |0 \ra$. The fact that $e_ie_j$ is in the span of $\{e_k : k>j\}$ for $i<j$ follows from \Cref{staircase-conditions}.
It remains for us to prove associativity.

For each $i$, we have that $\pi X_i \pi^{-1}$ is a Clifford permutation, so by \Cref{clifford-perm}, there is a matrix $A_i$ and a vector $b_i$ so that $\pi X_i \pi^{-1}|v \ra = |v+A_i v+ b_i \ra$ for all $v$. Setting $v=0$ yields that $b_i=e_i$. Then setting $v=e_j$ yields that $\pi|e_i+e_j \ra = |e_i+e_j+A_ie_j \ra$, so $A_ie_j = e_ie_j$. Since we defined the multiplication to be linear in each coordinate, this implies $A_iv = e_iv$ for all $v$.

Now, since $X_i$ and $X_k$ commute, so do $\pi X_i \pi^{-1}$ and $\pi X_k \pi^{-1}$.
Therefore the maps $v \mapsto v+A_i v + e_i$ and $v \mapsto v + A_kv + e_k$ commute, which means $A_i$ and $A_k$ commute. Thus $A_i(A_ke_j) = A_k(A_ie_j)$ for all $i,j,k$, so $e_i(e_je_k) = e_i(e_ke_j) = e_k(e_ie_j) = (e_ie_j)e_k$, by commutativity of the multiplication. This yields the desired associativity.
\end{proof}

\DMmulttopitomult*
\begin{proof}
For $i\neq j$, by setting $S = \{i,j\}$ in \Cref{mult-to-pi}, we see that the new multiplication has the same value of $e_ie_j$ as the original multiplication; also, setting $j=i$ in \Cref{pi-to-mult} gives that the new multiplication has $e_i^2=0$. Thus the original and new multiplications coincide on the value of $e_ie_j$ for all $i,j$, and both are linear in each input. This means they are the same multiplication.
\end{proof}

\DMpitomulttopi*

\begin{proof}
Let $\pi$ denote the original permutation.
For any set $S$ of indices, we have 
\[ 
\pi\ket{\sum_{i \in S} e_i} = \pi \left(\prod_{i \in S} X_i \right) \pi^{-1}|0\ra = \prod_{i \in S} (\pi X_i \pi^{-1})|0 \ra.
\]
From the proof of \Cref{pi-to-mult-well-defined}, $\pi X_i \pi^{-1} |v \ra = |v+e_iv+e_i \ra$ for all $v,i$. 
It can be easily verified that applying the maps $v \mapsto v+e_iv+e_i$ sequentially for all $i \in S$, starting with $v=0$, yields $\sum_{T \subseteq S, T \neq \varnothing} \prod_{i \in T} e_i$ as desired.
\end{proof}

\section{A family of non-semi-Clifford Permutations}\label{sec:family_gates}

Utilizing our characterization of $\cC_3$ permutations as descending multiplications, we construct an infinite family of permutation gates $\{U_k\}_{k\geq 3}$ in \Cref{def:family_gates} and prove in \Cref{thm:U_k_non_semi_cliff} that every $U_k$ is non-semi-Clifford in $\cC_3$. 
This family of gates disproves both of Anderson's conjectures (restated in \Cref{conj:c3-tof} and \Cref{conj:perm-inverse} for convenience).  
We study the smallest case of $k=3$ in \Cref{subsec:family_example_k3}, and show that this $7$-qubit permutation $U_3$ is in fact conjugate to the Gottesman--Mochon example by a Clifford operator.

\begin{definition}[A family of permutation gates in $\cC_3$ from descending multiplications]\label{def:family_gates}
    For each integer $k \geq 3$, let $n=2^k-1$.
    We label a basis of $\mathbb{F}_2^n$ as $e_S$ for nonempty subsets $S \subseteq [n]$. 
    Define a multiplication by setting $e_Se_T = e_{S \sqcup T}$ if $S \cap T = \varnothing$, and $e_Se_T = 0$ if $S \cap T \neq \varnothing$, and extending linearly; it is easy to check that this yields a descending multiplication. 
    Let $U_k$ be the staircase form $\cC_3$ permutation corresponding to this descending multiplication, as in \Cref{thm:perm-is-mult}.
\end{definition}

\begin{proposition}\label{prop:toffoli-form-of-uk}
We can express $U_k$ as a product of Toffoli gates in staircase form as follows: for each pair $S,T$ of nonempty disjoint subsets of $[n]$ with $S<T$, apply the gate $\tof_{S, T, S \sqcup T}$; specifically, apply these gates in nondecreasing order of target.
\end{proposition}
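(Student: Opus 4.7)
The plan is to apply \Cref{staircase-poly}: the unique staircase-form Toffoli representative of a staircase $\cC_3$ permutation $\pi$ contains $\tof_{i,j,k}$ exactly when the monomial $a_i a_j$ appears in the $k$-th coordinate of the polynomial representation of $\pi^{-1}$. So it suffices to compute the polynomial representation of $U_k^{-1}$ and read off the gate list.

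The key tool is \Cref{pi-inverse-formula}, derived in the proof of \Cref{mult-to-pi-well-defined}, which says that for any descending multiplication and its associated permutation $\pi$, and for any collection $\mathcal{S}$ of basis labels,
\begin{equation*}
    \pi^{-1}\!\left(\sum_{i \in \mathcal{S}} e_i\right) \;=\; \sum_{i \in \mathcal{S}} e_i \;+\; \sum_{\substack{i, j \in \mathcal{S}\\ i < j}} v_{ij},
\end{equation*}
where $v_{ij}$ is the unique vector satisfying $\pi(v_{ij}) = e_i e_j$. Specializing to $U_k$, the basis labels are nonempty subsets of $[k]$: the multiplication gives $e_S e_T = 0$ when $S \cap T \neq \varnothing$ (so $v_{ST} = 0$), and $e_S e_T = e_{S \sqcup T}$ when $S \cap T = \varnothing$; since $U_k$ fixes every basis vector (the $|S|=1$ case of \Cref{mult-to-pi}), the latter case gives $v_{ST} = e_{S \sqcup T}$. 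Collecting these quadratic contributions coordinate by coordinate, the $R$-th coordinate of $U_k^{-1}$ equals
\begin{equation*}
    a_R \;+\; \sum_{\substack{S, T \subseteq [k] \text{ nonempty}\\ S \cap T = \varnothing,\ S < T,\ S \sqcup T = R}} a_S\, a_T.
\end{equation*}

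Applying \Cref{staircase-poly} then yields the proposition: the staircase-form Toffoli circuit for $U_k$ consists of precisely the gates $\tof_{S, T, S \sqcup T}$, one for each pair of disjoint nonempty subsets $S < T$ of $[k]$, scheduled in nondecreasing order of target $R = S \sqcup T$. The only piece of bookkeeping is fixing an ordering on nonempty subsets that satisfies $S < T < S \sqcup T$ whenever $S, T$ are disjoint and nonempty; this is the condition implicitly required by \Cref{def:descending-mult} to make the multiplication descending, and is satisfied, for example, by ordering subsets first by size. Given such an ordering, every listed gate $\tof_{S, T, S \sqcup T}$ satisfies $S < T < S \sqcup T$, and gates sharing a common target $R$ commute, so a nondecreasing-target schedule exists and produces a staircase circuit. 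No substantive obstacle is expected: once \Cref{pi-inverse-formula} is specialized to this particular multiplication, the result follows immediately from \Cref{staircase-poly}.
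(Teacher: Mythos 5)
Your proof is correct and follows essentially the same route as the paper's: both compute $v_{ST}=\pi^{-1}(e_Se_T)=e_Se_T$ using the fact that $U_k$ fixes $0$ and every $e_R$, and then read off the Toffoli gates via the characterization of staircase form through the polynomial representation of $\pi^{-1}$ (the paper packages that last step as \Cref{prop:toffoli-form-of-pi}, which is itself a direct corollary of \Cref{staircase-poly}).
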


\begin{remark}
From the perspective of labeling gates with integers instead of sets, \Cref{prop:toffoli-form-of-uk} states that we can express $U_k$ as a product of Toffoli gates in staircase form as follows: for each pair of indices $i<j$ that do not have any $1$s in the same place as each other in binary, apply $\tof_{i,j,i+j}$; specifically, apply these gates in nondecreasing order of target.
\end{remark}

The main feature of these permutations is captured by the following theorem.

\begin{theorem}[Inverses of $\cC_3$ permutations may lie outside $\cC_k$]\label{thm:U_k_non_semi_cliff}
    For any integer $k\geq 3$, we have $U_k\in \cC_3$ but $U_k^{-1} \notin \cC_k$. 
    Thus $U_k$ is not semi-Clifford.
\end{theorem}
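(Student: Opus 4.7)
The plan is to prove the two assertions of \Cref{thm:U_k_non_semi_cliff} separately; the first is essentially free from the construction, while the second reduces to a polynomial-degree calculation. The fact that $U_k\in\cC_3$ follows directly from \Cref{thm:perm-is-mult} once one checks that the operation of \Cref{def:family_gates} is indeed a descending multiplication. Bilinearity, commutativity, associativity, and $e_S^2=0$ are immediate from the definition. For the descending condition I would order the basis $\{e_S : \varnothing\neq S\subseteq[k]\}$ by $|S|$ (with any tiebreaker among sets of equal size): whenever $S\cap T=\varnothing$ and $|S|,|T|\geq 1$, we have $|S\sqcup T|>\max(|S|,|T|)$, so $e_Se_T = e_{S\sqcup T}$ lies in the span of strictly later basis vectors, and $e_Se_T=0$ trivially satisfies the condition otherwise.

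For the harder claim $U_k^{-1}\notin\cC_k$, the plan is to use \Cref{ck-perm-poly} contrapositively. That lemma (at level $k-1$, applied to $\pi=U_k^{-1}$) says: if $U_k^{-1}\in\cC_k$, then every coordinate of $(U_k^{-1})^{-1}=U_k$ has degree at most $k-1$ as a Boolean polynomial in the input variables $\{a_S\}_{\varnothing\neq S\subseteq[k]}$. It therefore suffices to exhibit a single coordinate of $U_k$ whose polynomial representation has degree at least $k$.

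To find such a coordinate, I would unfold the action of $U_k$ directly using \Cref{mult-to-pi}: since $\prod_{S\in\mathcal{T}}e_S$ equals $e_{\bigsqcup S}$ when the sets in $\mathcal{T}$ are pairwise disjoint and $0$ otherwise, one gets
\begin{equation*}
    U_{k,T^*}(a) \;=\; \sum_{\mathcal{P}}\, \prod_{S\in\mathcal{P}} a_S,
\end{equation*}
where $\mathcal{P}$ ranges over set-partitions of $T^*$ into nonempty blocks. I would then take $T^*=[k]$: the all-singletons partition $\mathcal{P}=\{\{1\},\{2\},\dots,\{k\}\}$ contributes the monomial $a_{\{1\}}a_{\{2\}}\cdots a_{\{k\}}$ of degree $k$. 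Each partition $\mathcal{P}$ produces a distinct squarefree monomial in the variables $\{a_S\}$, determined by its block structure, so the singleton partition is the unique source of this particular monomial. Hence it survives with coefficient $1$, proving $\deg(U_{k,[k]})\geq k$ and therefore $U_k^{-1}\notin\cC_k$.

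Finally, the non-semi-Clifford conclusion drops out: by \Cref{lem:respect-sc}, any semi-Clifford element of $\cC_3$ has its inverse in $\cC_3\subseteq\cC_k$, contradicting what we just showed (as $k\geq 3$). The main obstacle is the polynomial-expansion step: one must argue cleanly that no cancellation in $\mathbb{F}_2$ kills the top-degree monomial, which is exactly what the bijection between set-partitions and squarefree monomials in distinct $a_S$'s ensures.
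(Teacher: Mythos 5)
Your proof is essentially the same as the paper's: both verify that the operation of \Cref{def:family_gates} is a descending multiplication (so $U_k\in\cC_3$ via \Cref{thm:perm-is-mult}), then identify the degree-$k$ monomial $a_{\{1\}}\cdots a_{\{k\}}$ in the coordinate indexed by $[k]$ via the partition expansion (the paper isolates this as \Cref{lem:U_k_poly_rep} using \Cref{prod-ei-and-ai}), and apply \Cref{ck-perm-poly} and \Cref{lem:respect-sc} to conclude. Your explicit remark that distinct set-partitions yield distinct squarefree monomials in the $a_S$'s, forestalling any $\mathbb{F}_2$-cancellation, is the same content the paper packages into \Cref{lem:U_k_poly_rep}; the only cosmetic difference is that you order the basis $\{e_S\}$ by $|S|$ whereas the paper uses the binary-integer ordering, which does not affect any of the claims.
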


\subsection{Proof of \texorpdfstring{\Cref{prop:toffoli-form-of-uk}}{Proposition 5.2}}

\begin{proposition}\label{prop:toffoli-form-of-pi}
Given a staircase form permutation $\pi$ in $\cC_3$ and $i < j$, let $v_{ij}$ be such that $\pi^{-1}(e_i+e_j) = e_i+e_j+v_{ij}$. Then for all $i<j<k$, $\tof_{i,j,k}$ appears in the staircase form of $\pi$ if and only if there is an $e_k$ term in $v_{ij}$.
\end{proposition}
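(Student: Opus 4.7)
The plan is to reduce the claim directly to Lemma~\ref{staircase-poly}, which already links the Toffoli gates in a staircase decomposition of $\pi$ to the polynomial representation of $\pi^{-1}$. All that remains is to translate that coefficient-based criterion into the vector-based criterion involving $v_{ij}$.

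First I would invoke Lemma~\ref{staircase-poly} to write the polynomial representation of $\pi^{-1}$ as $(a_1+q_1,\ldots,a_n+q_n)$, where each $q_k$ is a sum of monomials of the form $a_{i'}a_{j'}$ with $i'<j'<k$, and where $\tof_{i,j,k}$ appears in the staircase form of $\pi$ exactly when $q_k$ contains the monomial $a_ia_j$. This handles one side of the ``iff'' at the polynomial level.

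Next I would evaluate this polynomial representation at the input $e_i+e_j$, which sets $a_i=a_j=1$ and all other $a_{\ell}=0$. The key observation is that a monomial $a_{i'}a_{j'}$ in $q_k$ contributes $1$ iff $\{i',j'\}\subseteq\{i,j\}$, and the inequality $i'<j'$ then forces $\{i',j'\}=\{i,j\}$. Hence for any $k$ the $k$-th coordinate of $\pi^{-1}(e_i+e_j)$ equals $[k\in\{i,j\}]+[a_ia_j\in q_k]$. Since $q_k$ only involves monomials with both indices strictly less than $k$, this bracket is automatically $0$ whenever $k\le j$, so $v_{ij}$ vanishes on coordinates $\le j$ (consistent with the degree constraint from Lemma~\ref{ck-perm-poly}), and for $k>j$ the $k$-th coordinate of $v_{ij}$ is precisely the indicator of whether $a_ia_j$ appears in $q_k$.

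Combining the two steps, for each triple $i<j<k$ we have $e_k$ appearing in $v_{ij}$ iff $a_ia_j$ appears in $q_k$ iff $\tof_{i,j,k}$ appears in the staircase form of $\pi$. There is no real obstacle here: the proposition is essentially a repackaging of Lemma~\ref{staircase-poly}, and the only care needed is the straightforward bookkeeping in the evaluation at $e_i+e_j$ to rule out contributions from monomials other than $a_ia_j$.
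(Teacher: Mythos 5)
Your proof is correct and takes essentially the same route as the paper's: both reduce to Lemma~\ref{staircase-poly} and observe that $v_{ij}$ records exactly the coordinates where an $a_ia_j$ monomial appears in the polynomial form of $\pi^{-1}$. The paper states this in one line; you simply supply the (correct) evaluation-at-$e_i+e_j$ bookkeeping that justifies it.
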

\begin{proof}
Note that $v_{ij}$ is the vector of the positions containing a $a_ia_j$ monomial in the polynomial form of $\pi^{-1}$, so this follows directly from \Cref{staircase-poly}.
\end{proof}

\begin{proof}[Proof of \Cref{prop:toffoli-form-of-uk}]
Take $U_k=\pi$ in \Cref{prop:toffoli-form-of-pi}. We know from the proof of \Cref{mult-to-pi-well-defined} that $v_{ij}=\pi^{-1}(e_ie_j)$ for all $i<j$. In the notation of indexing qubits as sets, this becomes $v_{ST} = \pi^{-1}(e_Se_T)$ for $S<T$. Now note that $e_Se_T$ is $e_{S \cup T}$ or $0$, and in either case $\pi^{-1}(e_Se_T) = e_Se_T$. Thus $v_{ST}=e_Se_T$ in any case, so \Cref{prop:toffoli-form-of-pi} implies the desired, by the definition of $e_Se_T$.
\end{proof}

\subsection{Proof of \texorpdfstring{\Cref{thm:U_k_non_semi_cliff}}{Theorem 5.4}}

\begin{proposition} \label{prod-ei-and-ai}
Given a descending multiplication and its corresponding $\cC_3$ permutation $\pi$, 
for any nonempty set $S\subseteq [n]$, the value of $\Pi_{i \in S} e_i$ is exactly the vector corresponding to the positions at which an $\Pi_{i \in S} a_i$ term appears in the polynomial representation of $\pi$.
\end{proposition}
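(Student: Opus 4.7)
The plan is to extract the coefficient of $\prod_{i \in S} a_i$ in each polynomial $p_k$ (where $(p_1,\dots,p_n)$ is the polynomial representation of $\pi$) by evaluating $\pi$ at all indicator vectors $\mathbf{1}_{S'} := \sum_{i \in S'} e_i$ and comparing with the closed-form expression \Cref{mult-to-pi}. Specifically, for any $S'$, the multilinearity of $p_k$ (from \Cref{poly-rep}) gives
\[
p_k(\mathbf{1}_{S'}) = \sum_{T \subseteq S'} c_{T,k},
\]
where $c_{T,k}$ denotes the coefficient of $\prod_{i \in T} a_i$ in $p_k$. On the other hand, \Cref{mult-to-pi} gives that $p_k(\mathbf{1}_{S'})$ equals the $k$-th coordinate of $\sum_{\varnothing \neq T \subseteq S'} \prod_{i \in T} e_i$, i.e.\ $\sum_{\varnothing \neq T \subseteq S'} \bigl(\prod_{i \in T} e_i\bigr)_k$. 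The claim is precisely $c_{S,k} = \bigl(\prod_{i \in S} e_i\bigr)_k$ for every nonempty $S$.

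The cleanest way to conclude is by induction on $|S|$. For the base case $|S|=1$, say $S=\{i\}$, evaluating $\pi$ at $0$ and $e_i$ gives that the constant term of $p_k$ is $0$ and the linear coefficient of $a_i$ in $p_k$ equals $(e_i)_k$. For the inductive step, fix $S$ with $|S| \geq 2$, and note that the two expressions for $p_k(\mathbf{1}_S)$ derived above together yield
\[
c_{S,k} + \sum_{\varnothing \neq T \subsetneq S} c_{T,k} \;=\; \sum_{\varnothing \neq T \subseteq S} \Bigl(\prod_{i \in T} e_i\Bigr)_k,
\]
since $c_{\varnothing,k}=0$. Applying the inductive hypothesis to each $T \subsetneq S$ cancels the corresponding terms on both sides, leaving exactly $c_{S,k} = \bigl(\prod_{i \in S} e_i\bigr)_k$.

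(One could alternatively package the same reasoning as a single application of Möbius inversion on the Boolean lattice, which in characteristic $2$ reads $c_{S,k} = \sum_{T \subseteq S} p_k(\mathbf{1}_T)$; substituting the formula from \Cref{mult-to-pi}, swapping the order of summation, and observing that each nonempty $U \subseteq S$ is counted $2^{|S \setminus U|}$ times—which is even unless $U = S$—gives the same identity.) There is no real obstacle here; the only thing to be careful about is that the constant term of each $p_k$ vanishes (which follows because $\pi(0)=0$, itself a consequence of the empty sum on the right-hand side of \Cref{mult-to-pi} when $S = \varnothing$), so that the induction actually initiates correctly at $|S|=1$.
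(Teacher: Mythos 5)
Your proof is correct and is essentially the same argument the paper uses: the paper also writes $\pi(\sum_{i\in S} e_i) = \sum_{T \subseteq S} p_T$ (where $p_T$ is the vector of coefficients of $\prod_{i\in T} a_i$ across the coordinates), equates it with the formula from \Cref{mult-to-pi}, and concludes by strong induction on $|S|$. Your version just spells out the coordinate-wise multilinearity and the base case $p_\varnothing = 0$ a bit more explicitly; the parenthetical M\"obius-inversion packaging is an equivalent way to say the same thing.
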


\begin{proof}
For any $S$, let $p_S$ be the vector corresponding to the positions at which an $\Pi_{i \in S} a_i$ term appears in the polynomial representation of $\pi$. Then we can see, for any set $S$, that $\pi(\sum_{i\in S} e_i) = \sum_{T \subseteq S} p_T$. 
From \Cref{mult-to-pi}, we have that $\sum_{T \subseteq S; T \neq \varnothing} p_T = \sum_{T \subseteq S, T \neq \varnothing} \prod_{i \in T} e_i$ for any nonempty set $S$. It then easily follows by strong induction on $|S|$ that $p_S = \prod_{i \in S} e_i$ for all nonempty $S$.
\end{proof}

We prove \Cref{thm:U_k_non_semi_cliff} through the polynomial representations of $U_k$ and $U_k^{-1}$.

\begin{lemma}\label{lem:U_k_poly_rep}
    For each integer $k\geq 3$, let us denote the polynomial representations (\Cref{def:polynomial_rep}) of $U_k$ and $U_k^{-1}$ respectively as $(\pi_S)_{S\subseteq [k], S\neq \varnothing}$ and $(\pi_S')_{S\subseteq [k], S\neq \varnothing}$. 
    Then
    \begin{equation}\label{eq:poly_rep_U_k}
        \pi_S(a) = \sum_{m=1}^{|S|} \ \sum_{\substack{\bigsqcup_{i=1}^m T_i = S, \\ T_i\neq \varnothing}} a_{T_1}a_{T_2}\dots a_{T_m}, 
    \end{equation}
    where the sum is over unordered non-empty subsets $T_1, \ldots, T_m$, in other words, all partitions of $S$, 
    and 
    \begin{equation*}
        \pi'_S(a) = a_S + \sum_{T_i\neq \varnothing, T_1 \sqcup T_2 = S} a_{T_1}a_{T_2}, 
    \end{equation*}
    where the sum is over unordered pairs $T_1, T_2$. 
\end{lemma}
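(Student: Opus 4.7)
The plan is to use Proposition~\ref{prod-ei-and-ai} for the formula on $\pi_S$, and to combine the degree bound from Lemma~\ref{ck-perm-poly} with direct evaluation at small inputs for the formula on $\pi'_S$.

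For $U_k$, I would first establish by induction on $m$, using associativity and commutativity of the descending multiplication together with the defining relation $e_S e_T = e_{S \sqcup T}$ or $0$, that for any nonempty subsets $S_1, \ldots, S_m \subseteq [k]$ the product $e_{S_1} \cdots e_{S_m}$ equals $e_{S_1 \sqcup \cdots \sqcup S_m}$ when the $S_i$ are pairwise disjoint and equals $0$ otherwise. Applying Proposition~\ref{prod-ei-and-ai} with the basis index set $\{T_1, \ldots, T_m\}$ then shows that, for distinct nonempty subsets $T_i$, the monomial $a_{T_1} \cdots a_{T_m}$ has coefficient $1$ in $\pi_S$ precisely when $\{T_1, \ldots, T_m\}$ is a partition of $S$, and coefficient $0$ otherwise. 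Collecting these contributions and grouping by the number of parts $m$ yields the claimed formula.

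For $U_k^{-1}$, since $U_k \in \cC_3$ by Definition~\ref{def:family_gates} and Theorem~\ref{thm:perm-is-mult}, Lemma~\ref{ck-perm-poly} tells us that each $\pi'_S$ is a polynomial of degree at most $2$; because Fact~\ref{poly-rep} restricts each variable to degree at most $1$, it suffices to pin down the constant, linear, and distinct-pair quadratic coefficients. The constant term vanishes because $\pi(0) = 0$ forces $\pi^{-1}(0) = 0$. For the linear part, Lemma~\ref{staircase-conditions} gives $\pi(e_R) = e_R$ and hence $\pi^{-1}(e_R) = e_R$, so evaluating $\pi'_S$ at $e_R$ forces the coefficient of $a_R$ to be $1$ when $R = S$ and $0$ otherwise, contributing the single linear term $a_S$. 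For the quadratic part, the identity $\pi^{-1}(e_{R_1} + e_{R_2}) = e_{R_1} + e_{R_2} + v_{R_1 R_2}$ with $v_{R_1 R_2} = \pi^{-1}(e_{R_1} e_{R_2})$ comes from the proof of Proposition~\ref{mult-to-pi-well-defined}; under the multiplication of Definition~\ref{def:family_gates}, $e_{R_1} e_{R_2}$ is $e_{R_1 \cup R_2}$ if $R_1 \cap R_2 = \varnothing$ and $0$ otherwise, and since $\pi$ fixes every $e_S$, the same description holds for $v_{R_1 R_2}$. Evaluating $\pi'_S$ at $e_{R_1} + e_{R_2}$ and subtracting the linear contribution then pins the coefficient of $a_{R_1} a_{R_2}$ to $1$ when $R_1 \sqcup R_2 = S$ and $0$ otherwise, giving precisely the quadratic terms in the claim.

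No single step should be hard; the main care needed is bookkeeping. One must remember that the polynomial representation has degree at most $1$ in each variable, handle unordered rather than ordered partitions correctly when assembling the $\pi_S$ formula, and cleanly separate the linear and quadratic contributions when evaluating $\pi'_S$ at $e_{R_1} + e_{R_2}$. Once these accountings are made, both formulas fall out immediately from Proposition~\ref{prod-ei-and-ai} and the degree bound.
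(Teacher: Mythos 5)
Your formula for $\pi_S$ is established the same way the paper does it: reduce to the observation that $e_{T_1}\cdots e_{T_m}$ is $e_{T_1\sqcup\cdots\sqcup T_m}$ when the $T_i$ are pairwise disjoint and $0$ otherwise, then apply Proposition~\ref{prod-ei-and-ai}. For $\pi'_S$, however, you take a genuinely different route. The paper's proof is a one-liner: it cites Proposition~\ref{prop:toffoli-form-of-uk} for the explicit Toffoli decomposition of $U_k$ and then reads the polynomial of $U_k^{-1}$ off Lemma~\ref{staircase-poly}, which says a $\tof_{i,j,k}$ in the staircase form contributes exactly the $a_ia_j$ monomial to the $k$th coordinate. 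You instead avoid the explicit Toffoli decomposition entirely: you bound the degree by $2$ from Lemma~\ref{ck-perm-poly}, pin down the constant and linear parts by evaluating at $0$ and at $e_R$ (using the degree-$\le 1$-in-each-variable normalization of Fact~\ref{poly-rep} so that no $a_R^2$ term interferes), and then recover the quadratic coefficients by evaluating at $e_{R_1}+e_{R_2}$ via the identity $\pi^{-1}(e_{R_1}+e_{R_2}) = e_{R_1}+e_{R_2}+v_{R_1R_2}$ with $v_{R_1R_2}=\pi^{-1}(e_{R_1}e_{R_2})=e_{R_1}e_{R_2}$ (since $U_k$ fixes every $e_S$ and $0$). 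Both arguments are correct. The paper's is shorter because Proposition~\ref{prop:toffoli-form-of-uk} has already been proved; yours is more self-contained in that it extracts everything directly from the descending multiplication and the $\cC_3$ degree bound, at the cost of a few extra evaluation steps.
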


\begin{proof}
The polynomial form of $\pi^{-1}$ follows directly from \Cref{prop:toffoli-form-of-uk} and \Cref{staircase-poly}. For the polynomial form of $\pi$, observe that, for any $T_1, \dots, T_m$, we have $e_{T_1}\dots e_{T_m}$ is $e_{T_1 \sqcup \dots \sqcup T_m}$ if the $T_i$ are pairwise disjoint, and $0$ if not, so the desired follows from \Cref{prod-ei-and-ai}. 
\end{proof}

\begin{proof}[Proof of~\Cref{thm:U_k_non_semi_cliff}]
We know $U_k \in \cC_3$ by definition. Also, by \Cref{lem:U_k_poly_rep}, 
we know that $\pi_{[k]}$ is a polynomial of degree $k$ because it contains the monomial $a_{\{1\}}a_{\{2\}}\dots a_{\{k\}}$. 
This implies that $U_k^{-1} \notin \cC_k$ using \Cref{ck-perm-poly}. In particular, $U_k^{-1} \notin \cC_3$, so $U_k$ is not semi-Clifford by \Cref{lem:respect-sc}. 
\end{proof}

\subsection{Example: \texorpdfstring{$k=3$}{k=3}}\label{subsec:family_example_k3}

Let us examine the case of $k=3$, the simplest gate in the family. 
$U_3$ is a $7$-qubit permutation gate with $6$ Toffoli gates in staircase form (see circuit in~\Cref{fig:seven_perm}): 
\begin{equation*}
    U_3 = \tof_{1,6,7}\tof_{2,5,7}\tof_{3,4,7}\tof_{2,4,6}\tof_{1,4,5}\tof_{1,2,3}.
\end{equation*}

\begin{figure}[!ht]
    \centering
    \begin{equation*}
        \begin{quantikz}[slice style=blue] 
        \lstick{$a_1$}&\ctrl{2}&\ctrl{4}&&&&\ctrl{6}&\rstick{$a_1$}\\
        \lstick{$a_2$}&\control{}&&\ctrl{4}&&\ctrl{5}&&\rstick{$a_2$}\\
        \lstick{$a_3$}&\targ{}&&&\ctrl{4}&&&\rstick{$a_3+a_1a_2$}\\
        \lstick{$a_4$}&&\control{}&\control{}&\control{}&&&\rstick{$a_4$}\\
        \lstick{$a_5$}&&\targ{}&&&\control{}&&\rstick{$a_5+a_1a_4$}\\
        \lstick{$a_6$}&&&\targ{}&&&\control{}&\rstick{$a_6+a_2a_4$}\\
        \lstick{$a_7$}&&&&\targ{}&\targ{}&\targ{}&\rstick{$a_7+a_1a_6+a_2a_5+a_3a_4+a_1a_2a_4$}
        \end{quantikz}
    \end{equation*}
    \caption{The non-semi-Clifford permutation gate $U_3\in \cC_3$.}
    \label{fig:seven_perm}
\end{figure}
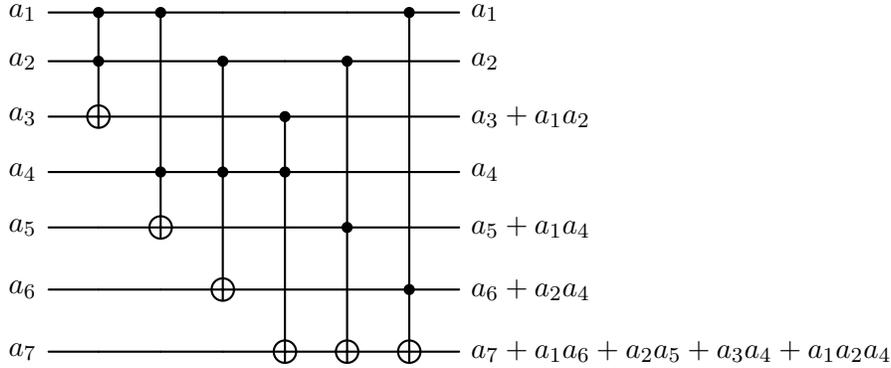

Recall the Gottesman--Mochon $7$-qubit gate (see~\Cref{fig:circ_gottesman_Mochon} for the circuit)
\begin{align*}
    \cC_3 \ni G &= \cswap_{7,1,6}\cswap_{7,2,5}\cswap_{7,4,3} \cdot \ccz_{1,2,3}\ccz_{1,4,5}\ccz_{2,4,6}\ccz_{3,5,6}, 
\end{align*}
which is not semi-Clifford, as in the proof of~\Cref{lem:gottesman-mochon}. 
Let us define a $7$-qubit Clifford gate $F$: 
\begin{align*}
    F &= H_3H_5H_6\cnot_{6,1}\cnot_{5,2}\cnot_{3,4}H_7.
\end{align*}

\begin{proposition} \label{prop:u3}
$U_3$ is a non-semi-Clifford permutation in $\cC_3$ on $7$ qubits and $FGF^{-1}=U_3$.
\end{proposition}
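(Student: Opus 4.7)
The plan splits the claim into two essentially independent pieces. The first — that $U_3 \in \cC_3$, is non-semi-Clifford, and acts on $7$ qubits — is immediate from \Cref{thm:U_k_non_semi_cliff} at $k=3$, together with $n = 2^3 - 1 = 7$. For the second, $FGF^{-1} = U_3$, I would perform a structural simplification rather than a $2^7$-state brute-force verification, using two identities repeatedly: the CSWAP decomposition $\cswap_{c,a,b} = \cnot_{b,a}\,\tof_{c,a,b}\,\cnot_{b,a}$, and the Hadamard--Toffoli--CCZ duality $H_k\,\tof_{i,j,k}\,H_k = \ccz_{i,j,k}$ (together with its symmetric converse: conjugating a $\ccz$ by a Hadamard on any one of its three qubits yields a Toffoli with that qubit as target).

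First I would decompose each CSWAP in $G$ using the CSWAP identity, which conveniently produces CNOTs $\cnot_{6,1}, \cnot_{5,2}, \cnot_{3,4}$ on either side of the resulting Toffolis --- exactly the CNOT layer $C := \cnot_{6,1}\cnot_{5,2}\cnot_{3,4}$ appearing inside $F$. Since the three CSWAPs commute (their target pairs are disjoint; only qubit $7$ is shared as a control), these outer CNOTs collect into a single $C$ on each side, giving
\[ G = C \cdot (\tof_{7,1,6}\tof_{7,2,5}\tof_{7,4,3}) \cdot C \cdot (\ccz_{1,2,3}\ccz_{1,4,5}\ccz_{2,4,6}\ccz_{3,5,6}). \]
Using $F = H_3H_5H_6 \cdot C \cdot H_7$, and that $H_7$ commutes with both $C$ and every $\ccz$ above, the two inner $C$'s cancel against the $C$'s from $F$ and $F^{-1}$, and $H_7$ can be pushed past the CCZ block to meet $H_3H_5H_6$. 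Setting $H := H_3H_5H_6H_7$, this produces the symmetric expression
\[ FGF^{-1} = H \cdot \tof_{7,1,6}\tof_{7,2,5}\tof_{7,4,3} \cdot (C \cdot \ccz_{1,2,3}\ccz_{1,4,5}\ccz_{2,4,6}\ccz_{3,5,6} \cdot C) \cdot H. \]

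Next I would collapse the inner CCZ-conjugation by its phase polynomial: $C \cdot D \cdot C$ is diagonal with phase $\phi(Cq)$, where $\phi(q) = q_1q_2q_3 + q_1q_4q_5 + q_2q_4q_6 + q_3q_5q_6$ and $C$ substitutes $q_1 \mapsto q_1+q_6,\ q_2 \mapsto q_2+q_5,\ q_4 \mapsto q_4+q_3$. Expanding mod $2$, the cross-terms cancel in pairs --- notably the $q_3q_5q_6$ contribution appears four times and vanishes --- leaving exactly $q_1q_2q_3 + q_1q_4q_5 + q_2q_4q_6$, so the block reduces to $\ccz_{1,2,3}\ccz_{1,4,5}\ccz_{2,4,6}$ (one CCZ absorbed). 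The final step is to apply $H(\cdot)H$ gate by gate. Each remaining CCZ has a unique qubit in $\{3,5,6\}$ on which the Hadamard acts, giving $\tof_{1,2,3}, \tof_{1,4,5}, \tof_{2,4,6}$; each $\tof_{7,*,k}$ with target $k \in \{3,5,6\}$ is first turned into a $\ccz$ by $H_k$ and then into a Toffoli with target $7$ by $H_7$, giving $\tof_{1,6,7}, \tof_{2,5,7}, \tof_{3,4,7}$. Since the three target-$\{3,5,6\}$ Toffolis pairwise commute (disjoint targets, and no control of one is a target of another), the resulting product matches the staircase form of $U_3$ exactly.

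The only non-routine bookkeeping is the mod-$2$ phase-polynomial expansion in the CCZ-conjugation step; everything else reduces to the two identities above plus disjoint-support commutation moves.
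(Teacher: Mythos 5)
Your argument is correct and arrives at $FGF^{-1}=U_3$ by a genuinely different route than the paper, which simply labels the equality a ``direct computation'' (i.e.\ a machine check on $2^7$ basis states). Your structural reduction---CSWAP $\to$ CNOT-conjugated Toffoli, collection of the CNOT layers into $C$, the phase-polynomial collapse of $C\,\ccz_{1,2,3}\ccz_{1,4,5}\ccz_{2,4,6}\ccz_{3,5,6}\,C$ to $\ccz_{1,2,3}\ccz_{1,4,5}\ccz_{2,4,6}$, and finally the termwise Hadamard conjugation---is hand-verifiable and in fact explains why such a Clifford $F$ exists at all: the three CSWAPs contribute exactly the CNOT layer built into $F$, and the CCZ block is engineered so that conjugation by $C$ absorbs $\ccz_{3,5,6}$, leaving the three CCZs that become the $\{3,5,6\}$-targeted Toffolis of $U_3$ under $H_3H_5H_6$. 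I verified the mod-$2$ expansion: the monomials $q_1q_3q_5$, $q_2q_3q_6$, $q_4q_5q_6$ each appear twice and $q_3q_5q_6$ appears four times, all vanishing, exactly as you say. The last step then gives $\tof_{1,6,7}\tof_{2,5,7}\tof_{3,4,7}\cdot\tof_{1,2,3}\tof_{1,4,5}\tof_{2,4,6}$, and since the latter three Toffolis have pairwise distinct targets $3,5,6$ with no control-target mismatches they commute, matching the paper's ordering of $U_3$. One small wording slip: you say ``the two inner $C$'s cancel against the $C$'s from $F$ and $F^{-1}$,'' but in your own displayed formula only one $C$-pair actually annihilates (the one from $F$ against the first $C$ from $G$); the remaining two are what wrap the CCZ block as $C\cdot(\cdots)\cdot C$. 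The equation you wrote is right, so this is only an imprecision in the prose, not a gap. The trade-off between your approach and the paper's is the usual one: the brute-force check is fast and foolproof, while your decomposition is longer to read but self-contained and exposes the algebraic reason the conjugation closes up.
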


\begin{proof}
The fact that $U_3$ is a non-semi-Clifford permutation on $7$ qubits is a special case of \Cref{thm:U_k_non_semi_cliff}. Checking that $FGF^{-1}=U_3$ is a direct computation. 
\end{proof}

By~\Cref{sc-lowering}, this implies that for all $n \geq 7$, $\cC_3$ contains a non-semi-Clifford permutation.

\subsection{Lower bound on the number of qubits required for a \texorpdfstring{$\cC_3$}{C₃} permutation}

From~\Cref{thm:U_k_non_semi_cliff}, we know that $U_k$ is a $\cC_3$ permutation on $2^k-1$ qubits that contains a degree-$k$ monomial in its polynomial representation.
Our next result shows that any such permutation in $\cC_3$ must be supported on at least $2^k-1$ qubits.

\begin{restatable}[Lower bound on the number of qubits required for a $\cC_3$ permutation]{theorem}{thmlowerbound}
\label{thm:c3-degree-qubit}
If $\pi$ is a $\cC_3$ permutation such that there is a degree-$k$ monomial somewhere in the polynomial representation of $\pi$, then $n \geq 2^k-1$; this bound is sharp for all $k \geq 3$.
\end{restatable}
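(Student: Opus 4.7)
The plan is to reduce to the staircase-form case, pass through the descending-multiplication bijection, and translate the claim into a purely algebraic statement: nonvanishing of a top product in a descending multiplication forces all of its subproducts to be linearly independent. Sharpness will be immediate from the $U_k$ family.

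First, by \Cref{thm:c3-staircase} write $\pi = \phi_1 \mu \phi_2$ with $\mu$ a staircase-form $\cC_3$ permutation on the same $n$ qubits and $\phi_1,\phi_2$ Clifford permutations. By \Cref{clifford-perm} each $\phi_i$ acts as an affine map on $\mathbb{F}_2^n$, and composing with affine maps on either side cannot change the maximum total degree of the coordinate polynomials. Hence $\mu$ also has a coordinate of degree at least $k$, so some monomial $\prod_{i \in S} a_i$ with $|S| \geq k$ appears in the polynomial representation of $\mu$. Passing to the descending multiplication associated to $\mu$ via \Cref{thm:perm-is-mult}, \Cref{prod-ei-and-ai} then gives $v_S := \prod_{i \in S} e_i \neq 0$ in $\mathbb{F}_2^n$.

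The heart of the argument is the claim that the $2^{|S|}-1$ products $\{v_T : \varnothing \neq T \subseteq S\}$ are linearly independent over $\mathbb{F}_2$; this immediately forces $n \geq 2^{|S|}-1 \geq 2^k-1$. I plan to prove independence by contradiction: suppose $\sum_{T \in \mathcal{T}} v_T = 0$ for some nonempty family $\mathcal{T}$ of nonempty subsets of $S$, and let $T_0 \in \mathcal{T}$ have minimum size. If $T_0 = S$, minimality forces $\mathcal{T} = \{S\}$ and the relation already reads $v_S = 0$, a contradiction. Otherwise multiply the relation by $\prod_{i \in S \setminus T_0} e_i$. Using commutativity, associativity, and $e_i^2 = 0$, for each $T \in \mathcal{T}$ the term $v_T \cdot \prod_{i \in S \setminus T_0} e_i$ equals $\prod_{i \in T \cup (S \setminus T_0)} e_i$ when $T \cap (S \setminus T_0) = \varnothing$ (equivalently $T \subseteq T_0$) and vanishes otherwise. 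The minimality of $|T_0|$ forces the only surviving index in $\mathcal{T}$ to be $T = T_0$, so the equation collapses to $\prod_{i \in S} e_i = v_S = 0$, contradicting the previous paragraph.

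Sharpness is witnessed by the gate $U_k$ of \Cref{def:family_gates}, which acts on exactly $n = 2^k - 1$ qubits: \Cref{lem:U_k_poly_rep} (equivalently \Cref{prod-ei-and-ai} combined with $e_{\{1\}} e_{\{2\}} \cdots e_{\{k\}} = e_{[k]} \neq 0$) shows that its polynomial representation contains the degree-$k$ monomial $a_{\{1\}} a_{\{2\}} \cdots a_{\{k\}}$. The main obstacle is identifying the \emph{multiply by $\prod_{i \in S \setminus T_0} e_i$} trick in the independence argument; this is what cuts an arbitrary $\mathbb{F}_2$-linear relation down to a single-term equation $v_S = 0$, and once it is in hand every other step is a direct invocation of a result already established in the paper.
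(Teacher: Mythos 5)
Your proposal is correct and follows essentially the same path as the paper: reduce to staircase form via \Cref{thm:c3-staircase} (noting affine conjugation preserves degree), translate the nonvanishing monomial into $\prod_{i\in S} e_i \neq 0$ via \Cref{prod-ei-and-ai}, and then prove linear independence of $\{\prod_{i\in T} e_i : \varnothing \neq T \subseteq S\}$ by multiplying any purported dependency by the complementary product $\prod_{i\in S\setminus T_0} e_i$, killing every term except the one indexed by a minimal set and leaving $\prod_{i\in S} e_i = 0$, a contradiction. This is exactly the argument of \Cref{prod-ei-nonzero} and \Cref{staircase-degree} (the paper picks an inclusion-minimal set rather than a minimum-cardinality one, but that is immaterial), so there is nothing to flag.
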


We prove \Cref{thm:c3-degree-qubit} by taking advantage of the one-to-one correspondence between descending multiplications (\Cref{prod-ei-nonzero}) and $\cC_3$ permutations in staircase form (\Cref{staircase-degree}). 

\begin{proposition} \label{prod-ei-nonzero}
Given any descending multiplication, if $\Pi_{i \in S} e_i$ is nonzero for some $k$-element set $S$, then $n \geq 2^k-1$.
\end{proposition}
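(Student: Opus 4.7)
The plan is to exhibit $2^k - 1$ linearly independent vectors inside $\mathbb{F}_2^n$, namely the products $V_T := \prod_{i \in T} e_i$ indexed by the nonempty subsets $T \subseteq S$. Since $\mathbb{F}_2^n$ has dimension $n$, this immediately yields $n \geq 2^k - 1$.

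First, I would verify that every $V_T$ is nonzero. For any nonempty proper subset $T \subsetneq S$, associativity and commutativity give $V_S = V_T \cdot V_{S \setminus T}$, so the hypothesis $V_S \neq 0$ forces $V_T \neq 0$. Together with the hypothesis itself, this handles all nonempty $T \subseteq S$.

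Next, I would establish linear independence by contradiction. Suppose $\sum_{\varnothing \neq T \subseteq S} c_T V_T = 0$ is a nontrivial relation with $c_T \in \mathbb{F}_2$, and pick $T_0$ of minimum cardinality among those indices with $c_{T_0}=1$. The key step is to multiply the entire relation by $V_{S \setminus T_0}$. For each $T \neq T_0$ with $c_T = 1$: if $T \not\subseteq T_0$, then $T$ meets $S \setminus T_0$ in some index $\ell$, so after rearranging the expanded product by commutativity and associativity it contains $e_\ell^2 = 0$ as a factor, hence $V_T \cdot V_{S \setminus T_0} = 0$; while if $T \subseteq T_0$, the minimality $|T| \geq |T_0|$ forces $T = T_0$, a contradiction. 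Thus only the $T = T_0$ term survives, and the relation collapses to $V_{T_0} \cdot V_{S \setminus T_0} = V_S = 0$, contradicting the hypothesis. The boundary case $T_0 = S$ is handled even more directly: minimality kills every other term, leaving $V_S = 0$ outright.

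The main point, more a bookkeeping check than a genuine obstacle, is confirming that a repeated basis vector in the expanded product truly collapses it to zero; this uses only the axiom $e_i^2 = 0$ from the definition of descending multiplication, together with commutativity and associativity to move the repeated factor together. Notably, the ``descending'' span condition itself plays no direct role in this particular argument --- only the commutative, associative, square-zero structure of the product is used.
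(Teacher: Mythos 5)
Your proof is correct and takes essentially the same route as the paper: both establish linear independence of the products $\prod_{i\in T}e_i$ over nonempty $T\subseteq S$ by taking a smallest index in a hypothetical dependency, multiplying the relation by the product over the complementary index set, and using commutativity, associativity, and $e_i^2=0$ to kill every term except the one giving $\prod_{i\in S}e_i\ne 0$. The only cosmetic differences are that you select a term of minimum cardinality rather than one minimal under inclusion, and you front-load the observation that each $V_T$ is nonzero, which the contradiction argument already subsumes.
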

\begin{proof}
Suppose $\Pi_{i \in S} e_i$ is nonzero. For any nonempty subset $T$ of $S$, let $p_T = \Pi_{i \in T} e_i$. We shall show that the vectors $p_T$, over all nonempty subsets $T$ of $S$, are linearly independent.

Suppose for contradiction they are linearly dependent. Take a family $F$ of nonempty subsets of $S$ such that $\sum_{T \in F} p_T = 0$. Take a minimal element $U$ of $F$ (i.e.\ such that no proper subset of $U$ is in $F$). Let $V = S \backslash U$. 
Note that for any $T \in F$ with $T \neq U$, we have $T \not\subseteq U$, which means $T\cap V\ne \varnothing$. From \Cref{def:descending-mult}, we see that $p_Tp_V = 0$. 
Therefore, 
\[
\sum_{T \in F} p_Tp_V = p_Up_V = p_S.
\]
On the other hand, $\sum_{T \in F} p_T = 0$ implies $(\sum_{T \in F} p_T)p_V = 0$. 
Thus $p_S = 0$, which is a contradiction. We conclude that the vectors $p_T$ must be linearly independent. 
Since there are $2^k-1$ such vectors in the vector space $\mathbb{F}_2^n$, we must have $n \geq 2^k-1$.
\end{proof}

\begin{proposition}\label{staircase-degree}
If $\pi$ is a staircase form $\cC_3$ permutation such that there is a degree-$k$ monomial somewhere in the polynomial form of $\pi$, then $n \geq 2^k-1$. 
\end{proposition}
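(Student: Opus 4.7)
The plan is to chain together the two propositions that immediately precede this statement. By \Cref{thm:perm-is-mult}, the staircase form $\cC_3$ permutation $\pi$ corresponds bijectively to a descending multiplication on $\mathbb{F}_2^n$. The degree-$k$ monomial hypothesis should translate, via this correspondence, into a nonvanishing $k$-fold product of basis vectors, at which point \Cref{prod-ei-nonzero} gives the desired lower bound on $n$.

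First I would unpack the hypothesis: suppose that in the polynomial representation of $\pi$, some coordinate $\pi_j$ contains a monomial $\prod_{i \in S} a_i$ with $|S| = k$. By \Cref{prod-ei-and-ai}, the product $\prod_{i \in S} e_i$, computed in the descending multiplication associated to $\pi$, is precisely the indicator vector of those coordinates in which this monomial appears. Since $\pi_j$ contains this monomial, the $j$-th entry of $\prod_{i \in S} e_i$ is $1$, so in particular
\begin{equation*}
    \prod_{i \in S} e_i \neq 0.
\end{equation*}

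Next, I would invoke \Cref{prod-ei-nonzero}: the nonvanishing of the $k$-fold product $\prod_{i \in S} e_i$ for a $k$-element set $S$ forces $n \geq 2^k - 1$, as desired. This is where the real content lives, but that proposition has already been established, so the present proposition reduces to a short linking argument.

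There is essentially no obstacle here beyond being careful with the translation in the first step — one must confirm that \Cref{prod-ei-and-ai} genuinely applies to $\pi$ (which it does, since $\pi$ is a staircase form $\cC_3$ permutation and hence falls under \Cref{thm:perm-is-mult}), and that ``degree-$k$ monomial somewhere in the polynomial form of $\pi$'' is exactly the hypothesis needed to exhibit a $k$-element $S$ with $\prod_{i \in S} e_i \neq 0$. The sharpness claim for $k \geq 3$, which belongs to the enclosing \Cref{thm:c3-degree-qubit}, is witnessed directly by the family $\{U_k\}$ from \Cref{def:family_gates}, since $U_k$ is a staircase form $\cC_3$ permutation on $n = 2^k - 1$ qubits whose polynomial representation contains the degree-$k$ monomial $a_{\{1\}} a_{\{2\}} \cdots a_{\{k\}}$ by \Cref{lem:U_k_poly_rep}.
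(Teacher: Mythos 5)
Your argument is exactly the paper's: translate the degree-$k$ monomial into a nonvanishing $k$-fold product $\prod_{i \in S} e_i$ via \Cref{prod-ei-and-ai}, then apply \Cref{prod-ei-nonzero}. This is correct and matches the paper's proof; you are also right to note that the sharpness claim belongs to \Cref{thm:c3-degree-qubit} rather than \Cref{staircase-degree} itself.
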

\begin{proof}
Suppose $\Pi_{i \in S} a_i$ appears somewhere in the polynomial form of $\pi$, for some $k$-element set $S$. Then, for the descending multiplication corresponding to $\pi$, we have $\Pi_{i \in S} e_i$ is nonzero, by \Cref{prod-ei-and-ai}; then \Cref{prod-ei-nonzero} yields the desired.
\end{proof}

We are now ready to prove \Cref{thm:c3-degree-qubit}. 

\begin{proof}[Proof of \Cref{thm:c3-degree-qubit}]
Assume without loss of generality that $k \geq 2$. By \Cref{thm:c3-staircase}, we can write $\pi = \phi_1 \mu \phi_2$, for Clifford permutations $\phi_1$ and $\phi_2$ and staircase form $\mu \in \cC_3$. Now all terms in the polynomial forms of $\phi_1$ and $\phi_2$ are degree at most $1$; then if all terms in the polynomial form of $\mu$ have degree less than $k$, then all terms in the polynomial form of $\phi_1 \mu \phi_2 = \pi$ have degree less than $k$, a contradiction. Thus there exists some term in the polynomial form of $\mu$ with degree $d \geq k$. Then \Cref{staircase-degree} applied to $\mu$ yields that $n \geq 2^d-1 \geq 2^k-1$, as desired. The example of $U_k$ shows that the bound is sharp.
\end{proof}

\subsection{Rejection of Anderson's conjectures}
\label{subsec:anderson-conj}

We now disprove Anderson's two conjectures. 

\begin{lemma} \label{lem:mismatch-free-equals-commute}
Two multi-controlled NOT gates commute if and only if they have no mismatch (that is, there is no qubit that is used as a target in one and a control in the other).
\end{lemma}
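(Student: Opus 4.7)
The plan is to handle both directions by tracking the action on computational basis states. Let $G_1$ and $G_2$ be multi-controlled NOT gates with control sets $C_1, C_2$ and targets $t_1, t_2$, following the convention $t_i \notin C_i$. A \emph{mismatch} means $t_1 \in C_2$ or $t_2 \in C_1$. Each $G_i$ acts classically on basis states: it flips $t_i$ precisely when every qubit of $C_i$ holds $1$, and acts as the identity otherwise. Two such permutations commute if and only if they agree as functions on every basis state, so this is the natural testing ground.

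For the ``no mismatch implies commutation'' direction, I would observe that when $t_1 \notin C_2$ and $t_2 \notin C_1$, neither gate alters any qubit in the other's control set, so on any basis state the subset of $\{G_1, G_2\}$ that fires is identical in the two orders $G_1 G_2$ and $G_2 G_1$. The two target flips then commute for free: if $t_1 \neq t_2$ they act on disjoint qubits, and if $t_1 = t_2$ they toggle the same bit, which is an involution that commutes with itself.

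For the converse, I plan to exhibit a basis state witnessing non-commutativity. Since $t_i \notin C_i$, the case $t_1 = t_2$ implies no mismatch, so any mismatch has $t_1 \neq t_2$; by symmetry I may assume $t_1 \in C_2$. I would use the state in which every qubit of $C_1 \cup (C_2 \setminus \{t_1\})$ is set to $1$, the qubit $t_1$ is set to $0$, the qubit $t_2$ takes the value forced by the preceding constraint if $t_2 \in C_1$ (namely $1$) and is set to $0$ otherwise, and all remaining qubits are $0$. This assignment is consistent because $t_1 \notin C_1 \cup (C_2 \setminus \{t_1\})$ and $t_2 \notin C_2$, by the convention on controls and targets. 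In the order $G_2 G_1$, gate $G_1$ fires first and toggles $t_1$ from $0$ to $1$, after which every qubit of $C_2$ is $1$, so $G_2$ fires and toggles $t_2$. In the order $G_1 G_2$, gate $G_2$ finds $t_1 = 0$ in its control set and does nothing, after which $G_1$ toggles $t_1$ but $t_2$ is untouched. The resulting states differ in the $t_2$ coordinate, so the two products disagree.

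The main obstacle, which is mild but worth handling carefully, is verifying that this single construction covers the subcase in which both mismatches occur, $t_1 \in C_2$ and $t_2 \in C_1$ simultaneously. There, forcing the qubits of $C_1$ to $1$ pins $t_2 = 1$, and the argument still closes: in $G_2 G_1$ the gate $G_2$ toggles $t_2$ from $1$ to $0$, while in $G_1 G_2$ the value $t_2 = 1$ is preserved, so the two final states again disagree on $t_2$. Once this bookkeeping is in place, a single, uniformly described basis state refutes commutation whenever any mismatch exists, completing the lemma.
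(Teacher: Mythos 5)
Your proof is correct and follows essentially the same strategy as the paper's: the ``if'' direction is observed directly, and for the ``only if'' direction one exhibits a single computational basis state on which $G_1G_2$ and $G_2G_1$ disagree. The only real difference is that the paper picks the all-ones basis state $\ket{1^n}$, which immediately makes all controls of both gates satisfied (so $G_1$ fires, toggles $t_1\in C_2$ to $0$, and then $G_2$ is blocked), whereas you hand-craft a witness state with $t_1=0$ and controls $1$; this buys you a clean, uniform disagreement on the $t_2$ coordinate at the cost of some extra bookkeeping to make the assignment consistent when $t_2\in C_1$, which the paper instead absorbs into its final ``either... or...'' disjunction. Both resolve the double-mismatch edge case correctly.
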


\begin{proof}
The ``if'' direction is clear. Let us prove the ``only if'' direction. 
Assume for the sake of contradiction that they have mismatch. Without loss of generality, let the gates be $A$, with qubit $1$ as a control and qubit $2$ as target, and $B$, with qubit $1$ as target. Then $AB|1^n \ra = A|01^{n-1} \ra = |01^{n-1} \ra$, while $BA|1^n \ra = B|101^{n-2} \ra$, which is either $|101^{n-2} \ra$ or $|001^{n-2} \ra$; in either case $BA|1^n \ra \neq AB|1^n \ra$, so they do not commute.
\end{proof}

\begin{proposition}
\Cref{conj:c3-tof} and~\Cref{conj:perm-inverse} are false.
\end{proposition}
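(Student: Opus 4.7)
The plan is to refute both conjectures simultaneously, using the single gate $U_3$ from \Cref{def:family_gates} as the counterexample.

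For \Cref{conj:perm-inverse}, nothing needs to be done beyond invoking \Cref{thm:U_k_non_semi_cliff} with $k=3$: this gives $U_3 \in \cC_3$ but $U_3^{-1} \notin \cC_3$, which directly contradicts closure of $\cC_k^{\mathrm{sym}}$ under inverses at $k=3$.

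For \Cref{conj:c3-tof}, the idea is to show that any permutation of the conjectured form automatically has its inverse in $\cC_3$, so that $U_3$ cannot be of that form. Concretely, suppose for contradiction that $U_3 = \phi_1 C \phi_2$ where $\phi_1, \phi_2$ are Clifford and $C$ is a circuit of pairwise commuting Toffoli gates. By \Cref{lem:mismatch-free-equals-commute}, $C$ is mismatch-free, hence by \Cref{thm:anderson-mismatch-free} we have $C \in \cC_3$. Since Toffoli gates are involutions and the factors of $C$ pairwise commute, $C^{-1}$ equals $C$ (or equivalently the product in reverse order), which is again in $\cC_3$. Then $U_3^{-1} = \phi_2^{-1} C^{-1} \phi_1^{-1} \in \cC_3$ by closure of $\cC_3$ under left and right Clifford multiplication (\Cref{prop:cliff_hier}), contradicting $U_3^{-1} \notin \cC_3$.

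There is essentially no technical obstacle: every ingredient (the family $\{U_k\}$, mismatch-freeness equivalent to commutativity, Anderson's mismatch-free theorem, and Clifford closure of $\cC_3$) is already in place. The only subtlety worth stating explicitly is why $C^{-1} \in \cC_3$, which follows either from the self-inverse and commuting structure of the Toffoli factors, or alternatively from another direct application of \Cref{thm:anderson-mismatch-free} to the reversed (still commuting, still Toffoli) circuit.
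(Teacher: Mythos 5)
Your proof is correct. The refutation of \Cref{conj:perm-inverse} is the same as the paper's: invoke \Cref{thm:U_k_non_semi_cliff}. For \Cref{conj:c3-tof} you take a slightly different route. The paper argues: if the conjecture held, then by \Cref{lem:mismatch-free-equals-commute} every $\cC_3$ permutation would be (up to Cliffords) a mismatch-free Toffoli product, hence semi-Clifford by \Cref{mismatch-free-is-sc}, contradicting the fact that $U_3$ is non-semi-Clifford. You instead observe that a commuting product of involutions is its own inverse, so the conjectured decomposition $U_3 = \phi_1 C \phi_2$ would force $U_3^{-1} = \phi_2^{-1} C \phi_1^{-1} \in \cC_3$, directly contradicting $U_3^{-1} \notin \cC_3$. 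Both are clean; yours sidesteps the semi-Clifford machinery (\Cref{mismatch-free-is-sc}) entirely and zeroes in on the involutive structure, while the paper's version makes the link to semi-Cliffordness explicit, which is thematically closer to the rest of \Cref{sec-sc-perm}. One small simplification available to you: once you have $U_3 = \phi_1 C \phi_2$ with $U_3 \in \cC_3$ and $\phi_1, \phi_2$ Clifford, you already get $C \in \cC_3$ for free from part (2) of \Cref{prop:cliff_hier}, so the appeal to \Cref{thm:anderson-mismatch-free} is not actually needed.
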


\begin{proof}
\Cref{conj:perm-inverse} is false by \Cref{thm:U_k_non_semi_cliff}.
For \Cref{conj:c3-tof}, suppose it holds.
Then by~\Cref{lem:mismatch-free-equals-commute}, every permutation in $\cC_3$ is a mismatch-free product of Toffoli gates, up to Clifford permutations on the left and right. 
This implies that every permutation in $\cC_3$ is semi-Clifford by~\Cref{mismatch-free-is-sc}. 
This is a contradiction, as we know $U_3$ is a non-semi-Clifford permutation in $\cC_3$ for $n=7$. 
\end{proof}

\section*{Acknowledgments}
The work was conducted as a part of the 2024 Summer Program for Undergraduate Research (SPUR) and 2024-2025 Undergraduate Research Opportunities Program (UROP) at MIT. 
We thank Jonathan Bloom, Isaac Chuang, David Jerison, and Peter Shor for their mentorship. 
X.~Tan would like to thank Robert Calderbank for introducing the problem of characterizing the third-level Clifford hierarchy to her in 2022 and many insightful discussions afterwards. 
We thank Jonas Anderson, Jeongwan Haah, Aram Harrow, Greg Kahanamoku-Meyer, Andrey Khesin and Anirudh Krishna for helpful discussions. 

Z.~He is supported by National Science Foundation Graduate Research Fellowship under Grant No.~2141064. 
X.~Tan is supported by the U.S. Department of Energy, Office of Science, National Quantum Information Science Research Centers, Co-design Center for Quantum Advantage (C2QA) under contract number DE-SC0012704.

\bibliographystyle{alpha}
\bibliography{references}

\appendix

\section{The smallest non-semi-Clifford permutation} \label{sec-seven-is-best}

In this appendix, we show that all permutations in $\cC_3$ supported on at most $6$ qubits are semi-Clifford, while for all $n\ge 7$ there is a non-semi-Clifford permutation gate in $\cC_3$.

Suppose an $n$-qubit unitary $U\in \cC_k$ is not semi-Clifford. 
It is trivial to see the $(n+1)$-qubit unitary $U\otimes I_2$ is in $\cC_k$, but it is not completely trivial to conclude that $U\otimes I_2$ is also not semi-Clifford. 
It is sometimes glossed over in the literature (e.g.\ the proof of~\cite[Theorem 3]{zeng2008semi-clifford}). 
We here provide a more careful treatment of this fact. 

\begin{fact} \label{structure-of-max-ab-subgp}
Any maximal abelian subgroup of $\cP_n$ is isomorphic to $(\mathbb{Z}/2\mathbb{Z})^n$ up to phase.
\end{fact}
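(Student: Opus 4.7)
The plan is to identify $\cP_n$ modulo its center with a non-degenerate symplectic vector space over $\mathbb{F}_2$, and then invoke standard symplectic linear algebra to pin down the dimension of any maximal abelian subgroup.

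First, I would record that $Z(\cP_n) = \{\pm I, \pm iI\}$, and that any maximal abelian subgroup $A \subseteq \cP_n$ must contain this center: since the center commutes with everything, otherwise we could enlarge $A$ by adjoining a missing central element, contradicting maximality. The quotient $\overline{\cP_n} := \cP_n / Z(\cP_n)$ has cardinality $4^n$ and a natural structure of an $\mathbb{F}_2$-vector space of dimension $2n$, since each coset has a unique representative of the form $P_1 \otimes \cdots \otimes P_n$ with $P_i \in \{I, X, Y, Z\}$ and each $P_i$ is encoded by a pair in $\mathbb{F}_2^2$.

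Next, I would define a symplectic form $\omega : \overline{\cP_n} \times \overline{\cP_n} \to \mathbb{F}_2$ by $\omega(\bar P, \bar Q) = 0$ if $PQ = QP$ and $\omega(\bar P, \bar Q) = 1$ if $PQ = -QP$. This is well-defined on the quotient because the center is central (conjugation by central elements is trivial), and any two Pauli operators either commute or anticommute. It is bilinear and alternating; non-degeneracy follows because for every nontrivial Pauli operator there is a single-qubit Pauli anticommuting with it on at least one tensor factor. Under this correspondence, abelian subgroups of $\cP_n$ containing $Z(\cP_n)$ are in bijection with isotropic subspaces of $\overline{\cP_n}$, and maximality on each side matches.

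Finally, by standard symplectic linear algebra, a maximal isotropic subspace $V$ of a non-degenerate symplectic $\mathbb{F}_2$-space of dimension $2n$ satisfies $V = V^\perp$, and non-degeneracy gives $\dim V + \dim V^\perp = 2n$, forcing $\dim V = n$. Therefore $A / Z(\cP_n) \cong V \cong (\mathbb{Z}/2\mathbb{Z})^n$, which is precisely the statement ``up to phase.'' I do not anticipate any real obstacle; the only mildly delicate points are well-definedness of $\omega$ on the quotient and its non-degeneracy, both of which are routine verifications once one checks behavior on the generators $X_i, Z_i$.
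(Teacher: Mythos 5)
The paper states this result as an unproved \textit{Fact}, relying on the reader's familiarity; there is no in-text proof to compare against. Your symplectic argument is the standard and correct one: quotienting $\cP_n$ by its center $\{\pm I, \pm iI\}$ yields $\mathbb{F}_2^{2n}$ equipped with the alternating non-degenerate form given by the commutator, a maximal abelian subgroup necessarily contains the center and descends to a maximal isotropic subspace, and such a subspace has dimension $n$ because $V = V^\perp$ and $\dim V + \dim V^\perp = 2n$. The one step worth making fully explicit, which you have correctly implicit, is that the correspondence is a genuine bijection in both directions: the preimage of an isotropic subspace is abelian precisely because $\omega(\bar P,\bar Q)=0$ is \emph{equivalent} to $PQ=QP$, not merely implied by it. With that noted, the proof is complete and requires no repair.
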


\begin{lemma} \label{new-max-ab-subgp}
Let $A$ be a maximal abelian subgroup of $\cP_n$, and let $B$ be a (not necessarily maximal) abelian subgroup of $\cP_n$. Then there exists a maximal abelian subgroup $A'$ of $\cP_n$ such that $B \subseteq A' \subseteq \la A, B \ra$.
\end{lemma}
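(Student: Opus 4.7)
The plan is to work modulo the center of the Pauli group, where the commutation relations endow $\cP_n$ with a symplectic structure. Let $Z = \{\pm 1, \pm i\}I$ denote the center of $\cP_n$, and set $V := \cP_n / Z \cong \mathbb{F}_2^{2n}$. The commutation relation $PQ = (-1)^{\omega(\bar P, \bar Q)} QP$ defines a nondegenerate symplectic form $\omega$ on $V$. Under this correspondence, abelian subgroups of $\cP_n$ (which always contain $Z$) correspond to isotropic subspaces of $V$, and maximal abelian subgroups correspond to Lagrangian subspaces, i.e.\ isotropic subspaces of dimension $n$, equivalently those satisfying $W^\perp = W$ where $\perp$ denotes symplectic orthogonal. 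This is the content of \Cref{structure-of-max-ab-subgp}. Accordingly, let $\bar A, \bar B \subseteq V$ be the images of $A$ and $B$, so that $\bar A$ is Lagrangian, $\bar B$ is isotropic, and $\langle A, B\rangle$ corresponds to $\bar A + \bar B$. It suffices to produce a Lagrangian subspace $\bar{A'}$ with $\bar B \subseteq \bar{A'} \subseteq \bar A + \bar B$.

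The candidate I would try is
\[
\bar{A'} := (\bar A \cap \bar B^{\perp}) + \bar B.
\]
This is manifestly contained in $\bar A + \bar B$ and contains $\bar B$. I would first check isotropy: elements of $\bar A \cap \bar B^{\perp}$ commute with one another because they lie in the Lagrangian $\bar A$, they commute with $\bar B$ by the definition of $\bar B^{\perp}$, and $\bar B$ is isotropic by hypothesis. So $\omega$ vanishes on $\bar{A'}$.

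Next I would compute $\dim \bar{A'} = n$. Using $\bar A^\perp = \bar A$ and the identity $(U+W)^\perp = U^\perp \cap W^\perp$, one gets $(\bar A + \bar B^{\perp})^\perp = \bar A \cap \bar B$, so $\dim(\bar A + \bar B^{\perp}) = 2n - \dim(\bar A \cap \bar B)$. Combined with the inclusion-exclusion identity $\dim(\bar A + \bar B^\perp) = n + (2n - \dim \bar B) - \dim(\bar A \cap \bar B^\perp)$, this gives
\[
\dim(\bar A \cap \bar B^{\perp}) = n - \dim \bar B + \dim(\bar A \cap \bar B).
\]
Since $\bar B \subseteq \bar B^{\perp}$ by isotropy of $\bar B$, we have $(\bar A \cap \bar B^{\perp}) \cap \bar B = \bar A \cap \bar B$, so inclusion-exclusion applied to $\bar{A'}$ yields $\dim \bar{A'} = n$. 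Hence $\bar{A'}$ is Lagrangian.

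Finally, I would lift back to $\cP_n$, taking $A'$ to be the preimage of $\bar{A'}$, which is precisely the subgroup generated by $A \cap C(B)$ and $B$ (together with $Z$), where $C(B)$ denotes the centralizer of $B$ in $\cP_n$. The correspondence between Lagrangians and maximal abelian subgroups then shows that $A'$ is a maximal abelian subgroup of $\cP_n$ satisfying $B \subseteq A' \subseteq \langle A, B\rangle$. I do not expect a serious obstacle here; the one thing to watch is the phase bookkeeping when passing between $\cP_n$ and $V$, but since two Paulis with symplectic-orthogonal images commute exactly (not merely up to sign), the isotropic lift is automatically an honest abelian subgroup.
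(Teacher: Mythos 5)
Your proof is correct, and it takes a genuinely different route from the paper's. The paper works directly at the level of generators of $\cP_n$: it first handles the case where $B$ is cyclic, replacing the anti-commuting generators $a_1,\dots,a_k$ of $A$ with the sequential products $a_1a_2,\dots,a_{k-1}a_k$ and adjoining $b$; it then iterates over a generating set of $B$, arguing that each update preserves the generators of $B$ already folded in. Your argument instead passes to the symplectic space $V=\cP_n/Z\cong\mathbb{F}_2^{2n}$ and writes down the target Lagrangian in closed form, $\bar{A'}=(\bar A\cap\bar B^{\perp})+\bar B$, verifying isotropy directly and computing $\dim\bar{A'}=n$ by a clean $\perp$-duality and inclusion–exclusion count. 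The two constructions actually produce the same subgroup --- the paper's iteration, once unwound, generates exactly $\langle A\cap C(B),\,B\rangle$, the preimage of your $\bar{A'}$ --- but your derivation is more conceptual and sidesteps the induction, whereas the paper's stays entirely elementary and avoids invoking the symplectic-form dictionary. One small imprecision in your write-up: not every abelian subgroup of $\cP_n$ contains the center $Z$ (e.g.\ $\langle X\rangle$), though this is harmless here since you take $A'$ to be the full preimage of $\bar{A'}$, which automatically contains $Z$ and hence contains $B$; and $A$, being maximal abelian, does contain $Z$, so $A'\subseteq\langle A,B\rangle$ still holds. With that phrasing tightened, the proof is complete.
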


\begin{proof}
We first consider the case where $B$ is generated by a single operator $b$ (up to phase). 
Let $\{a_1, \ldots, a_n\}$ be the generators of $A$ up to phase. 
Without loss of generality, suppose $a_1, \ldots, a_k$ are the generators of $A$ which anti-commute with $b$. Consider sequential pairwise products of the form $a_1a_2, a_2a_3, a_3a_4, \ldots, a_{k-1}a_k$, and let $A'$ be the group generated by $\{b, a_1a_2, \ldots, a_{k-1}a_k, a_{k+1}, \ldots, a_n\}$ (up to $\{\pm 1, \pm i\}$ phase). We see that $A'$ is a maximal abelian subgroup of $\cP_n$ and $B\subseteq A'\subseteq \langle A, B\rangle$.

In the case where $B$ is generated by operators $b_1, \ldots, b_k$, we iteratively update $A'$ for every generator of $B$ with the above procedure. Note that the update procedure can be seen to preserve all elements of $A$ that commute with $b$. Thus, at every update, we keep all generators of $B$ that were already added (as $B$ is abelian); thus we obtain the desired subgroup.
\end{proof}

\begin{lemma} \label{sc-lowering}
Suppose $U$ is an $n$-qubit non-semi-Clifford gate. Then $U\otimes I_{2^m}$ is also not semi-Clifford on $n+m$ qubits for any positive integer $m$. 
\end{lemma}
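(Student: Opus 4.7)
The plan is to argue the contrapositive via the subgroup characterization of semi-Cliffordness in \Cref{lem:sc-with-subgps}. Suppose for contradiction that $V := U \otimes I_{2^m}$ is semi-Clifford, so there are maximal abelian subgroups $A_1, A_2 \leq \cP_{n+m}$ with $V A_1 V^\dagger = A_2$. The goal is to recover corresponding maximal abelian subgroups of $\cP_n$ related by conjugation by $U$, which would force $U$ itself to be semi-Clifford and yield the desired contradiction.

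The first step is to arrange $A_1$ to contain a clean ancilla subgroup. Let $B$ be the abelian group generated (modulo phase) by $I_{2^n} \otimes Z_{n+1}, \ldots, I_{2^n} \otimes Z_{n+m}$. By \Cref{new-max-ab-subgp} applied to $A_1$ and $B$, there is a maximal abelian subgroup $A_1' \leq \cP_{n+m}$ with $B \subseteq A_1' \subseteq \langle A_1, B\rangle$. Since $V$ acts as the identity on the last $m$ qubits, it commutes pointwise with $B$; and every element of $A_1'$ can be written as a product $a_1 b$ with $a_1 \in A_1$, $b \in B$, so conjugation by $V$ sends $a_1 \mapsto V a_1 V^\dagger \in A_2 \subseteq \cP_{n+m}$ and fixes $b$. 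Hence $V A_1' V^\dagger$ is an abelian subgroup of $\cP_{n+m}$ of the same order as $A_1'$, so it must itself be a maximal abelian subgroup $A_2'$, which again contains $B$.

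The next step is to project to the first $n$ qubits. Every element of $A_1'$ commutes with each $I \otimes Z_{n+j}$, which forces its last-$m$ tensor factor to lie in $\{I, Z\}^{\otimes m}$ (up to phase). Define $A_1^{(n)} \leq \cP_n$ to be the image of $A_1'$ under the map $c P_1 \otimes P_2 \mapsto P_1$; its kernel consists of elements whose first-$n$ part is trivial, which by the above is exactly $B$. Counting modulo phase gives $|A_1^{(n)}| = 2^{n+m}/2^m = 2^n$, so $A_1^{(n)}$ is a maximal abelian subgroup of $\cP_n$, and likewise for the analogously defined $A_2^{(n)}$. For any $P_1 \in A_1^{(n)}$, pick a lift $c P_1 \otimes P_2 \in A_1'$. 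Then $V(c P_1 \otimes P_2)V^\dagger = c(U P_1 U^\dagger) \otimes P_2$ sits in $A_2' \subseteq \cP_{n+m}$, and since $P_2$ is a fixed nonzero operator on the last $m$ qubits, the tensor-factor uniqueness forces $U P_1 U^\dagger$ to be a scalar times an element of $\cP_n$ whose phase-class lies in $A_2^{(n)}$. Comparing orders yields $U A_1^{(n)} U^\dagger = A_2^{(n)}$, and \Cref{lem:sc-with-subgps} then concludes that $U$ is semi-Clifford, contradicting the hypothesis.

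The main subtle point is showing in the first step that $V A_1' V^\dagger$ stays inside $\cP_{n+m}$ even though $V$ itself need not be Clifford. The decomposition of elements of $A_1'$ as products of an $A_1$-part and a $B$-part, supplied by \Cref{new-max-ab-subgp}, is essential here, because $V$ preserves the $B$-part exactly and maps the $A_1$-part to a Pauli by the semi-Clifford assumption on $V$. Once this is in hand, the remainder of the argument is essentially bookkeeping with subgroup orders and the tensor structure of Pauli operators.
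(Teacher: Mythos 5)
Your argument is correct and is essentially the paper's proof. Both are built on the same key helper lemma (\Cref{new-max-ab-subgp}) with the same choice of ancilla subgroup $B = \langle Z_{n+1}, \ldots, Z_{n+m}\rangle$, and both recover a maximal abelian subgroup of $\cP_n$ by factoring $B$ out of the resulting $A' \subseteq \langle A, B\rangle$. The paper phrases the first step via the group $G = \{P : U' P (U')^{-1} \in \cP_{n+m}\}$ and notes $A' \subseteq \langle A, B\rangle \subseteq G$; your explicit decomposition of elements of $A_1'$ as (phase)$\cdot a_1 \cdot b$ and the order count $2^{n+m}/2^m = 2^n$ spell out the two places the paper leaves implicit, but the structure of the argument is identical.
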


\begin{proof}
We show the contrapositive. 
Suppose $U'=U\otimes I_{2^m}$ is semi-Clifford. 
Consider the subgroup $G$ of $\cP_{n+m}$ consisting of all $P$ such that $U'P(U')^{-1} \in \cP_{n+m}$. 
We know by~\Cref{lem:sc-with-subgps} that $G$ contains a maximal abelian subgroup $A$ of $\cP_{n+m}$. 
Let $B=\la Z_{n+1}, \ldots, Z_{n+m} \ra \subseteq G$. 
Using~\Cref{new-max-ab-subgp}, we get a maximal abelian subgroup $A'$ of $\cP_{n+m}$ such that $B\subseteq A' \subseteq \la A, B \ra \subseteq G$. 
Thus, there exists a subgroup $A_1$ of $\cP_n$ with $A'= \la A_1, B \ra$. 
We can see that $A_1$ must have at least $2^n$ elements up to phase, so it must be a maximal abelian subgroup of $\cP_n$. 
So $UA_1U^{-1} \subseteq \cP_n$, which means that $U$ is semi-Clifford.
\end{proof}

It follows from~\Cref{prop:u3} and~\Cref{lem:gottesman-mochon} that $\cC_3$ contains a non-semi-Clifford permutation gate supported on $n$ qubits for any $n \geq 7$.
As mentioned in~\Cref{sec:prelim-CH}, it has been proved that all gates in $\cC_3$ supported on at most $n = 4$ qubits are semi-Clifford, while the cases for $n = 5$ and $6$ remain open.
We now show that all gates in $\cC_3^{\sym}$ on at most $6$ qubits is semi-Clifford.

\begin{proposition}
\label{prop:staircase-form-semi-clifford}
Consider a staircase form permutation $\pi \in \cC_3$ and the descending multiplication corresponding to $\pi$. The following are equivalent:

\begin{enumerate}
\item $\pi$ is semi-Clifford.

\item $\pi^{-1} \in \cC_3$.

\item All polynomials in the polynomial form of $\pi$ have degree at most $2$.

\item For all indices $i,j,k$, we have $e_ie_je_k=0$.
\end{enumerate}
\end{proposition}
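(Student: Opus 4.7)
The plan is to close the cycle $(1) \Rightarrow (2) \Rightarrow (3) \Rightarrow (4) \Rightarrow (1)$. The first three implications follow from tools already in the paper. For $(1) \Rightarrow (2)$, apply \Cref{lem:respect-sc}: a semi-Clifford element of $\cC_3$ has its inverse in $\cC_3$. For $(2) \Rightarrow (3)$, apply \Cref{ck-perm-poly} to $\pi^{-1} \in \cC_3$, which forces each coordinate of $(\pi^{-1})^{-1} = \pi$ to have degree at most $2$. For $(3) \Rightarrow (4)$, invoke \Cref{prod-ei-and-ai}, which identifies $e_i e_j e_k$ with the vector of positions containing an $a_i a_j a_k$ monomial in the polynomial form of $\pi$; the absence of any cubic monomial then forces every such triple product to vanish.

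The bulk of the work is $(4) \Rightarrow (1)$. Under (4), \Cref{mult-to-pi} simplifies to $\pi(v) = v + q(v)$, where $q(v) = \sum_{i<j} v_i v_j \, e_i e_j$ takes values in the annihilator ideal
\begin{equation*}
I \;=\; \{\, x \in \mathbb{F}_2^n : x e_i = 0 \text{ for all } i \,\}.
\end{equation*}
The goal is to exhibit $\pi$ as a Clifford $\times$ diagonal $\times$ Clifford product. Choose an invertible matrix $M$ over $\mathbb{F}_2$ with $M(\mathrm{span}\{e_k : k \in T\}) = I$ for some $T \subseteq [n]$ of size $\dim I$, and let $\phi$ be the Clifford permutation $|v\rangle \mapsto |Mv\rangle$ from \Cref{clifford-perm}. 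The conjugate $\tilde\pi := \phi^{-1} \pi \phi$ then acts as $v \mapsto v + \tilde q(v)$ with $\tilde q(v) := M^{-1} q(Mv)$ valued in $\mathrm{span}\{e_k : k \in T\}$. Moreover, the pullback bilinear operation $e_i \ast e_j := M^{-1}((Me_i)(Me_j))$ has annihilator ideal $\tilde I = M^{-1}(I) = \mathrm{span}\{e_k : k \in T\}$, so $e_k \ast e_\ell = 0$ whenever $k \in T$ or $\ell \in T$. Consequently $\tilde q(v)$ depends only on the coordinates $v_{T^c}$ indexed outside $T$.

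With this separation of variables in place, conjugate by the Hadamards $H^T := \bigotimes_{k \in T} H_k$. A direct calculation, inserting a resolution of identity in the $X$-basis on the $T$ qubits, yields
\begin{equation*}
H^T \, \tilde\pi \, H^T \;=\; \sum_{v \in \mathbb{F}_2^n} (-1)^{\tilde q(v_{T^c}) \cdot v_T} \, |v\rangle\langle v|,
\end{equation*}
which is diagonal. Hence $\tilde\pi = H^T \cdot d \cdot H^T$ with $d$ diagonal, so $\tilde\pi$ is semi-Clifford; then $\pi = \phi \tilde\pi \phi^{-1}$ is semi-Clifford by closure of the semi-Clifford property under Clifford multiplication (noted after \Cref{def:sc-gsc}), completing the cycle. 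The main obstacle is verifying that $\tilde q(v)$ depends only on $v_{T^c}$ after the basis change, which boils down to showing that the annihilator ideal of the pullback multiplication is exactly the coordinate subspace spanned by $\{e_k : k \in T\}$; this is what ultimately makes the Hadamard sandwich diagonal.
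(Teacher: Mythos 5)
Your chain $(1)\Rightarrow(2)\Rightarrow(3)\Rightarrow(4)$ is fine (for $(3)\Rightarrow(4)$ you should note that \Cref{prod-ei-and-ai} handles distinct $i,j,k$, and the remaining cases follow from $e_\ell^2=0$ and commutativity). The gap is in $(4)\Rightarrow(1)$, specifically in the sentence ``Consequently $\tilde q(v)$ depends only on the coordinates $v_{T^c}$.''

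Over $\mathbb{F}_2$ a vector-valued quadratic map is \emph{not} determined by its polar bilinear form; two quadratic maps with the same polar form differ by a linear map supported on the basis vectors. You correctly identified that the polar form of $\tilde q$ is the pullback product $\ast$, and that $e_k \ast e_\ell = 0$ whenever $k\in T$ or $\ell\in T$. But $\tilde q(v) = M^{-1}Q(Mv)$ is \emph{not} equal to $\sum_{i<j}v_iv_j\,e_i\ast e_j$; it differs by the linear map $L(v)=\sum_i v_i\,M^{-1}Q(Me_i)$, and $Q(Me_i)$ is generically nonzero because $Me_i$ is not a standard basis vector. In particular $L$ can be nonzero on the $T$-coordinates, in which case $\tilde q$ genuinely depends on $v_T$ and $H^T\,\tilde\pi\,H^T$ is not diagonal.

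A concrete counterexample is $n=4$ with descending multiplication $e_1e_2=e_1e_3=e_2e_3=e_4$ and all other products zero (so $\pi=\tof_{2,3,4}\tof_{1,3,4}\tof_{1,2,4}$, which is mismatch-free and hence certainly semi-Clifford). Here $I=\mathrm{span}\{e_4,\,e_1+e_2+e_3\}$ is not a coordinate subspace, and $Q(e_1+e_2+e_3)=e_1e_2+e_1e_3+e_2e_3=e_4\neq 0$. For any $M$ sending a two-dimensional coordinate subspace onto $I$, at least one $Me_k$ ($k\in T$) has $Q(Me_k)\neq 0$, so $\tilde q(e_k)\neq 0$ and your separation-of-variables claim fails.

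The gap is repairable: the map $v_T \mapsto v_T + L(v_T)$ is invertible (since $\tilde\pi$ is a permutation), so composing $\tilde\pi$ with one more Clifford permutation $|v\rangle\mapsto |(I+L|_T)^{-1}v_T, v_{T^c}\rangle$ cancels the offending linear part and restores the Hadamard-sandwich argument. The paper sidesteps this entirely by choosing \emph{two} independent Clifford permutations $\xi_1,\xi_2$ (so that $\xi_1\pi\xi_2$ fixes all basis vectors and has no linear defect), rather than a single conjugation. As written, though, your proposal is incomplete.
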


\begin{proof}

We have (1) implies (2) by \Cref{lem:respect-sc}, and (2) implies (3) by \Cref{ck-perm-poly} applied to $\pi^{-1}$. Also, (3) implies there are no degree $3$ terms in any polynomial in the polynomial form of $\pi$, so $e_ie_je_k=0$ for all distinct $i,j,k$ by \Cref{prod-ei-and-ai}. This implies that $e_ie_je_k=0$ for all $i,j,k$ since $e_l^2=0$ for all $l$, and multiplication is commutative. Therefore (3) implies (4).

It remains to prove (4) implies (1). Suppose (4) holds, which implies that any product of at least three elements is zero. Define $v_{ij}$ to be such that $\pi(e_i+e_j+v_{ij})=e_i+e_j$.
For any $i<j$ and any $w$, we have by \Cref{eq:pi(vij)} that $\pi(v_{ij}) = e_ie_j$, so by \Cref{pi-v-w} we have that 
\begin{align}\label{eq:apdx-vijw}
    \pi(v_{ij}+w) = \pi(v_{ij})+\pi(w)+\pi(v_{ij})\pi(w) = e_ie_j + \pi(w) + e_ie_j\pi(w) = e_ie_j + \pi(w),
\end{align}
as the product of any three elements is zero.
Let $P=\Span\{v_{ij}: i < j\}$ and let $Q=\Span\{e_ie_j: i < j\}$.
Observe that $vw \in Q$ for all $v$ and $w$. 
Observe that repeated use of \Cref{eq:apdx-vijw}, together with the fact that $\pi(0)=0$, implies that $\pi(p) \in Q$ for all $p \in P$. 
Since $\dim(P) \leq n$, let $k=n-\dim(P)$. 
Take a basis of $P$, and label its elements as $p_{k+1}, \dots, p_n$. 
Extend this basis to a basis $p_1, \dots, p_n$ of $\mathbb{F}_2^n$. Let $q_i = \pi(p_i)$ for all $i$, so $q_i \in Q$ for all $i>k$.
Again by repeated use of \Cref{eq:apdx-vijw}, we have for all $w$
\begin{align*}
    \pi(p_i + w) = q_i + \pi(w).
\end{align*}

The intuition for what follows is that $q_i$ will be a basis for $\mathbb{F}_2^n$, and if we change the bases of the inputs and outputs of $\pi$ from $e_i$ to, respectively, $p_i$ and $q_i$, then $\pi$ becomes a mismatch-free product of Toffolis, where $p_1, \dots, p_k$ are the controls and $p_{k+1}, \dots, p_n$ are the targets. While we will not prove this statement explicitly, our proof is guided by this intuition.

\begin{claim}
    $q_{k+1}, \dots, q_n$ is a basis of $Q$, and $q_1, \dots, q_n$ is a basis of $\mathbb{F}_2^n$.
\end{claim}
\begin{proofclaim}
Note that for any $S \subseteq \{k+1, \dots, n\}$, $\pi(\sum_{i \in S} p_i) = \sum_{i \in S} q_i$. If $\sum_{i \in S} q_i = 0$, then $\pi(\sum_{i \in S} p_i) = 0$, which means $\sum_{i \in S} p_i = 0$. This is a contradiction with the fact that $p_i$ is a basis for $P$. Therefore $q_{k+1}, \dots, q_n$ are linearly independent.
To see that they span $Q$, note that for any $i<j$, we can take $S \subseteq \{k+1, \dots, n\}$ such that $\sum_{m \in S} p_m = v_{ij}$. 
Then $\sum_{m \in S} q_m = e_ie_j$, which means $e_ie_j \in \Span(q_{k+1}, \dots, q_n)$. 
Thus $Q \subseteq \Span(q_{k+1}, \dots, q_n)$ and $q_{k+1}, \dots, q_n$ is a basis of $Q$.

To show that $q_1, \dots, q_n$ is a basis of $\mathbb{F}_2^n$, suppose $S \subseteq \{1, \dots, n\}$ is such that $\sum_{i \in S} q_i = 0$, we will show $S = \varnothing$. 
Let $A = S \cap \{1, \dots, k\}$ and $B = S \cap \{k+1, \dots, n\}$, so that $A \sqcup B = S$. 
Then 
\begin{align*}
    \sum_{i \in A} q_i = \sum_{i \in B} q_i =  \pi \left( \sum_{i \in B} p_i \right) \in Q,
\end{align*}
because $\sum_{i \in B} p_i \in P$.
Let $a = \sum_{i \in A} p_i$. 
By repeatedly applying \Cref{pi-v-w}, we have 
\begin{align*}
    \pi(a) = \sum_{T \subseteq A, T \neq \varnothing} \prod_{i \in T} q_i = \sum_{i \in A} q_i + \sum_{T \subseteq A, |T| \geq 2} \prod_{i \in T} q_i.
\end{align*}
We know $\sum_{i \in A} q_i \in Q$, and $\prod_{i \in T} q_i \in Q$ for any $T \subseteq A$. 
Therefore $\pi(a)$ is in $Q$, and we can take $C \subseteq \{k+1, \dots, n\}$ so that $\pi(a) = \sum_{i \in C} q_i$. 
But we also have $\pi(\sum_{i \in C} p_i) = \sum_{i \in C} q_i$, therefore 
\begin{align*}
    \sum_{i \in A} p_i = a = \sum_{i \in C} p_i.
\end{align*}
However, $A \cap C = \varnothing$ and the $p_i$ form a basis, therefore we must have that $A=C=\varnothing$. This implies
\begin{align*}
    0 = a = \sum_{i \in B} q_i = \pi(\sum_{i \in B} p_i),
\end{align*}
which means $\sum_{i \in B} p_i = 0$, so $B = \varnothing$. 
We conclude that $q_1, \dots, q_n$ are linearly independent, and thus are a basis of $\mathbb{F}_2^n$.
\end{proofclaim}

Now we can take invertible linear maps $\xi_1$ and $\xi_2$ so that $\xi_1(q_i)=e_i$ and $\xi_2(e_i)=p_i$ for all $i$. Note that $\xi_1$ and $\xi_2$ are Clifford permutations by \Cref{clifford-perm}. Let $\mu = \xi_1\pi\xi_2$.  
All polynomials in the polynomial forms of $\xi_1$ and $\xi_2$ have degree at most $1$, and all polynomials in the polynomial form of $\pi$ have degree at most $2$ by \Cref{prod-ei-and-ai} (since any product of at least three $e_i$ is zero), so all polynomials in the polynomial form of $\mu$ have degree at most $2$. Also, by construction, $\mu(e_i)=e_i$ for all $i$, and $\mu(0)=0$. Then the polynomial form of $\mu$ can be written as $(a_1, \dots, a_n) \mapsto (a_1+r_1, \dots, a_n+r_n)$ for some quadratic polynomials $r_i$.

We know $\xi_1(v+w) = \xi_1(v)+\xi_1(w)$ and $\xi_2(v+w) = \xi_2(v)+\xi_2(w)$ for all $v, w$ because $\xi_1, \xi_2$ are linear maps. 
For all $i<j$, by \Cref{pi-v-w} we can write 
\begin{align*}
    \mu(e_i+e_j) = \xi_1\pi(p_i+p_j) = \xi_1(q_i+q_j+q_iq_j) = e_i+e_j+\xi_1(q_iq_j).
\end{align*}
Note that $q_iq_j \in Q$, as the product of any two elements is in $Q$.
Therefore we can take $S_{ij} \subseteq \{ k+1, \dots, n\}$ with $\sum_{m \in S_{ij}} q_m = q_iq_j$, which means 
\begin{align*}
    \mu(e_i+e_j) = e_i+e_j+\xi_1(q_iq_j) = e_i+e_j+\sum_{m \in S_{ij}} e_m.
\end{align*}
This implies that, for all $m=1, \dots, n$, $r_m$ contains an $a_ia_j$ term if and only if $m \in S_{ij}$.
In particular, for all $m \leq k$, $r_m$ does not contain an $a_ia_j$ term for all $i<j$.
Therefore $r_m=0$ for all $m \leq k$, which means $\mu$ commutes with $Z_m$ for all $m \leq k$.
Complimentarily, for all $i>k$ and all $w$, we have 
\begin{align*}
    \mu(e_i+w) = \xi_1 \pi \xi_2 (e_i+w) = \xi_1 \pi (p_i+\xi_2(w)) = \xi_1(q_i + \pi \xi_2(w)) = \xi_1(q_i) + \xi_1\pi\xi_2(w) = e_i + \mu(w).
\end{align*}
This yields that $\mu$ commutes with $X_i$ for all $i \geq k$.

Since $\mu$ commutes with all of $Z_1, \dots, Z_k, X_{k+1}, \dots, X_n$, it also commutes with all elements of $\langle Z_1, \dots, Z_k, X_{k+1}, \dots, X_n \rangle$, which is a maximal abelian subgroup of $\mathcal{P}_n$. 
Thus $\mu$ is semi-Clifford by \Cref{lem:sc-with-subgps}, which means $\pi=\xi_1^{-1} \mu \xi_2^{-1}$ is semi-Clifford since $\xi_1^{-1}$ and $\xi_2^{-1}$ are Clifford. This completes our proof.
\end{proof}

\begin{theorem} \label{seven-is-best}
The smallest number of qubits for which there exists a non-semi-Clifford permutation in $\cC_3$ is $7$. 
\end{theorem}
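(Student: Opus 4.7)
The plan is to prove \Cref{seven-is-best} in two halves: an existence half (there is a non-semi-Clifford permutation in $\cC_3$ on $7$ qubits) and an optimality half (no such permutation exists on $\leq 6$ qubits).

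For the existence half, I can simply cite \Cref{prop:u3}, which already exhibits $U_3 \in \cC_3$ on $7$ qubits as a non-semi-Clifford permutation. Alternatively, one could invoke the Gottesman--Mochon gate via \Cref{lem:gottesman-mochon}, but $U_3$ is more directly a permutation, so that is the cleaner reference.

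The main content is the optimality half. Fix $n \leq 6$ and let $\pi \in \cC_3$ be any permutation gate. By \Cref{thm:c3-staircase}, write $\pi = \phi_1 \mu \phi_2$ where $\phi_1, \phi_2$ are Clifford permutations and $\mu \in \cC_3$ is in staircase form. Since semi-Cliffordness is preserved under left and right multiplication by Clifford gates (immediate from \Cref{def:sc-gsc} or \Cref{lem:sc-with-subgps}), it suffices to prove that $\mu$ itself is semi-Clifford. Now invoke \Cref{prop:staircase-form-semi-clifford}: $\mu$ is semi-Clifford if and only if the associated descending multiplication on $\mathbb{F}_2^n$ (guaranteed by \Cref{thm:perm-is-mult}) satisfies $e_i e_j e_k = 0$ for all indices $i, j, k$. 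The case of non-distinct indices is automatic from $e_l^2 = 0$ together with commutativity. For distinct $i<j<k$, apply \Cref{prod-ei-nonzero} with the $3$-element set $S = \{i,j,k\}$: if $e_i e_j e_k$ were nonzero, then $n \geq 2^3 - 1 = 7$, contradicting $n \leq 6$. Hence the condition holds, $\mu$ is semi-Clifford, and so is $\pi$.

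The two halves together give the theorem: $7$ is achievable (by $U_3$), and no smaller $n$ works. I do not expect any step here to be a real obstacle, since all the hard work has been developed in \Cref{sec-c3-perm,sec-descending-mult,sec:family_gates}; the proof is essentially a two-line assembly once one observes that the ``no triple product'' criterion of \Cref{prop:staircase-form-semi-clifford} matches exactly the threshold $n = 2^3 - 1$ in \Cref{prod-ei-nonzero}. The only mild subtlety is to remember to reduce to the staircase form case before applying the multiplication criterion, since \Cref{prop:staircase-form-semi-clifford} is stated only for staircase form permutations.
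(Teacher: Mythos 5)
Your proof is correct and follows essentially the same route as the paper: reduce via \Cref{thm:c3-staircase} to a staircase form $\mu$, translate semi-Cliffordness of $\mu$ to the triple-product condition of \Cref{prop:staircase-form-semi-clifford}, and invoke \Cref{prod-ei-nonzero} with $|S|=3$ to get the $2^3-1=7$ threshold. The only difference is purely presentational (you argue the contrapositive directly, the paper argues by contradiction), so there is nothing substantive to flag.
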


\begin{proof}
We already know from \Cref{subsec:family_example_k3} that there exists a non-semi-Clifford permutation in $\cC_3$ on $7$ qubits. Let us show this is minimal. Suppose $\pi$ is a non-semi-Clifford permutation in $\cC_3$ on $n$ qubits. Write $\pi = \phi_1\mu\phi_2$ as in \Cref{thm:c3-staircase}. If $\mu$ were semi-Clifford, then $\pi$ would be semi-Clifford; thus $\mu$ is non-semi-Clifford. Then \Cref{prop:staircase-form-semi-clifford} implies that, in the descending multiplication corresponding to $\mu$, there exist $i,j,k$ with $e_ie_je_k \neq 0$. These must be pairwise distinct (as $e_m^2=0$ for all $m$); then, applying \Cref{prod-ei-nonzero} with $S=\{i,j,k\}$ implies that $n \geq 2^3-1 = 7$, as desired. The proof is complete.
\end{proof}

\begin{remark}
\Cref{seven-is-best} can alternatively be shown by computer search, as shown as Theorem 5.4 in the previous version of this paper~\cite{he2024permutation}. \Cref{thm:c3-staircase} is key to the computer search: it suffices to check only staircase form permutations in $\cC_3$, and there are a total of $2^{\binom{6}{3}} = 1,048,576$ staircase form permutations on six qubits (whether in $\cC_3$ or not), which is a reasonable number for a computer search. The relevant code can be found at \url{https://github.com/Likable-outlieR/clifford-hierarchy}.
\end{remark}

\section{Semi-Clifford permutations} \label{sec-sc-perm}

In this appendix, we show that semi-Clifford permutation gates are in correspondence with mismatch-free circuits of multi-controlled NOT gates.
Recall that we use $C^kX$ to denote a NOT gate with $k$ control qubits.
Let $C^*X = \{C^kX, k\geq 0\}$ denote the collection of all multi-controlled NOT gates.

\begin{lemma} \label{mismatch-free-is-sc}
Any mismatch-free product $\mu$ of $C^*X$ gates is semi-Clifford.
\end{lemma}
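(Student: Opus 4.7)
The plan is to exhibit a single maximal abelian subgroup $A \subseteq \cP_n$ and show that $\mu A \mu^\dagger = A$; by \Cref{lem:sc-with-subgps}, this immediately implies $\mu$ is semi-Clifford (taking $A_1 = A_2 = A$). The mismatch-free hypothesis is exactly what makes such an $A$ available.

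To construct $A$, I would partition the qubits according to their role in $\mu$: let $T \subseteq [n]$ be the set of qubits that appear as a target in some gate of $\mu$, and let $C \subseteq [n]$ be the set of qubits that appear as a control. The mismatch-free hypothesis is the statement $T \cap C = \varnothing$. I then set
\[
A \;=\; \langle X_i : i \in T\rangle \,\cdot\, \langle Z_i : i \in [n] \setminus T\rangle \,\cdot\, \{\pm 1, \pm i\}.
\]
The displayed generators pairwise commute, because $X_i$ and $Z_j$ commute whenever $i \neq j$, so $A$ is abelian; and it has $n$ independent Pauli generators up to phase, so $A$ is maximal abelian in $\cP_n$.

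The heart of the argument is three standard conjugation identities for a single multi-controlled gate $g = C^kX$ with controls $c_1,\ldots,c_k$ and target $t$: $g$ commutes with $X_t$, with each $Z_{c_j}$, and with every $X_i$ or $Z_i$ for $i \notin \{c_1,\ldots,c_k,t\}$. All three follow from a direct computation on the computational basis, using the explicit action $|a\rangle \mapsto |a + a_{c_1}\cdots a_{c_k}\, e_t\rangle$. Using these, I would argue qubit-by-qubit that every generator of $A$ is fixed by conjugation by each gate of $\mu$, hence by $\mu$: for $i \in T$ and any gate $g$, either $i$ is the target of $g$ (so $X_i$ is fixed directly), or $i$ is not the target and, by mismatch-freeness $i \notin C$, so $i$ is not a control either, meaning $g$ does not touch $i$ at all; for $i \notin T$ and any gate $g$, either $i$ is a control of $g$ (so $Z_i$ is fixed) or $g$ does not touch $i$. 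In every case, $g$ fixes the relevant generator, so $\mu A \mu^\dagger = A$.

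I do not anticipate a genuine obstacle: the three conjugation identities are routine single-gate computations, and the maximality of $A$ reduces to a generator count together with the pairwise commutation observation above. The only subtlety worth flagging in the write-up is that mismatch-freeness enters in both case analyses---it is what prevents an $i \in T$ from also appearing as a control somewhere, and it is what guarantees that the $X$-part and $Z$-part of $A$ can be chosen to be simultaneously fixed by the whole product $\mu$.
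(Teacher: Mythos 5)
Your proposal is correct and takes essentially the same approach as the paper's one-line proof: both construct the maximal abelian subgroup generated by $X$ on target qubits and $Z$ on non-target qubits, observe that $\mu$ commutes with it, and invoke \Cref{lem:sc-with-subgps}. You have merely written out the per-gate, per-qubit commutation check that the paper leaves implicit.
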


\begin{proof}
Consider an $X$ gate on every target qubit and a $Z$ gate on every non-target qubit. These gates generate a maximal abelian subgroup of $\cP_n$ up to phase, and $\mu$ will commute with every element of this subgroup. The claim follows from~\Cref{lem:sc-with-subgps}.
\end{proof}

\begin{theorem} \label{thm:semi-clifford-mismatch-free}
For any permutation gate $\pi$ that is semi-Clifford, there exist Clifford permutations $\phi_1, \phi_2$ and a mismatch-free product $\mu$ of $C^*X$ gates such that $\pi = \phi_1 \mu \phi_2$.
\end{theorem}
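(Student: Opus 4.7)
The plan is to locate inside $\cP_n$ a \emph{fully separable} maximal abelian subgroup preserved by $\pi$---one whose generators are all pure $X$- or pure $Z$-type Paulis---and then rotate it to the standard form $A_\mu := \langle X_1,\dots,X_k,Z_{k+1},\dots,Z_n\rangle$ by Clifford permutations on both sides, producing a mismatch-free $\mu$ in between. The main obstacle is showing that such a fully separable subgroup exists at all: semi-Cliffordness only guarantees \emph{some} preserved maximal abelian, which could in principle contain mixed $X$-$Z$ generators.

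After absorbing $X(\pi(0^n))$ into $\phi_1$ I assume $\pi(0^n)=0^n$. Define
\[
V := \{a\in\mathbb{F}_2^n : u\mapsto\pi(u+a)-\pi(u)\text{ is independent of }u\},\quad W := \{b\in\mathbb{F}_2^n : v\mapsto\langle\pi^{-1}(v),b\rangle\text{ is linear}\},
\]
with injective linear maps $\alpha:V\to\mathbb{F}_2^n$ and $\beta:W\to\mathbb{F}_2^n$ satisfying $\pi X(a)\pi^{-1}=X(\alpha(a))$ and $\pi Z(b)\pi^{-1}=Z(\beta(b))$. Since $\pi$ is a permutation, the $X$- and $Z$-actions of $\pi X(a)Z(b)\pi^{-1}$ must be Pauli \emph{separately}, so the set of Paulis $P$ with $\pi P\pi^{-1}\in\cP_n$ is exactly $S := \{c\,X(a)Z(b) : a\in V, b\in W\}$. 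Semi-Cliffordness provides a maximal abelian $L\subseteq S$. For $a\in W^\perp$, $X(a)$ commutes with every element of $S\supseteq L$, so by maximality $X(a)\in L\subseteq S$, which forces $a\in V$; hence $W^\perp\subseteq V$, and by the symmetric argument with $Z(b)$ we get $V^\perp\subseteq W$.

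Put $k:=\dim V$ and $A_1:=\langle X(a):a\in V\rangle\cdot\langle Z(b):b\in W\cap V^\perp\rangle$. The inclusion $V^\perp\subseteq W$ gives $\dim(W\cap V^\perp)=n-k$, and $V\perp(W\cap V^\perp)$ gives commutativity, so $A_1$ is a fully separable maximal abelian subgroup of $\cP_n$ lying in $S$. Thus $A_2:=\pi A_1\pi^{-1}$ is maximal abelian in $\cP_n$; since $\pi$ is a permutation, pure $X$- (resp.\ pure $Z$-) generators of $A_1$ map to pure $X$- (resp.\ pure $Z$-) generators of $A_2$, so $A_2$ is also fully separable with pure parts $\alpha(V)$ and $\beta(W\cap V^\perp)$. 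Abelianness of $A_2$ forces $\alpha(V)\perp\beta(W\cap V^\perp)$, and matching dimensions then yield $\beta(W\cap V^\perp)^\perp=\alpha(V)$.

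Choose invertible $M$ whose first $k$ columns of $M^{-1}$ form a basis of $V$; the orthogonality automatically makes $M^{-T}$ carry $W\cap V^\perp$ bijectively onto $\operatorname{span}\{e_{k+1},\dots,e_n\}$, so $\phi_2:|v\rangle\mapsto|Mv\rangle$ satisfies $\phi_2 A_1\phi_2^{-1}=A_\mu$. Analogously, an invertible $M'$ whose first $k$ columns are $\alpha(v_t)$ (for $v_t:=M^{-1}e_t$) and whose last $n-k$ columns of $M'^{-T}$ equal $\beta(w_c)$ (for $w_c:=M^T e_c$) exists via a dual-basis argument enabled by the equality $\beta(W\cap V^\perp)^\perp=\alpha(V)$; let $\phi_1:|v\rangle\mapsto|M'v\rangle$, so $\phi_1 A_\mu\phi_1^{-1}=A_2$. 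Set $\mu:=\phi_1^{-1}\pi\phi_2^{-1}$. With $\pi(0)=0$ and zero offsets for $\phi_1,\phi_2$, a direct computation yields $\mu X_t\mu^{-1}=X_t$ for $t\leq k$ and $\mu Z_c\mu^{-1}=Z_c$ for $c>k$, \emph{exactly} (no signs). These commutation relations force $\mu(v)_c=v_c$ for $c>k$ and $\mu(v)_t=v_t+g_t(v_{k+1},\dots,v_n)$ for $t\leq k$; expanding each $g_t$ in monomials realizes $\mu$ as a product of multi-controlled NOT gates with targets in $\{1,\dots,k\}$ and controls in $\{k+1,\dots,n\}$---mismatch-free, since targets and controls are disjoint sets of qubits.
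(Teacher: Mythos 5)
Your proof is correct, and while it rests on the same structural insight as the paper's — that a Pauli $cX(a)Z(b)$ preserved under conjugation by a permutation must have its $X$- and $Z$-parts preserved separately, because permutation-times-diagonal factorizations of Paulis are unique (the paper packages this as \Cref{pauli-is-pd}) — it organizes the argument along a noticeably different route. The paper first conjugates so that the permutation part $M = G \cap \mathcal{X}$ of the stabilizer group $G$ is $\langle X_1,\dots,X_m\rangle$, then invokes the general adjustment lemma \Cref{new-max-ab-subgp} to replace the given preserved maximal abelian $A$ by a new one $A'$ containing $M$, argues via the pd-factorization that $A'$ must be $\langle X_1,\dots,X_m,Z_{m+1},\dots,Z_n\rangle$, and finally delegates the extraction of the mismatch-free circuit to the separate \Cref{phi-mu}. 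You instead introduce the preserved-translation subspace $V$ and preserved-linear-form subspace $W$ directly, derive $W^\perp \subseteq V$ and $V^\perp \subseteq W$ from the self-centralizing property of the given maximal abelian $L$ inside $S$, construct the fully separable group $A_1 = \langle X(V), Z(V^\perp)\rangle$ and its image $A_2$ explicitly, and then build $\phi_1$, $\phi_2$ concretely (the dual-basis construction for $M'$ hinges on $\beta(V^\perp) = \alpha(V)^\perp$, which you correctly extract from abelianness of $A_2$ plus a dimension count), finishing with the in-line commutation check $\mu X_t \mu^{-1} = X_t$, $\mu Z_c \mu^{-1} = Z_c$. What you lose is modularity — \Cref{new-max-ab-subgp} is reused elsewhere in the paper (in \Cref{sc-lowering}), so the authors get mileage out of isolating it — but what you gain is a shorter, self-contained derivation in which the orthogonality relations $W^\perp \subseteq V$, $V^\perp \subseteq W$ and their role are made explicit. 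Both approaches converge on the identical endgame: $\mu$ commutes with a coordinate-split basis $X_1,\dots,X_k,Z_{k+1},\dots,Z_n$, hence its polynomial form has targets only on $\{1,\dots,k\}$ and controls only on $\{k+1,\dots,n\}$.
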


Before proving this theorem, we prove a few useful lemmas. 
Given a vector $u\in \mathbb{F}_2^n$, we use the notation $X^u$ to denote the operator $X^{u[1]}\otimes X^{u[2]}\otimes \dots \otimes X^{u[n]}$, where $u[i]$ denote the $i$-th index of $u$, $X^1 = X$ and $X^0 = I$. 
Define $Z^u$ similarly. 
Any Pauli operator $P\in \cP_n$ has a decomposition as $P = cX^uZ^v$ for some phase $c$ and $u, v\in \mathbb{F}_2^n$. 

\begin{lemma} \label{pauli-is-pd}
Every Pauli gate can be uniquely written as the product of a permutation gate and a diagonal gate; furthermore, the permutation gate and diagonal gate are each individually Pauli.
\end{lemma}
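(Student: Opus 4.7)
The plan is to use the standard normal form $P = c X^u Z^v$ of a Pauli operator and observe that its $X$-part and $Z$-part are exactly the required permutation and diagonal factors. For existence, I would first write any $P \in \cP_n$ as $c X^u Z^v$ for some phase $c \in \{\pm 1, \pm i\}$ and vectors $u,v \in \mathbb{F}_2^n$. This is a routine reduction: for each tensor factor, use $Y = iXZ$ to rewrite $P_i \in \{I_2,X,Y,Z\}$ as a phase times $X^{u_i} Z^{v_i}$, then collect all the single-qubit phases into $c$. Now define $\pi := X^u$ and $D := cZ^v$. Then $\pi$ acts as $|a\rangle \mapsto |a+u\rangle$, so it is a permutation matrix and lies in $\cP_n$ with trivial phase; and $D$ is diagonal with entries $c(-1)^{v \cdot a}$, also lying in $\cP_n$. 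This gives the decomposition $P = \pi D$ with both factors Pauli.

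For uniqueness, I would argue directly from how $P$ acts on basis vectors. Suppose $P = \pi_1 D_1 = \pi_2 D_2$ are two decompositions with each $\pi_j$ a permutation and each $D_j$ diagonal. For every standard basis vector $|a\rangle$,
\begin{equation*}
P|a\rangle = (\pi_j D_j)|a\rangle = (D_j)_{a,a}\,|\pi_j(a)\rangle,
\end{equation*}
which is a nonzero scalar multiple of a single basis vector (nonzero because $P$ is unitary, so each column has unit norm). Matching the supports of the two expressions forces $\pi_1(a) = \pi_2(a)$ for all $a$, hence $\pi_1 = \pi_2$; then $D_1 = \pi_1^{-1} P = \pi_2^{-1} P = D_2$.

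There is no real obstacle here. The only care needed is phase bookkeeping: one must check that the split $X^u \cdot (c Z^v)$ really places $c$ into the diagonal factor (so the permutation factor $X^u$ is an honest $\{0,1\}$-matrix) and that both factors remain elements of $\cP_n$ rather than scaled permutation or scaled diagonal matrices outside the Pauli group.
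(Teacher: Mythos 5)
Your proof is correct and follows essentially the same approach as the paper: existence via the normal form $P = cX^u Z^v$ with $\pi = X^u$ and $D = cZ^v$, and uniqueness from the fact that a permutation matrix and a diagonal matrix can only coincide at the identity. Your uniqueness argument compares the action on basis vectors, while the paper rearranges $p'd' = pd$ into $(p')^{-1}p = d'd^{-1}$ and notes the only diagonal permutation matrix is $I$; these are the same observation in slightly different dress.
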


\begin{proof}
Any Pauli operator $P$ can be written as $P = cX^uZ^v$, where $p = X^u$ is a permutation gate and $d = cZ^v$ is a diagonal gate. 
It remains to show uniqueness of the representation. To see this, suppose $P=p'd'$ for permutation $p'$ and diagonal $d'$. 
We have $(p')^{-1}p = (p')^{-1}Pd^{-1} = d'd^{-1}$. 
Since $(p')^{-1}p$ is a permutation matrix, $d'd^{-1}$ is a diagonal matrix, and the only diagonal permutation matrix is the identity, we must have $(p')^{-1}p = d'd^{-1} = I$. Therefore $p'=p$ and $d'=d$, as desired.
\end{proof}

The following lemma is a special case of~\Cref{thm:semi-clifford-mismatch-free}.

\begin{lemma}\label{phi-mu}
Suppose $\pi$ is a permutation gate, and $m \leq n$ is a nonnegative integer such that \\ $\pi X_1 \pi^{-1}, \ldots, \pi X_m \pi^{-1}, \pi Z_{m+1} \pi^{-1}, \ldots, \pi Z_n \pi^{-1} \in \cP_n$. Then there exist Clifford permutations $\phi_1, \phi_2$ and a mismatch-free product $\mu$ of $C^*X$ gates such that $\pi = \phi_1 \mu \phi_2$. 
\end{lemma}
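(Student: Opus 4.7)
The plan is to apply Clifford permutations on both sides of $\pi$ to reduce it to a canonical form in which the mismatch-free structure is transparent. I work with the bijection $\sigma: \mathbb{F}_2^n \to \mathbb{F}_2^n$ defined by $\pi|y\rangle = |\sigma(y)\rangle$, splitting coordinates as $(y_T, y_C)$ with $y_T \in \mathbb{F}_2^m$ indexed by the ``target'' positions $\{1,\ldots,m\}$ and $y_C \in \mathbb{F}_2^{n-m}$ indexed by the ``control'' positions $\{m+1,\ldots,n\}$.

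First, since $\pi X_i \pi^{-1}$ is simultaneously a permutation matrix and a Pauli operator, it equals $X^{u_i}$ (with no scalar phase) for some $u_i \in \mathbb{F}_2^n$, and the vectors $u_1,\ldots,u_m$ must be linearly independent because the $X_i$ are. By \Cref{conj-cliff-perm} there is a Clifford permutation $\phi$ with $\phi X_i \phi^{-1} = X^{u_i}$ for all $i\le m$. Setting $\tilde{\pi} := \phi^{-1}\pi$, we get $\tilde{\pi} X_i \tilde{\pi}^{-1} = X_i$ for $i\le m$; equivalently, the bijection $\tilde{\sigma}$ of $\tilde{\pi}$ satisfies $\tilde{\sigma}(y+e_i) = \tilde{\sigma}(y) + e_i$ for $i\le m$. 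This forces the block form
\[
\tilde{\sigma}(y_T, y_C) = \bigl(y_T + h(y_C),\; k(y_C)\bigr)
\]
for some function $h: \mathbb{F}_2^{n-m} \to \mathbb{F}_2^m$ and some bijection $k$ of $\mathbb{F}_2^{n-m}$ (bijectivity of $k$ follows from bijectivity of $\tilde{\sigma}$).

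Next, since $\phi$ is a Clifford permutation, $\tilde{\pi} Z_j \tilde{\pi}^{-1} = \phi^{-1}(\epsilon_j Z^{v_j})\phi$ remains a diagonal Pauli for $j>m$ (conjugation of a diagonal by a permutation stays diagonal; conjugation of a Pauli by a Clifford stays Pauli). By \Cref{eq:pi_conjugate_on_Z} combined with \Cref{diagonal-in-ch} applied at level $1$, the $j$-th coordinate of $\tilde{\sigma}^{-1}$ must be affine in $x$. Inverting the block form above yields $\tilde{\sigma}^{-1}(x_T, x_C) = \bigl(x_T + h(k^{-1}(x_C)),\; k^{-1}(x_C)\bigr)$, so affineness of the coordinates of $\tilde{\sigma}^{-1}$ indexed by $j>m$ is exactly affineness of $k^{-1}$, and therefore of $k$. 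Write $k(y_C) = A y_C + b$ with $A$ an invertible matrix over $\mathbb{F}_2$.

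Finally, let $\rho$ be the Clifford permutation $(y_T, y_C) \mapsto (y_T,\; A^{-1}(y_C - b))$ (Clifford by \Cref{clifford-perm}) and set $\mu := \tilde{\pi}\rho$. Direct substitution gives that $\mu$ corresponds to the bijection $(y_T, y_C)\mapsto (y_T + h'(y_C), y_C)$ with $h'(y_C) := h(A^{-1}(y_C-b))$. Such a bijection decomposes as a mismatch-free product of $C^*X$ gates: expand each Boolean function $h'_i$ (for $i\le m$) as a polynomial in $y_{m+1},\ldots,y_n$, and for every monomial $\prod_{j\in S} y_j$ appearing in $h'_i$ include one $C^{|S|}X$ gate with target $i$ and controls $S$. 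All these gates have targets in $\{1,\ldots,m\}$ and controls in $\{m+1,\ldots,n\}$, so they pairwise commute (no control-target mismatch), and their product is precisely $\mu$. Then $\pi = \phi\mu\rho^{-1}$ yields the required decomposition with $\phi_1 := \phi$ and $\phi_2 := \rho^{-1}$. I expect the main subtlety to lie in the affineness step: one must check that the $X$-normalization by $\phi^{-1}$ preserves the Pauli-diagonal property of the $Z_j$-images, and then carefully translate ``every coordinate $j>m$ of $\tilde{\sigma}^{-1}$ is affine in $x$'' into ``the restricted map $k$ is affine'' via the explicit inversion formula for $\tilde{\sigma}$. Once $k$ is known to be affine, the definition of $\rho$ and the explicit $C^*X$ expansion of $\mu$ are mechanical.
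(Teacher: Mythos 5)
Your proof is correct, and it reaches the same canonical form as the paper via a mildly different normalization sequence. Both you and the paper begin by invoking \Cref{conj-cliff-perm} to straighten out $\pi X_i \pi^{-1} = X_i$ for $i \le m$. After that you diverge: the paper normalizes the images $\pi Z_j \pi^{-1} = \epsilon_j Z^{w_j}$ directly, first killing the signs $\epsilon_j$ by a right multiplication with a Pauli $\chi$, and then left-multiplying by the Clifford permutation $\varpi$ built from the matrix $M^\top$ (with $Me_j = w_j$) and checking by explicit computation that this sends $Z^{w_j} \mapsto Z_j$ while preserving commutation with $X_i$. You instead first write down the block form $\tilde\sigma(y_T,y_C) = (y_T + h(y_C), k(y_C))$ forced by $X$-commutation, then invoke \Cref{eq:pi_conjugate_on_Z} together with \Cref{diagonal-in-ch} at level $1$ to argue directly that the control-block bijection $k$ is affine, and finally absorb $k$ by one right Clifford multiplication $\rho$. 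Your route trades the paper's hands-on $Z$-conjugation computation for a cleaner conceptual appeal to the degree-$1$ case of the diagonal-gate characterization; the paper's route avoids needing the inversion formula for $\tilde\sigma$. Both are valid, and both end at the same $(y_T,y_C) \mapsto (y_T + h'(y_C), y_C)$ canonical form with the identical final monomial-to-$C^*X$ expansion. One small remark: your $h'$ may contain a constant term, in which case the corresponding gates are plain $X$'s (i.e.\ $C^0X$); the paper avoids this by killing the signs with $\chi$ before the $\varpi$-normalization, but $C^0X$ gates are still $C^*X$ gates and the mismatch-free structure is unaffected, so this is purely cosmetic.
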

\begin{proof}
Let $X'_i = \pi X_i \pi^{-1}$. By~\Cref{conj-cliff-perm}
we can take a Clifford permutation $\nu$ such that $X'_i = \nu X_i \nu^{-1}$. 
Replacing $\pi$ with $\nu^{-1} \pi$, which preserves the property that $\pi Z_j \pi^{-1} \in \cP_n$ for $m+1\le j\le n$, we can assume without loss of generality that $\pi$ commutes with $X_1, \ldots, X_m$. 

For $m+1\le j\le n$, $\pi Z_j \pi^{-1}$ is a diagonal gate in the Pauli group, i.e.\ $\epsilon_j Z^{w_j}$ for some vector $w_j$ and $\epsilon_j=\pm 1$. 
Since $\pi Z_j \pi^{-1}$ commutes with $\pi X_i \pi^{-1} = X_i$ for $i\in [m]$, we must have $w_j$ is zero on the first $m$ components for all $m+1\le j\le n$. 
Let $\chi$ be the product of $X_j$ over all $j$ with $\epsilon_j = -1$. By replacing $\pi$ with $\pi\chi$, we can assume without loss of generality 
that $\epsilon_j = 1$ for all $j$, while preserving the property that $\pi$ commutes with $X_1, \ldots, X_m$, and without changing $w_{m+1}, \ldots, w_n$.

Since $Z_j$ are independent, the vectors $w_j$ are also linearly independent. 
Since the first $m$ components of each $w_j$ are zeros, $e_1, \ldots, e_m, w_{m+1}, \ldots, w_n$ forms a linear basis. 
Hence, there exists an invertible matrix $M$ with $Me_i=e_i$ for $i\in [m]$ and $Me_j = w_j$ for $m+1\le j\le n$. 
Consider the map $\varpi$ defined as $|v \ra \mapsto |M^\top v \ra$ which is a Clifford permutation by~\Cref{clifford-perm}. 
Then, $\varpi (\pi Z_j \pi^{-1}) \varpi^{-1} = \varpi Z^{Me_j} \varpi^{-1}$ sends 
\[ 
|v \ra \mapsto |(M^\top )^{-1}v \ra 
\mapsto (-1)^{v^\top M^{-1}Me_j} |(M^\top )^{-1} v \ra 
\mapsto (-1)^{v^\top e_j} |v \ra, 
\] 
so $\varpi \pi Z_j \pi^{-1} \varpi^{-1} = Z_j$. Also, since the first $m$ components of $w_j$ are zeros for $m+1\le j\le n$, we have $M^\top e_i = e_i$ for $i = 1, \ldots, m$.
Therefore, $\varpi \pi X_i \pi^{-1} \varpi^{-1} = \varpi X_i \varpi^{-1}$ sends
\[ 
|v \ra \mapsto |(M^\top )^{-1} v \ra \mapsto | (M^\top )^{-1}v + e_i \ra \mapsto |M^\top ((M^\top )^{-1}v+e_i) \ra 
= |v+e_i \ra, 
\]
so $\varpi \pi X_i \pi^{-1} \varpi^{-1} = X_i$.
Since $\varpi$ is a Clifford permutation and $\varpi \pi$ commutes with $X_1, \ldots, X_m$ and $Z_{m+1}, \ldots, Z_n$, by replacing $\pi$ with $\varpi \pi$, we can assume without loss of generality that $\pi$ commutes with $X_1, \ldots, X_m, Z_{m+1}, \ldots, Z_n$. 

Now consider the polynomial representation $(\pi_1, \ldots, \pi_n)$ of $\pi$. 
For $m+1\le j\le n$, $\pi Z_j = Z_j \pi$ implies that $\pi_j(v) = v_j$ for $v \in \bF_2^n$. 
For $1\leq j\leq m$, $\pi X_j = X_j \pi$ implies that
$\pi(v+e_j) = \pi(v) + e_j$, i.e.\ $\pi_i(v+e_j) = \pi_i(v)$ for $i\neq j$, and $\pi_j(v+e_j) = \pi_j(v) + 1$. 
So $\pi_j$ is $v_j$ plus a polynomial $p_j$ in terms of only $v_{m+1}, \ldots, v_n$. 

Note that every monomial in $p_j$ corresponds to a $C^*X$ gate with qubit $j$ as target and a subset of qubits in $\{m+1, \ldots, n\}$ as controls. 
Now $\pi$ is the product of all these $C^*X$ gates and is mismatch-free, since qubits $1, \ldots, m$ are used only as targets and qubits $m+1, \ldots, n$ are never used as targets.
\end{proof}

We now prove~\Cref{thm:semi-clifford-mismatch-free} by reducing to the case of~\Cref{phi-mu}.
\begin{proof}[Proof of~\Cref{thm:semi-clifford-mismatch-free}]
Let $G$ be the subgroup of $\cP_n$ of all elements $P$ with $\pi P \pi^{-1} \in \cP_n$, and let $M$ be the set consisting of all permutations in $G$; 
now $M = G \cap \cX$, so $M$ is an abelian subgroup of $G$. 
By~\Cref{conj-cliff-perm} we can find a Clifford permutation $\nu$ such that $M = \nu \la X_1, \ldots, X_m \ra \nu^{-1}$ for some $m$.
If we replace $\pi$ by $\pi\nu$, we would replace $G$ with $\nu^{-1}G \nu$ and replace $M$ with 
\[
\nu^{-1}G \nu \cap \cX = \nu^{-1}G \nu \cap \nu^{-1} \cX \nu = \nu^{-1} M \nu = \la X_1, \ldots, X_m \ra.
\]
Therefore, let us assume without loss of generality that $M = \la X_1, \ldots, X_m \ra$ for some $m$.

We know $G$ contains a maximal abelian subgroup $A$ of $\cP_n$, since $\pi$ is semi-Clifford. 
Applying~\Cref{new-max-ab-subgp} on $A$ and $M$, we get a maximal abelian subgroup $A'$ of $\cP_n$ with $M \subseteq A' \subseteq \la A, M \ra \subseteq G$. 
We claim that $A' = \la X_1, \ldots, X_m, Z_{m+1}, \ldots, Z_n\ra$ up to phase.
To see this, take a basis $X_1, \ldots, X_m, W_{m+1}, \ldots, W_n$ for $A'$.
Decompose $W_i = c_iX^{u_i}Z^{v_i}$.
We can assume the first $m$ indices of $u_i$ are zeros. 
Since $W_i$ commutes with $X_1, \ldots, X_m$, the first $m$ indices of $v_i$ must be zeros. 
It now suffices to show that $u_i = 0$ for all $m < i \le n$.

Let $p = X^{u_i}$ and $d = c_iZ^{v_i}$.
Since $W_i\in A' \subseteq G$, we have $W_i' = \pi W_i\pi^{-1}\in \cP_n$.
Note that 
\[
W_i' = \pi pd \pi^{-1} = (\pi p \pi^{-1})(\pi d \pi^{-1}),
\]
where $\pi p \pi^{-1}$ is a permutation and $\pi d \pi^{-1}$ is diagonal. 
It follows from~\Cref{pauli-is-pd} that this decomposition is unique and $\pi p \pi^{-1}, \pi d \pi^{-1} \in \cP_n$. 
So $p, d \in G$.  
Since $p \in \cX$, we have $p \in G \cap \cX = M = \la X_1, \ldots, X_m \ra$. 
In other words, $u_i[j] = 0$ for all $j > m$. 
Therefore, we have $u_i=0$ and $A' = \la X_1, \ldots, X_m, Z_{m+1}, \ldots, Z_n\ra$ up to phase.
The theorem then follows from~\Cref{phi-mu}. 
\end{proof}

We conclude with the following characterization of semi-Clifford permutation gates. 

\begin{theorem} \label{sc-level}
For any positive integer $k$, a permutation gate $\pi$ is a semi-Clifford gate in $\cC_{k+1}$ if and only if there exist Clifford permutations $\phi_1, \phi_2$ and a mismatch-free product $\mu$ of $C^*X$ gates such that $\pi = \phi_1 \mu \phi_2$ and, in $\mu$, each gate has at most $k$ controls.
\end{theorem}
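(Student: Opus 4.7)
The plan is to combine the semi-Clifford characterization from \Cref{thm:semi-clifford-mismatch-free} with the polynomial degree bound in \Cref{ck-perm-poly}, exploiting the fact that a mismatch-free circuit is self-inverse.

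For the forward direction ($\Leftarrow$), suppose $\pi = \phi_1 \mu \phi_2$ with $\mu$ a mismatch-free product of $C^*X$ gates each having at most $k$ controls. By \Cref{mismatch-free-is-sc}, $\mu$ is semi-Clifford, and since semi-Cliffordness is preserved under left and right multiplication by Cliffords, so is $\pi$. A straightforward induction (or a short check via \Cref{diagonal-in-ch}: conjugation by $C^jX$ sends $X_{\text{target}}$ to itself, each control $X_i$ to $X_i$ times a smaller multi-controlled $X$, and each $Z_i$ to $Z_i$ possibly multiplied by a $C^{j-1}Z$ gate, which lies in $\cC_j$) shows $C^jX \in \cC_{j+1}$ for all $j \geq 0$. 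Hence every factor of $\mu$ lies in $\cC_{k+1}$, so \Cref{thm:anderson-mismatch-free} gives $\mu \in \cC_{k+1}$, and \Cref{prop:cliff_hier}(2) yields $\pi \in \cC_{k+1}$.

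For the reverse direction ($\Rightarrow$), suppose $\pi$ is semi-Clifford and in $\cC_{k+1}$. \Cref{thm:semi-clifford-mismatch-free} provides Clifford permutations $\phi_1, \phi_2$ and a mismatch-free product $\mu$ of $C^*X$ gates with $\pi = \phi_1 \mu \phi_2$. The key observation is that $\mu^{-1} = \mu$: by \Cref{lem:mismatch-free-equals-commute} the factors of $\mu$ pairwise commute, and each $C^*X$ gate is its own inverse, so the product is too. Since $\mu = \phi_1^{-1}\pi\phi_2^{-1} \in \cC_{k+1}$, applying \Cref{ck-perm-poly} to $\mu^{-1} = \mu$ gives that each coordinate of $\mu$ has degree at most $k$ as a polynomial in the input variables. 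To finish, I would read off the control counts from this degree bound: partition the qubits into ``targets'' (used as target by some factor of $\mu$) and ``non-targets'' (used only as controls or untouched). For a non-target qubit $j$, the $j$-th coordinate of $\mu$ is just $v_j$; for a target $j$, it equals $v_j$ plus one monomial per factor of $\mu$ targeting qubit $j$, each monomial being a product over the controls of that factor, with degree equal to the number of those controls. The degree-$\leq k$ bound thus forces each factor to have at most $k$ controls.

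The main obstacle is noticing that the mismatch-free structure implies $\mu^{-1} = \mu$; this self-inverse property is what lets \Cref{ck-perm-poly}, which a priori only bounds degrees of the inverse's coordinates, directly constrain the degrees appearing in $\mu$ itself, and thereby the number of controls in each factor. Without this observation one would need a separate analysis of how $\mu^{-1}$ relates to the control structure of $\mu$.
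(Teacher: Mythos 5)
Your proof is correct and follows essentially the same route as the paper: both directions use \Cref{mismatch-free-is-sc}, \Cref{thm:anderson-mismatch-free}, and \Cref{thm:semi-clifford-mismatch-free}, and the key step in the reverse direction is precisely the observation that $\mu^{-1}=\mu$, so that \Cref{ck-perm-poly} bounds the degrees of $\mu$'s own coordinates. The only omission relative to the paper is the normalization that no $C^*X$ gate appears twice in $\mu$ (so its monomial is not cancelled), which you should state explicitly before the degree readoff.
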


\begin{proof} 
For the ``if'' direction, $\pi$ being semi-Clifford follows from~\Cref{mismatch-free-is-sc}, and $\pi\in \cC_{k+1}$ follows directly from~\Cref{thm:anderson-mismatch-free} (along with part 2 of~\Cref{prop:cliff_hier}).

For the ``only if'' direction, we apply~\Cref{thm:semi-clifford-mismatch-free} to get a representation $\phi_1 \mu \phi_2$ where $\phi_1$ and $\phi_2$ are Clifford permutations and $\mu$ is a mismatch-free product of $C^*X$ gates. As any two gates in such a product commutes, and every such gate is its own inverse, we can assume without loss of generality that no gate is repeated. 
By part 2 of~\Cref{prop:cliff_hier}, $\mu \in \cC_{k+1}$. 
By~\Cref{ck-perm-poly}, in the polynomial representation of $\mu^{-1}$, every coordinate has degree at most $k$. 
Note that $\mu^{-1} = \mu$. 
If there is a gate in $\mu$ with $m>k$ controls, this would yield a monomial of degree $m$ in $\mu$ which would not be canceled out.
Therefore every gate in $\mu$ has at most $k$ controls, as desired.
\end{proof}

Our result has two immediate corollaries. 

\begin{corollary} \label{sc-perms-in-c3}
A permutation gate $\pi$ is a semi-Clifford gate in $\cC_3$ if and only if there exist Clifford permutations $\phi_1, \phi_2$ and a mismatch-free product $\mu$ of Toffoli gates such that $\pi = \phi_1 \mu \phi_2$.
\end{corollary}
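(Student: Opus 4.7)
The plan is to deduce this corollary by specializing Theorem~\ref{sc-level} to $k=2$ and then cleaning up the resulting decomposition so that the middle factor is a product of Toffolis only, rather than a product of arbitrary $C^{\le 2}X$ gates.

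First, for the forward direction, suppose $\pi$ is a semi-Clifford permutation in $\cC_3$. Applying Theorem~\ref{sc-level} with $k=2$, we obtain Clifford permutations $\phi_1,\phi_2$ and a mismatch-free product $\mu$ of $C^*X$ gates in which every gate has at most two controls. Such a gate is either a Toffoli ($C^2X$), a CNOT ($C^1X$), a single-qubit $X$ ($C^0X$), or the identity. I would partition the gates in $\mu$ into two groups: the Toffolis, whose product I call $T$, and the remaining CNOTs and single-qubit $X$'s, whose product I call $C$. By Lemma~\ref{lem:mismatch-free-equals-commute}, the mismatch-free property implies that all gates in $\mu$ pairwise commute, so I can freely rearrange $\mu$ as $\mu = TC$. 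The product $T$ is still mismatch-free (being a sub-collection of a mismatch-free collection), and $C$ is a product of Clifford permutations (CNOTs and $X$'s) and hence is itself a Clifford permutation. Setting $\phi_2' = C\phi_2$, which is Clifford, we obtain $\pi = \phi_1 T \phi_2'$ as required.

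For the converse, suppose $\pi = \phi_1 \mu \phi_2$ with $\phi_1,\phi_2$ Clifford permutations and $\mu$ a mismatch-free product of Toffoli gates. Every Toffoli is a $C^*X$ gate with exactly two controls, so the hypothesis of the ``if'' direction of Theorem~\ref{sc-level} holds with $k=2$, giving that $\pi$ is a semi-Clifford gate in $\cC_3$.

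I do not foresee a significant obstacle here: the corollary is essentially a renaming of the $k=2$ case of Theorem~\ref{sc-level}, and the only content is verifying that the low-control factors (CNOTs and $X$'s) can be absorbed into the flanking Clifford permutations. The single mild step worth checking is the use of Lemma~\ref{lem:mismatch-free-equals-commute} to justify the rearrangement $\mu = TC$, together with the observation that the Toffoli-only sub-product inherits the mismatch-free property from $\mu$.
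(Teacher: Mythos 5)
Your proof is correct and takes essentially the same approach as the paper, which simply declares this an immediate corollary of Theorem~\ref{sc-level}. You have spelled out the one implicit step — using the pairwise-commutativity of a mismatch-free collection to commute the $C^0X$ and $C^1X$ factors past the Toffolis and absorb them, as Clifford permutations, into $\phi_2$ — and that step is handled correctly, including the observation that the Toffoli sub-product inherits the mismatch-free property.
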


\begin{corollary}
Every semi-Clifford permutation gate is in $\cC_n$. 
\end{corollary}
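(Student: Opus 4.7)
The plan is to deduce this as a direct consequence of \Cref{sc-level} (or equivalently \Cref{thm:semi-clifford-mismatch-free} together with \Cref{thm:anderson-mismatch-free}). The key observation is that on $n$ qubits, any multi-controlled NOT gate $C^mX$ has at most $n-1$ controls, since one qubit must serve as the target.

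First I would invoke \Cref{thm:semi-clifford-mismatch-free} to write a given semi-Clifford permutation gate $\pi$ on $n$ qubits as $\pi = \phi_1 \mu \phi_2$, where $\phi_1, \phi_2$ are Clifford permutations and $\mu$ is a mismatch-free product of $C^*X$ gates. Each such gate in $\mu$ acts on some subset of the $n$ qubits with exactly one target, so it has at most $n-1$ controls.

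Then I would apply \Cref{sc-level} with $k = n-1$: the ``if'' direction gives $\pi \in \cC_{k+1} = \cC_n$. Alternatively, one can appeal directly to \Cref{thm:anderson-mismatch-free}, which places a mismatch-free permutation circuit at the level of the highest-level gate in the circuit; since the highest-level gate is at most $C^{n-1}X \in \cC_n$, this gives $\mu \in \cC_n$, and then multiplication by the Clifford permutations $\phi_1, \phi_2$ preserves membership in $\cC_n$ by part 2 of \Cref{prop:cliff_hier}.

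There is no real obstacle here — this is a short deduction from the characterization established in \Cref{sc-level}. The only thing to be careful about is the bookkeeping between ``number of controls'' and ``Clifford hierarchy level,'' namely that a $C^kX$ gate sits in $\cC_{k+1}$, and that on $n$ qubits the maximum number of controls is $n-1$.
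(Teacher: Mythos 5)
Your proposal is correct and takes essentially the same route as the paper: invoke \Cref{thm:semi-clifford-mismatch-free} to obtain the decomposition $\pi = \phi_1\mu\phi_2$ with $\mu$ a mismatch-free $C^*X$ product, observe that on $n$ qubits such a gate has at most $n-1$ controls, and conclude $\pi\in\cC_n$ via \Cref{sc-level} (with $k=n-1$). The bookkeeping you flag --- a $C^kX$ gate sits in $\cC_{k+1}$ and the maximum control count is $n-1$ --- is exactly the content of the paper's one-line proof.
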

\begin{proof}
The claim follows from~\Cref{thm:semi-clifford-mismatch-free,sc-level} and the fact that a $C^*X$ gate on $n$ qubits has at most $n-1$ controls. 
\end{proof}

\end{document}